\documentclass[a4paper,UKenglish,cleveref, autoref, thm-restate]{lipics-v2021}

\usepackage{algpseudocode}
\usepackage{xr}
\usepackage[linesnumbered,lined,boxed,commentsnumbered]{algorithm2e}
\usepackage{indentfirst}
\usepackage{algorithm2e}
\RestyleAlgo{ruled}
\usepackage{mathtools}

\title{$m$-Eternal Dominating Set Problem on Subclasses of Chordal Graphs}

\author{Ashutosh Rai}{Department of Mathematics, IIT Delhi}{ashutosh.rai@maths.iitd.ac.in}{}{}
\author{Soumyashree Rana}{Department of Mathematics, IIT Delhi}{maz218122@maths.iitd.ac.in}{}{}

\authorrunning{A. Rai and S. Rana} 

\Copyright{Ashutosh Rai and Soumyashree Rana} 

\ccsdesc[100]{Theory of computation, Design amd Analysis of Algorithms} 

\keywords{Domination, Eternal Domination, \textsf{NP}-hardness, $K_{1,t}$-free split graph, Undirected path graph} 

\category{} 

\nolinenumbers 

\usepackage{color}

\begin{document}

\maketitle              
\begin{abstract}
A \textit{dominating set} of a graph $ G (V, E)$ is a set of vertices $D\subseteq V$ such that every vertex in $ V\setminus D$ has a neighbor in $D$. An eternal dominating set extends this concept by placing mobile guards on the vertices of $D$. In response to an infinite sequence of attacks on unoccupied vertices, a guard can move to the attacked vertex from an adjacent position, ensuring that the new guards' configuration remains a dominating set. In the one guard move model, only one guard moves per attack, while in the all guards move model, multiple guards may move. The set of vertices representing the initial configuration of guards in one guard move model is the \emph{eternal dominating set} and in all guards move models is the \emph{$m$-eternal dominating set} of $G$. The minimum size of such a set in one guard move model and all guards move model is called the \emph{eternal domination number} ($\gamma^\infty(G)$) and the \emph{$m$-eternal domination number} ($\gamma_m^\infty(G)$) of $G$, respectively. Given a graph $G$ and an integer $k$, the  \textsc{$m$-Eternal Dominating Set} asks whether $G$ has an $m$-eternal dominating set of size at most $k$.

In this work, our study focuses mainly on the computational complexity of \textsc{$m$-Eternal Dominating Set} in subclasses of chordal graphs, namely split graphs and undirected path graphs. For the class of split graphs, we show a dichotomy result by first designing a polynomial-time algorithm for $K_{1,t}$-free split graphs with $t\le 4$, and then proving that the problem becomes \textsf{NP}-complete for $t\ge 5$. Further, we establish that the problem is \textsf{NP}-hard on undirected path graphs. Moreover, we exhibit the computational complexity difference between the variants by showing the existence of two graph classes such that, in one, both \textsc{Dominating Set} and \textsc{$m$-Eternal Dominating Set} are solvable in polynomial time while \textsc{Eternal Dominating Set} is \textsf{NP}-hard, whereas in the other, \textsc{Eternal Dominating Set} is solvable in polynomial time and both  \textsc{Dominating Set} and \textsc{$m$-Eternal Dominating Set} are \textsf{NP}-hard. Finally, we present a graph class where \textsc{Dominating Set} is \textsf{NP}-hard, but \textsc{$m$-Eternal Dominating Set} is efficiently solvable. 

\keywords{Domination \and Eternal Domination \and $m$-Eternal Domination \and \textsf{NP}-hardness\and $K_{1,t}$-free split graph\and Undirected path graph.}
\end{abstract}    

\section{Introduction}
Let $G=(V, E)$ be a finite, simple, and undirected graph. A set of vertices $D \subseteq V$ is said to be a \emph{dominating set} of $G$ if each vertex in $V\setminus D$ has an adjacent vertex in $D.$ The minimum cardinality among all dominating sets of $G$ is the \emph{domination number} of $G,$ denoted by $\gamma(G).$ The decision version of the corresponding problem is defined as follows.

  \vspace{3mm}
	\noindent\fbox{
		\begin{minipage}{.95\textwidth}
			 \underline{\textsc{Minimum Domination} (\textsc{Dominating Set})}\\ 
			\textbf{Input:} A graph $G$ and an integer $k\ge 0$.\\ \textbf{Question:} Is $\gamma(G)\le k?$
          \end{minipage}}
  \vspace{2mm}

The domination problem is a fundamental topic in graph theory and has a wide range of applications in network security, facility location, and social network analysis \cite{gupta2013domination,qiang2021novel}. Despite being \textsf{NP}-hard in general \cite{garey2002computers}, and even on restricted graph classes, it has been widely studied for its theoretical depth and practical relevance \cite{cockayne2006domination,haynes2013fundamentals}.

Over the years, numerous variants of domination have been proposed to model more dynamic or constrained settings. The dynamic variants of domination have gained attention due to their applicability in scenarios where the threat or demand is not static, but evolves. More specifically, to address surveillance and protection problems such as military defense strategies \cite{klostermeyer2016protecting} and epidemiology \cite{mahadevan2020application}, where it helps in optimally deploying mobile resources to control disease spread over networks. One such family of problems includes the \emph{eternal domination} problem, which models a setting where a team of guards must defend a graph against an infinite sequence of attacks, one attack at a time.

There are many variants of the \emph{eternal domination} problem. Each version depends on the two parameters: (i) $x$, the number of guards that can move to their neighboring vertex during a defense, and (ii) $y$, the number of guards that can occupy a vertex initially or after a defense, where $x,y\in \mathbb{N}$. The optimal number of guards required in such a variant is denoted by $\gamma_{x,y}^\infty$. When $y=1$, then $\gamma_{x,y}^\infty$ is written as $\gamma_x^\infty$, and if $x$ refers to all the guards being able to move, then it is written as $\gamma_m^\infty$. The subscript `$m$' stands for multiple (all) guards can move in response to an attack, and the superscript `$\infty$' represents an infinite sequence of attacks. In this paper, we focus on the ``\emph{all guard move}'' model, where all guards are allowed to move during a defense, but still, no more than one guard can occupy a vertex. This variant is known as the \emph{$m$-eternal dominating set} problem. 
  
Before defining this model formally, we first introduce the notion of a \emph{guards move}. Suppose we have two dominating sets of a graph $G$, say $D=\{x_1,x_2,\ldots,x_k\}$ and $D'=\{x_1',x_2',\ldots,x_k'\}$, such that for each $i\in [k]$, $x_i\in N[x_i']$. Then $D'$ is said to be obtained from $D$ by a \emph{guards move}. When $x_i\in N(x_i'),$ then it means the guard at $x_i$ has moved to $x_i'$. Now, we define the \emph{$m$-eternal dominating set}. For a graph $G=(V, E)$, a set $D\subseteq V$ is said to be an \emph{m-eternal dominating set}, if for any infinite sequence of attacks $R=(r_1,r_2,\ldots)$, $r_i\in V$, $i=1,2,\ldots$, there exists a sequence of dominating sets $(D_0, D_1, D_2,\ldots)$ such that $D=D_0$, $D_i$ is obtained from $D_{i-1}$ by a \emph{guards move}, and $r_i\in D_i$, for each $i\ge 1$. Here, $D_i$ represents the configuration of guards after defending against the $i^{th}$ attack. The minimum number of guards required to form an m-eternal dominating set is called the \emph{m-eternal domination number} of $G$, denoted by $\gamma_m^\infty(G)$. The decision version of the corresponding problems for both models is defined as follows.
  \vspace{3mm}
	\noindent\fbox{
		\begin{minipage}{.95\textwidth}
			 \underline{\textsc{Minimum} $m$-\textsc{Eternal Domination} ($m$-\textsc{Eternal Dominating Set})}\\ 
			\textbf{Input:} A graph $G$ and  an integer $k\ge 0$.\\ 
            \textbf{Question:} Is $\gamma_m^\infty (G)\le k?$
          \end{minipage}}\\
	\noindent\fbox{
		\begin{minipage}{.95\textwidth}
			 \underline{\textsc{Minimum Eternal Domination} (\textsc{Eternal Dominating Set})}\\ 
			\textbf{Input:} A graph $G$ and  an integer $k\ge 0$.\\ 
            \textbf{Question:} Is $\gamma^\infty (G)\le k?$
          \end{minipage}}

\textsc{Eternal Dominating Set} is closely related to other graph dynamic protection problems, particularly the \emph{eternal vertex cover} problem. In this variant, guards are placed on the vertices to protect edges instead of the vertices. When an edge is attacked, a guard can move along that edge to defend it, and after the move, the guard's new configuration must form a valid vertex cover. Similar to \emph{eternal domination}, the model can vary depending on whether one guard or multiple guards are allowed to move in response to an attack.
A recent studied variant is the \emph{connected eternal vertex cover}, where the set of occupied vertices must induce a connected subgraph. This is particularly useful in scenarios such as surveillance or communication networks, as it requires continuous communication among the guards.
Despite the \textsf{NP}-hardness of this problem, the eternal vertex cover problem and its variants have received increasing attention in algorithmic graph theory (see papers \cite{paul2023some}, \cite{babu2022eternal}, \cite{fujito2020eternal}, \cite{paul2024eternal}).
\vspace{-15pt}
  \subsection{Related work}
\vspace{-5pt}
  The \textsc{Eternal Dominating Set} was initially introduced as infinite order domination by Burger et al. \cite{burger2004infinite} in 2004. It models a dynamic defense strategy on graphs, where a team of guards must respond to an unending sequence of vertex attacks while always maintaining a dominating set. In their work, they computed the eternal domination number for several fundamental graph classes such as paths, cycles, and complete multipartite graphs. Later that same year, Goddard et al. \cite{goddard2005eternal} formalized this as the eternal security problem and introduced two key models: the one guard move model and the all guards move model, which correspond to the 1-secure and $m$-secure variants, respectively. They showed that for perfect graphs, \textsc{Eternal Dominating Set} is efficiently solvable, leveraging the fact that $\gamma^\infty$ lies between the independence number and the clique covering number, and both are equal and efficiently computable in perfect graphs. Additionally, they established bounds for $\gamma^\infty$ and $\gamma_m^\infty$ in terms of other well-known graph parameters, such as the 2-domination number, clique covering number, and independence number. These bounds, in turn, motivate a closer look at the complexity distinctions between the related problems that are classical \textsc{Dominating Set}, the eternal version \textsc{Eternal Dominating Set}, and its all guard variant \textsc{$m$-Eternal Dominating Set}.

Subsequent work by Anderson et al. \cite{anderson2007maximum} introduced a game-theoretic perspective, modeling the eternal domination problem as a two-player game. Meanwhile, Klostermeyer and MacGillivray \cite{klostermeyer2009eternal} initiated a study of its computational complexity in trees. Among graph classes, chordal graphs are central to the algorithmic study of graph classes. They find applications in computational biology \cite{semple2003phylogenetics}, optimization \cite{vandenberghe2015chordal}, and sparse matrix computations \cite{george2012graph}. A key subclass of chordal graphs is the class of split graphs, where the vertex set can be partitioned into a clique and an independent set.

In \cite{macgillivrayeternal}, it was shown that for split graphs, $\gamma_m^\infty$ is either equal to $\gamma$ or $\gamma+1$. Moreover, they provided characterizations of graphs that attain eah of these values, and further proved that \textsc{$m$-Eternal Dominating Set} is \textsf{NP}-complete for Hamiltonian split graphs. Since this class is a subclass of split graphs, the result implies that the problem is \textsf{NP}-hard for both split graphs and, by extension, for chordal graphs. Interestingly, there exist two incomparable subclasses of split graphs: Hamiltonian split graphs and $K_{1,t}$-free split graphs. Because neither is a subset of the other, it is natural to investigate the complexity of $K_{1,t}$-free split graphs, and also to obtain a dichotomy based on the value of $t$. 

From the positive side, $\gamma_m^\infty$ is in \textsf{P} for several graph classes. These includs trees \cite{klostermeyer2009eternal}, proper interval graphs \cite{braga2015eternal}, interval graphs \cite{rinemberg2019eternal}, co-graphs, threshold graphs \cite{macgillivrayeternal} and cactus graphs \cite{blavzej2023computing}. These results show that while the problem may be hard on general graphs, there exist several natural graph classes where it is efficiently solvable. Since interval graphs form a subclass of chordal graphs, a natural step in narrowing the complexity gap between interval graphs (where the problem is in \textsf{P}) and chordal graphs (where it is \textsf{NP}-hard) is to study the complexity on undirected path graphs, which is an intermediate class between these two classes.

\begin{figure}
      \centering
      \includegraphics[height=4 cm, width=10 cm]{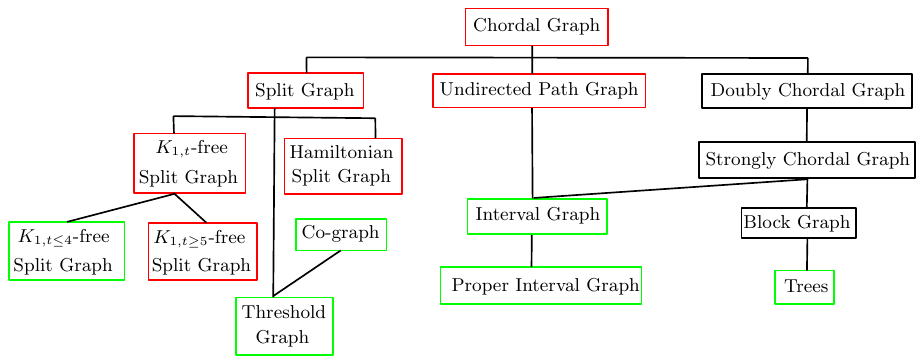}
      \caption{In the hierarchy diagram of chordal graph classes, an edge from one class to another indicates that the upper class is a superclass of the lower one. Each class is annotated with a colored box: red denotes that $m$-\textsc{Eternal Dominating Set} is \textsf{NP}-hard, green denotes polynomial-time solvability, and black indicates that the complexity remains unknown.}
      \label{fig:chordalgraphclasses}
  \end{figure}
  
For a comprehensive survey of results on eternal domination and related problems, we refer the reader to \cite{klostermeyer2016protecting}.
\vspace{-10pt}
 \subsection{Our work} 
 \vspace{-5pt}
  Motivated by these observations and complexity gaps, our work studies the computational complexity of the $m$-\textsc{Eternal Dominating Set} in subclasses of chordal graphs.
    The main contributions of this paper are summarized below.
\begin{itemize}
    \item 
    \textit We present a dichotomy result for split graphs. In \cite{macgillivrayeternal}, it is shown that computing $m$-\textsc{Eternal Dominating Set} is polynomial time solvable for any subclasses for which \textsc{Dominating Set} is already solvable in polynomial time. Furthermore, \cite{mohanapriya2023domination} proved that \textsc{Dominating Set} remains \textsf{NP}-complete for $K_{1,{t\ge 5}}$-free split graphs, while it becomes polynomial time solvable for $K_{1,{t\le 4}}$-free split graphs. In particular, for $K_{1,4}$-free split graphs, $m$-\textsc{Eternal Dominating Set} can be solved in $|V|^{\frac{5}{2}}$ time. Our algorithm improves upon this by achieving a faster $|V|^{\frac{3}{2}}$ running time on the same graph class. Motivated by this, we design a $|V|^{\frac{3}{2}}$ time algorithm for the $m$-\textsc{Eternal Dominating Set} on $K_{1,{t\le 4}}$-free split graphs, and we prove that the problem becomes \textsf{NP}-complete for $t \geq 5$. It is known that $m$-\textsc{Eternal Dominating Set} is in \textsf{NP} \cite{macgillivrayeternal}. The hardness result is established via a reduction from the \textsc{Exact 3-Cover} problem, which is known to be \textsf{NP}-complete \cite{karp2009reducibility}, while the polynomial time algorithm is based on analyzing the parameter $\Delta^I(G)$, which denotes the maximum number of neighbors in the independent set $I$ of a split graph $G=(C\cup I, E)$ for any vertex in the clique $C$.
    
\item We prove that $m$-\textsc{Eternal Dominating Set} is \textsf{NP}-hard for undirected path graphs, thereby closing the complexity gap between chordal graphs and interval graphs. The hardness is through a reduction from $3D$-matching problem by extending the ideas in \cite{booth1982dominating}, where \textsc{Dominating Set} was shown to be \textsf{NP}-hard on undirected path graphs.
    \item Further, we identify a graph class in which \textsc{Dominating Set} is \textsf{NP}-hard but the $m$-\textsc{Eternal Dominating Set} can be solved polynomial time. This graph class is constructed by taking any arbitrary graph and attaching $n$ disjoint copies of the path $P_5$ (a path on five vertices) to $n$ different vertices of the graph, where each $P_5$ is connected by joining its middle vertex to the corresponding vertex in the original graph. 
    \item Furthermore, we show the existence of two graph classes such that, in one, both \textsc{Dominating Set} and $m$-\textsc{Eternal Dominating Set} is solvable in polynomial time while \textsc{Eternal Dominating Set} is \textsf{NP}-hard, whereas in other, \textsc{Eternal Dominating Set} is solvable in polynomial time and both \textsc{Dominating Set} and $m$-\textsc{Eternal Dominating Set} are \textsf{NP}-hard. This highlights the complexity difference between the three related problems.  
\end{itemize}
\vspace{-10pt}
\section{Preliminaries}\label{sec:preliminaries}
\vspace{-5pt}
This section provides pertinent definitions and presents preliminary results that will be utilized in this paper.

Let $G=(V, E)$ be a finite, simple, undirected graph. The \textit{open neighborhood} of a vertex $v$ in $G$ is $N(v)=\{u \in V \mid uv \in E\}$ and the closed neighborhood is $N[v]=\{v\} \cup N(v).$ The \textit{degree} of a vertex $v$ is $|N(v)|$ and is denoted by $d_G(v).$ In case the graph is well understood, we denote it simply $d(v)$. If $d(v)=1$ then $v$ is called a \emph{pendant vertex} in $G$. 
The \textit{minimum degree} and \textit{maximum degree} of $G$ are denoted by $\delta(G)$ and $\Delta(G) $, respectively. For $D\subseteq V,~G[D]$ denotes the subgraph induced by $D.$ We use the notation $[k]$ for $\{1, 2, \cdots, k\}.$ We refer to \cite{west2001introduction} for remaining notations and terminology.

A graph is said to be a \emph{chordal graph} if every cycle of length at least 4 has a chord, i.e., an edge joining two non-consecutive vertices in the cycle. A \emph{split graph} $G=(V, E)$ is a subclass of the chordal graph in which the vertex set can be partitioned into two sets $V=C\cup I$ such that $C$ is a clique and $I$ is an independent set. A clique $C$ in a graph $G$ is said to be a \textit{maximal clique} if no other clique in $G$ properly contains $C$. We consider split graphs with a maximal clique (i.e., a split partition $V=C\cup I$ in which $C$ is maximal) for the rest of the paper. Note that $K_{1,t}$ is a split graph with $t+1$ vertices where $|C|=1$ and $|I|=t.$ A split graph $G$ is \emph{$K_{1,t}$-free} if $G$ does not contain $K_{1,t}$ as an induced subgraph. For a vertex $v\in C,$ $N_G^I(v)= N_G(v)\cap I$ and $d_G^I(v)=|N_G^I(v)|.$ For $S\subseteq C,$ $N_G^I(S)= \bigcup_{v\in S} N_G^I(v)$ and $d_G^I(S)=|N_G^I(S)|.$ For a split graph $G,$ $\Delta_G^I=\max\{d_G^I(v)\mid v\in C\}.$ A graph $G$ is said to be \emph{$l$-split graph} if $\Delta_G^I=l$, for some integer $l\ge 0$. A \emph{claw} is a complete bipartite graph $K_{1,3}$, consisting of a central vertex connected to three other vertices that are not connected. A split graph $G$ is a \emph{Hamiltonian Split Graph} if it contains a Hamiltonian cycle. For a graph $G = (V, E)$, a \textit{matching} $M \subseteq E$ in $G$ is a set of pairwise nonadjacent edges. A \emph{perfect matching} $M$ is a matching where each vertex $v\in V$ is incident to exactly one edge in $M$. The size of the maximum matching of $G$ is denoted as $\mu(G)$.
A graph $G=(V, E)$ is called an \emph{undirected path graph} if $G$ is the intersection graph of a family of paths of a tree. 

The following known results will be used throughout the paper. 

 \begin{theorem}\cite{gavril1974intersection}
     A graph $G=(V, E)$ is an undirected path graph if and only if there exists a tree $T$ (called as clique tree of $G$) with $V(T)=\mathcal{C}(G)$, where $\mathcal{C}(G)$ is the set of all maximal cliques of $G$, such that $T[\mathcal{C}_v(G)]$ is a path in $T$ for each $v\in V(G)$, where $\mathcal{C}_v(G)$ is the set of all maximal cliques of $G$ containing the vertex $v$.
 \end{theorem}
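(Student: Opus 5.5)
The two directions need different tools. The ``if'' direction is a direct check. The ``only if'' direction goes through the Helly property of subtrees, followed by a contraction procedure on the host tree.

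\emph{($\Leftarrow$)} Suppose $T$ is a tree on $\mathcal{C}(G)$ such that each $T[\mathcal{C}_v(G)]$ is a path $P_v$. We claim $G$ is the intersection graph of $\{P_v\}_{v\in V}$.
If $uv\in E$, then $\{u,v\}$ lies in some maximal clique $K$, so $K\in \mathcal{C}_u(G)\cap\mathcal{C}_v(G)$ and $P_u\cap P_v\neq\emptyset$.
Conversely, if $K$ is a common node of $P_u$ and $P_v$, then $u,v\in K$, so $u$ and $v$ are adjacent (or equal).
Hence $G$ is an undirected path graph.

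\emph{($\Rightarrow$)} Let $G$ be the intersection graph of paths $P_v$ ($v\in V$) in a tree $H$. For a node $h\in V(H)$ put $S_h=\{v : h\in P_v\}$; each $S_h$ is a clique.
\begin{enumerate}
\item \emph{Helly step.} Pairwise intersecting subtrees of a tree share a common node. Hence every maximal clique $K$ satisfies $K\subseteq S_h$ for some $h$. By maximality of $K$ we get $K=S_h$.
\item \emph{Contraction step.} While some edge $hh'$ of $H$ has $S_h\subseteq S_{h'}$, contract it into $h'$ and keep the label $S_{h'}$.
\begin{itemize}
\item Any path through $h$ also passes through $h'$: a path containing $h$ but not $h'$ would give some $v\in S_h\setminus S_{h'}$, which is impossible.
\item Contracting a tree edge maps paths to paths.
\item The set of paths through each surviving node is unchanged, so intersections, and hence $G$, are preserved.
\end{itemize}
The procedure terminates because $|V(H)|$ decreases at every contraction.
\item \emph{Structure of the final tree.} Let $h\neq h'$ be nodes of the final tree, and suppose $S_h\subseteq S_{h'}$. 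Let $g$ be the neighbour of $h$ on the $h$--$h'$ path. Every $v\in S_h$ has $P_v$ containing both $h$ and $h'$, hence containing $g$. Therefore $S_h\subseteq S_g$, contradicting termination.
\begin{itemize}
\item So the labels $S_h$ are pairwise incomparable, and in particular distinct.
\item Each label is a clique contained in a maximal clique, which by Step~1 equals some label; incomparability then forces the label itself to be maximal.
\item Together with Step~1, $h\mapsto S_h$ is a bijection from the final tree onto $\mathcal{C}(G)$.
\end{itemize}
\end{enumerate}
Transport the tree structure along this bijection to obtain $T$. Then $\mathcal{C}_v(G)$ corresponds exactly to the node set of the (contracted) path $P_v$, so $T[\mathcal{C}_v(G)]$ is a path.

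The main obstacle is making the contraction step rigorous: it must preserve both the path property and the intersection pattern. The key observation is that $S_h\subseteq S_{h'}$ forces every path through $h$ to also use the edge $hh'$. The same reasoning, applied along the $h$--$h'$ path, gives the incomparability argument in Step~3.
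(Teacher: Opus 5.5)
Your proof is correct. The paper gives no proof of this statement; it simply quotes it from Gavril. Your argument is the standard one from that literature:
\begin{itemize}
\item the Helly property of subtrees shows that every maximal clique is some node label $S_h$;
\item you then contract host-tree edges $hh'$ with $S_h\subseteq S_{h'}$ until the labels are pairwise incomparable, so the labels are exactly the maximal cliques.
\end{itemize}
The only path-specific ingredient is your observation that such a contraction keeps every $P_v$ a path and preserves all intersections. That observation is right: $S_h\subseteq S_{h'}$ forces every $P_v$ through $h$ to use the edge $hh'$. Your incomparability argument along the $h$--$h'$ path is also right.

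Two points you use without comment are fine, but each deserves a phrase. First, in Step~3 you apply Step~1 to the \emph{final} tree. This is legitimate because the contracted family still represents $G$. Alternatively, note that if a label $S_h$ equal to a maximal clique is absorbed into $S_{h'}$, then $S_{h'}=S_h$, so no maximal clique is lost. Second, you quote the Helly property for subtrees of a tree rather than prove it, which is acceptable for a standard lemma.
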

 
 \begin{theorem}\cite{renjith2020steiner}\label{union}
    Let $G=(C\cup I, E)$ be a connected split graph. Then $G$ is a claw-free graph if and only if either $\Delta_G^I \leq 1$ or $\Delta_G^I = 2$ then for every $u,v\in C$ such that $d_G^I(v)=2, N_G^I(u)\cap N_G^I(v)\neq \emptyset.$
\end{theorem}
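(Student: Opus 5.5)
The plan is to verify both directions directly from the definition of a claw, using only that $C$ is a clique and $I$ is independent. First I will pin down where a claw can sit in a split graph; after that, each direction is a short case check.

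\emph{Locating an induced claw.} Suppose $x$ is the centre of an induced claw with pairwise non-adjacent leaves $a,b,c$.
If $x\in I$, then $N(x)\subseteq C$. Since $C$ is a clique, $x$ has no two non-adjacent neighbours, which is impossible. Hence $x\in C$.
Any two vertices of $C$ are adjacent, so at most one leaf lies in $C$. Therefore at least two leaves lie in $N_G^I(x)$. It follows that $d_G^I(x)\ge 2$.

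\emph{Sufficiency.} Assume the stated condition holds and, for contradiction, that an induced claw as above exists.
If $\Delta_G^I\le 1$, this contradicts $d_G^I(x)\ge 2$.
Otherwise $\Delta_G^I=2$, so $x$ has exactly two $I$-neighbours, say $a,b$. Then $N_G^I(x)=\{a,b\}$, and the third leaf $c$ cannot be in $I$, so $c\in C$. Because $c$ is adjacent to neither $a$ nor $b$, we get $N_G^I(c)\cap N_G^I(x)=\emptyset$.
Taking $u=c$ and $v=x$ (with $d_G^I(v)=2$) violates the hypothesis. So $G$ is claw-free.

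\emph{Necessity.} Assume $G$ is claw-free.
If some $v\in C$ had three neighbours in $I$, then $v$ together with these three pairwise non-adjacent vertices would induce $K_{1,3}$. Hence $\Delta_G^I\le 2$, and if $\Delta_G^I\le 1$ we are done.
Now let $\Delta_G^I=2$, and take $v\in C$ with $N_G^I(v)=\{a,b\}$ and any $u\in C$. If $u=v$ the intersection is trivially nonempty, so suppose $u\neq v$ and $N_G^I(u)\cap\{a,b\}=\emptyset$.
Then $u$ is adjacent to $v$ (as $C$ is a clique) but to neither $a$ nor $b$, and $a,b$ are non-adjacent. So $\{v,a,b,u\}$ induces a claw centred at $v$, a contradiction. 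This gives the condition.

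There is no real obstacle here. The only point needing care is the initial observation that the claw's centre must lie in $C$ and that at most one leaf can lie in $C$; both directions follow from this. Neither connectedness nor maximality of $C$ is actually needed.
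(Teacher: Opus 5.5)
Your proof is correct. You read the statement as ``$\Delta_G^I\le 1$, or $\Delta_G^I=2$ and the intersection condition holds'', which is the right reading, and you include the trivial case $u=v$. The key observation settles both directions: a claw's centre must lie in $C$, and at least two of its leaves lie in $N_G^I(x)$, because $I$-vertices have clique neighbourhoods and $C$ can contain at most one leaf.

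The paper gives no proof of this statement; it is quoted from \cite{renjith2020steiner}, so there is no in-paper argument to compare against. Your self-contained case check is the natural argument. Your remark that neither connectedness nor maximality of $C$ is used is also correct.
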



\begin{lemma}\cite{renjith2020steiner}\label{delta}
    If a split graph is claw-free with $\Delta_G^I = 2,$ then $|I|\leq 3.$
\end{lemma}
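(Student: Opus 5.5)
The plan is a direct argument from claw-freeness. I assume throughout, as the paper does, that $G=(C\cup I,E)$ is connected and that $C$ is a \emph{maximal} clique. Maximality is essential: the graph with $C=\{v,u,w\}$, $I=\{a,b,x,y\}$, $N^I(v)=\{a,b\}$, $N^I(u)=\{a,x\}$, $N^I(w)=\{a,y\}$ is claw-free with $\Delta_G^I=2$ and $|I|=4$. There $C\cup\{a\}$ is a clique, so $C$ is not maximal. The proof must therefore use maximality at some point.

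\textbf{Setup.} Fix $v\in C$ with $N^I_G(v)=\{a,b\}$ and suppose, for contradiction, that $|I|\ge 4$. Pick distinct $x,y\in I\setminus\{a,b\}$. By connectivity, each vertex of $I$ has a neighbour in $C$; choose $u\in N(x)\cap C$ and $w\in N(y)\cap C$. Since $v$ has no $I$-neighbours other than $a,b$, we have $u,w\neq v$.

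\textbf{Step 1.} If $u$ were adjacent to neither $a$ nor $b$, then $v$ would be the centre of a claw with leaves $a,b,u$. Hence $u$ is adjacent to $a$ or $b$, so $d^I(u)=2$. Say $N^I(u)=\{x,a\}$, and likewise $N^I(w)=\{y,a'\}$ with $a'\in\{a,b\}$. This agrees with Theorem~\ref{union}. In particular $u\neq w$, because $u$ would otherwise have the three $I$-neighbours $x$, $y$ and one of $a,b$.

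\textbf{Step 2.} If $a'=b$, then $u$ is the centre of a claw with leaves $x,a,w$: the vertex $w$ is adjacent to neither $x$ nor $a$, and $x,a$ are nonadjacent. So $a'=a$, and the situation is exactly the counterexample pattern above: $a$ is adjacent to $v,u,w$.

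\textbf{Step 3 (main obstacle).} By maximality of $C$, some $c\in C$ is not adjacent to $a$, and $c\notin\{u,v,w\}$. If $c$ is also nonadjacent to $x$, then $u$ is the centre of a claw with leaves $x,a,c$. Otherwise $c$ is adjacent to $x$, and then $c$ is nonadjacent to $y$. Indeed, if $c$ were adjacent to both $x$ and $y$, then by Step~1 applied to $c$ it would also be adjacent to $a$ or $b$, giving $d^I(c)\ge 3$, which is impossible. Since $c$ is nonadjacent to $a$ and $y$, the vertex $w$ is the centre of a claw with leaves $y,a,c$. Either way we reach a contradiction, so $|I|\le 3$.

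The case analysis in this step is where I expect the work to lie: making sure that every vertex of $C$ missing $a$ is excluded, and checking that no case has been overlooked. In particular, all vertices of $I\setminus\{a,b\}$ get forced through the common vertex $a$, and the witness of maximality breaks this symmetry.
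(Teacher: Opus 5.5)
Your proof is correct. Every claw you exhibit is genuine. Connectivity is used only to give $x$ and $y$ neighbours in $C$, and maximality of $C$ is used exactly once, to produce the vertex $c$ nonadjacent to $a$. Your example rightly shows that the statement fails without the maximal-clique convention. Connectivity is likewise necessary rather than a convenience: adding isolated vertices to $I$ preserves claw-freeness, $\Delta^I_G=2$ and maximality of $C$.

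The paper gives no proof of its own here; it imports the lemma from \cite{renjith2020steiner}, so there is no in-paper argument to match. The shortest route available inside the paper goes through Theorem~\ref{union}. In a connected claw-free split graph with $\Delta^I_G=2$, every set $N^I_G(c)$ with $c\in C$ meets every two-element set $N^I_G(v)$. With $N^I_G(v)=\{a,b\}$, this gives $N^I_G(u)=\{x,a\}$ up to relabelling $a,b$. It then gives $N^I_G(w)=\{y,a\}$, because that set must meet both $\{a,b\}$ and $\{a,x\}$. Finally, a vertex $c\in C$ nonadjacent to $a$ would need $\{b,x,y\}\subseteq N^I_G(c)$, contradicting $\Delta^I_G=2$.

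Your Steps~1 and~2 are exactly the instances of that criterion you need, derived directly from claws. Your argument is therefore self-contained, at the price of the case split in Step~3, which the criterion reduces to a single degree count.
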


\begin{observation}\cite{panda2021injective}\label{K1,3}
    Let $G=(C\cup I,E)$ be a $K_{1,3}$-free split graph. If for any pair of vertices $y_i, y_j \in I, N(y_i)\cap N(y_j)\neq \emptyset$,
then $N(y_i) \cup N(y_j) = C$.
\end{observation}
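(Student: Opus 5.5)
We prove the Observation directly from the definitions. Fix $y_i,y_j\in I$ with a common neighbour $c\in N(y_i)\cap N(y_j)$; note $c\in C$ because $I$ is independent. We show that every vertex of $C$ is adjacent to $y_i$ or to $y_j$, giving $N(y_i)\cup N(y_j)=C$. The inclusion $N(y_i)\cup N(y_j)\subseteq C$ is immediate, again because $I$ is independent.

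Suppose, for contradiction, that some $w\in C$ is adjacent to neither $y_i$ nor $y_j$. Then $w\neq c$, and since $C$ is a clique, $cw\in E$. Consider the vertex set $\{c,y_i,y_j,w\}$:
\begin{itemize}
\item $c$ is adjacent to $y_i$, $y_j$ and $w$;
\item $y_iy_j\notin E$, since $I$ is independent;
\item $y_iw\notin E$ and $y_jw\notin E$, by the choice of $w$.
\end{itemize}
Hence $G[\{c,y_i,y_j,w\}]$ is an induced $K_{1,3}$ centred at $c$. This contradicts the assumption that $G$ is $K_{1,3}$-free, so no such $w$ exists and $C\subseteq N(y_i)\cup N(y_j)$.

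The only point needing care is to verify that all three leaves are pairwise non-adjacent, so that the claw is \emph{induced}. That is exactly what the independence of $I$ and the choice of $w$ provide. The argument does not need the assumption that $C$ is maximal, nor Theorem~\ref{union}.
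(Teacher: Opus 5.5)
Your proof is correct and complete. A common neighbour $c\in C$, together with $y_i$, $y_j$ and any $w\in C\setminus(N(y_i)\cup N(y_j))$, induces a $K_{1,3}$ centred at $c$, and you justify every required non-adjacency from the independence of $I$ and the choice of $w$.

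The paper gives no proof of this Observation; it imports it from earlier work. Your direct induced-claw argument is the natural justification, so there is no alternative route in the paper to compare against.

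One detail is worth making explicit: your step $y_iy_j\notin E$ uses that $y_i\neq y_j$. The word ``pair'' implies this, and it is genuinely needed, since the statement fails when $y_i=y_j$ (take $C=\{c_1,c_2\}$ and $I=\{y\}$ with $N(y)=\{c_1\}$).
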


\begin{theorem}\cite{renjith2020steiner}\label{union_K_{1,4}}
    Let $G$ be a 3-split graph, then $G$ is $K_{1,4}$-free if and only if for every $v\in C$ such that $d_G^I(v)=3$, $N_G^I(u)\cap N_G^I(v) \neq \emptyset$, for each $u(\neq v)\in C$.
\end{theorem}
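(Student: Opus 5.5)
The plan is to prove both directions directly from the definitions, using only three facts: $C$ is a clique, $I$ is independent, and $\Delta_G^I=3$.

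For the forward direction I will argue by contraposition. Suppose some $v\in C$ has $d_G^I(v)=3$, say $N_G^I(v)=\{a,b,c\}$, and some $u\in C$ with $u\neq v$ has $N_G^I(u)\cap N_G^I(v)=\emptyset$. Then $uv\in E$, because $C$ is a clique. The vertices $a,b,c$ are pairwise nonadjacent, because they lie in $I$. Finally, $u$ is adjacent to none of $a,b,c$, since otherwise that vertex would lie in $N_G^I(u)\cap N_G^I(v)$. Hence $\{v,u,a,b,c\}$ induces a $K_{1,4}$ with centre $v$, so $G$ is not $K_{1,4}$-free.

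For the reverse direction, I again argue by contraposition. Assume $G$ contains an induced $K_{1,4}$ with centre $x$ and leaf set $L$, where $|L|=4$. I will locate this star relative to the partition $C\cup I$ in three steps.
\begin{enumerate}
\item Since $L$ is independent and $C$ is a clique, at most one leaf lies in $C$. Hence at least three leaves lie in $I$.
\item If $x\in I$, then all neighbours of $x$ lie in $C$, so all four leaves would be in $C$. This contradicts the first step, so $x\in C$.
\item Since $x\in C$ has at least three leaves in $I$, we get $d_G^I(x)\ge 3$, and therefore $d_G^I(x)=3$ by $\Delta_G^I=3$. This also rules out all four leaves lying in $I$.
\end{enumerate}
So exactly three leaves lie in $I$, and they form all of $N_G^I(x)$. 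The remaining leaf $u$ lies in $C$ with $u\ne x$, and $u$ is nonadjacent to all three vertices of $N_G^I(x)$. Thus $N_G^I(u)\cap N_G^I(x)=\emptyset$ while $d_G^I(x)=3$, which violates the stated condition.

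The only point that needs care is the reverse direction: we must pin down that the centre lies in $C$, and that exactly one leaf lies in $C$. This is exactly where the hypothesis $\Delta_G^I=3$ is used, since it forbids four $I$-leaves and forces the fourth leaf into $C$. Everything else is routine.
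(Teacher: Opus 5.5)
Your proof is correct and complete. The forward direction exhibits the induced star $\{v,u\}\cup N_G^I(v)$, and the converse correctly forces the centre into $C$, exactly three leaves into $I$ (this is the step that uses $\Delta_G^I=3$), and the fourth leaf into $C$ with no neighbour in $N_G^I(x)$. The paper does not prove this statement itself; it only cites it from \cite{renjith2020steiner}. Your direct case analysis is the natural argument for it, and nothing is missing.
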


\begin{lemma}\cite{renjith2020steiner}\label{lemlsplit}
    Let $G$ be a $K_{1,4}$-free $3$-split graph. For any $v\in C$ such that $N_G^I(v)\neq \emptyset,$ the subgraph $H$, that is $G[V(G)\setminus N_G^I(v)]$, the induced subgraph on the vertex set $V(G)\setminus N_G^I(v)$, is a split graph for some $l, 0\leq l\leq 2.$ 
    
    Let $M$ be the labeled graph of $H$. Then the size of any maximum matching in $M, \mu(M)\leq 2.$
\end{lemma}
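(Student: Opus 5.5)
The lemma has two parts, and I would prove them in order: first the bound on $\Delta^I_H$, then the bound on $\mu(M)$.

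\textbf{The bound on $\Delta_H^I$.} Write $H=G[V(G)\setminus N_G^I(v)]$. It is a split graph with clique $C$ and independent set $I\setminus N_G^I(v)$, since deleting vertices of $I$ keeps $C$ a clique and $I$ independent. Maximality of $C$ in $H$ is not needed for the degree bound. The vertex $v$ itself has $d_H^I(v)=0$. Now take $u\in C\setminus\{v\}$. If $d_G^I(u)\le 2$ there is nothing to show. Otherwise $d_G^I(u)=3$. Applying Theorem~\ref{union_K_{1,4}} to $u$ (which has $d^I_G(u)=3$) and the vertex $v\neq u$ gives $N_G^I(u)\cap N_G^I(v)\neq\emptyset$. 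So $u$ loses at least one $I$-neighbour in $H$, and $d_H^I(u)\le 2$. Hence $H$ is an $l$-split graph with $0\le l\le 2$. This step is routine.

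\textbf{Setting up the matching bound.} I would use the labeled graph $M$ of $H$ as in \cite{renjith2020steiner}. Its vertex set is the independent side of $H$, and each $u\in C$ with $d_H^I(u)=2$ contributes the edge between its two $I$-neighbours, labeled by $u$. I would argue by contradiction. Suppose $M$ has a matching of size $3$, with edges $x_1y_1,x_2y_2,x_3y_3$ labeled by $u_1,u_2,u_3\in C$, where the six endpoints are distinct. Each $u_i$ satisfies $N_H^I(u_i)=\{x_i,y_i\}$. In $G$, the only other possible $I$-neighbours of $u_i$ lie in $N_G^I(v)$. The plan is to locate an induced $K_{1,4}$. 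The natural centres are the $u_i$ and $v$. The leaves are $x_i,y_i$, a vertex $w\in N_G^I(v)$, and a second clique vertex $u_j$ or $v$ that is nonadjacent to the chosen $I$-vertices.

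\textbf{Case split for the matching bound.}
\begin{itemize}
\item If some $u_i$ has $d_G^I(u_i)=3$, say $N_G^I(u_1)=\{x_1,y_1,w\}$, then Theorem~\ref{union_K_{1,4}} applied to $u_1$ forces $N_G^I(u_j)\cap\{x_1,y_1,w\}\neq\emptyset$ for every $j$. The matching edges are disjoint, so this intersection must be $w$. Hence every $u_j$, and also $v$, is adjacent to $w$. Iterating this pins down the whole $I$-neighbourhood structure, after which I would test the centres $u_1$ and $v$ for an independent set of four neighbours.
\item If every $u_i$ has $d_G^I(u_i)=2$, then Theorem~\ref{union_K_{1,4}} gives no information about them, and only the constraints coming from $v$ remain.
\end{itemize}

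\textbf{Main obstacle.} The all-degree-$2$ case is where I expect the difficulty, and I am not confident the contradiction can be reached with the definition of $M$ assumed above. Consider a clique $\{v,u_1,u_2,u_3\}$ with $N^I(v)=\{w\}$ and $N^I(u_i)=\{x_i,y_i\}$. Every vertex there has at most three pairwise independent neighbours, so this graph appears to be $K_{1,4}$-free, yet it has a matching of size $3$ in $M$. So the argument must use a feature of the labeled graph in \cite{renjith2020steiner} beyond what is assumed here. One possibility is that edges of $M$ arise only from vertices with $d_G^I=3$ that lost exactly one neighbour. Another is that vertices of $C$ with no $I$-neighbour outside $N^I_G(v)$ are treated specially. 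My first step would be to check that definition. If it restricts the labels to vertices with $d_G^I(u_i)=3$, the first case above applies to all $u_i$. Then all $u_i$ share a common $w\in N_G^I(v)$, and the remaining work is to derive a $K_{1,4}$ or a contradiction with Theorem~\ref{union_K_{1,4}} from three such vertices.
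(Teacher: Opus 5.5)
The first part, $\Delta_H^I\le 2$, is correct. The second part, $\mu(M)\le 2$, is left unproved, and the obstacle you describe is not real. In your example every clique vertex has at most two $I$-neighbours, so $\Delta_G^I=2$ and $G$ is not a $3$-split graph, which the lemma requires. The labeled graph you assumed is exactly the paper's definition, so there is nothing more to look up in \cite{renjith2020steiner}; the paper only cites this lemma and gives no proof.

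The hypothesis you never use is that some $x\in C$ has $d_G^I(x)=3$. It settles your all-degree-$2$ case directly. There $N_G^I(u_i)=\{x_i,y_i\}$, so $x\neq u_i$. Theorem~\ref{union_K_{1,4}} applied to $x$ and each $u_i$ then forces $N_G^I(x)=\{p_1,p_2,p_3\}$ with $p_i\in\{x_i,y_i\}$. Each $p_i$ lies in $H$, so $p_i\notin N_G^I(v)$. Hence $x\neq v$, and $x$ with leaves $p_1,p_2,p_3,v$ is an induced $K_{1,4}$.

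In your degree-$3$ case, you correctly deduce that one vertex $w\in N_G^I(v)$ is adjacent to $v,u_1,u_2,u_3$, so every $u_j$ has $I$-degree $3$. However, looking for four independent neighbours of $u_1$ or $v$ cannot succeed. Take the clique $\{v,u_1,u_2,u_3\}$ with $N^I(v)=\{w\}$ and $N^I(u_i)=\{x_i,y_i,w\}$. This graph is $K_{1,4}$-free and $3$-split for that partition, yet $M$ has a matching of size $3$.

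What excludes this example is the paper's standing convention that $C$ is a \emph{maximal} clique of $G$, which you never invoke. By maximality, some $c'\in C$ is nonadjacent to $w$, so $c'\notin\{v,u_1,u_2,u_3\}$. Theorem~\ref{union_K_{1,4}} applied to each $u_j$ and $c'$ gives $c'$ a neighbour $p_j\in\{x_j,y_j\}$ for $j=1,2,3$. Then $c'$ with leaves $p_1,p_2,p_3,v$ is again an induced $K_{1,4}$, since $v$ misses every $p_j$.

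In both cases the centre of the forbidden star is a third clique vertex, not $u_1$ or $v$. That vertex comes from the $3$-split hypothesis in one case and from maximality of $C$ in the other; these are the two missing ingredients.
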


 \begin{observation}\cite{goddard2005eternal}\label{o1}
     Let $G$ be a graph. Then $\gamma(G) \le \gamma_m^\infty(G)\le \alpha(G),$ where $\alpha(G)$ and $\gamma(G)$ is the independence number and the domination number of $G$, respectively. 
 \end{observation}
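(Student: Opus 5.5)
Since $\gamma(G)\le\gamma_m^\infty(G)\le\alpha(G)$ holds for every graph, the statement splits into two inequalities, which I will prove separately.

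\textbf{Lower bound.} Let $D$ be any $m$-eternal dominating set. By definition, the initial configuration $D=D_0$ must be a dominating set, because every configuration $D_i$ in the defending sequence is required to be one. Hence $|D|\ge\gamma(G)$. Taking $D$ of minimum size gives $\gamma(G)\le\gamma_m^\infty(G)$.

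\textbf{Upper bound.} Fix a maximal independent set $S$ of $G$ of size $\alpha(G)$ (one of maximum size is also maximal). A maximal independent set is always dominating. The plan is to show that the guards can always be kept on some maximal independent set of size at most $\alpha(G)$. Suppose the guards occupy a maximal independent set $D_{i-1}$ and a vertex $r\notin D_{i-1}$ is attacked. Form
\[
D' = (D_{i-1}\setminus N(r))\cup\{r\},
\]
which is independent, and then extend $D'$ to a maximal independent set.

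This needs care, because newly added vertices require guards, and a guard can only move to an adjacent vertex. So I do not use the greedy construction directly. Instead I will realize the new configuration through a matching argument:
\begin{itemize}
\item Move the guard on some $u\in D_{i-1}\cap N(r)$ to $r$; such a $u$ exists because $D_{i-1}$ dominates $r$.
\item Every other guard at a neighbour of $r$ must either stay or move.
\end{itemize}
A cleaner route avoids maintaining maximal independent sets altogether. Use a minimum clique cover $\mathcal{Q}$ of $G$, assigning one guard per clique, so that any attack inside a clique is answered by that clique's guard. This only gives the weaker bound $\theta(G)$, not $\alpha(G)$.

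The correct approach, following Goddard et al., is an induction on $|V|$:
\begin{itemize}
\item Pick a vertex $v$ and let $G'=G-N[v]$, so that $\alpha(G')\le\alpha(G)-1$.
\item Combine one guard near $v$ with an $m$-eternal strategy on $G'$ given by induction.
\item Handle attacks in $N(v)$ by moving the $v$-guard, recursively re-rooting the structure after each such move.
\end{itemize}

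The main obstacle is this re-rooting step. After the guard leaves $v$ for some $w\in N(v)$, the decomposition must be rebuilt with $w$ as the root. This requires showing that the guards on $G'$ can shift by one step into a valid configuration for $G-N[w]$ while keeping domination. I would try to guarantee this by choosing the configurations as maximal independent sets at every stage. The swap then reduces to a Hall-type matching between $D_{i-1}\cap N(r)$ and the vertices newly undominated after the move; such a matching exists because the maximum independent set structure bounds neighbourhood sizes, exactly as in the standard proof that $\gamma_m^\infty\le\alpha$.
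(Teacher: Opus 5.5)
\textbf{The upper bound has a genuine gap; the lower bound is fine.} Your lower bound is correct: $D_0$ must itself be dominating. (The paper gives no proof of either inequality; it only cites Goddard et al.) The upper bound is not established. You outline three strategies and abandon the first two. The third rests on the claim that the needed Hall-type matching exists ``exactly as in the standard proof that $\gamma_m^\infty\le\alpha$'', which is the statement being proved.

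\textbf{Why the ingredients fail as stated.}
First, guards cannot always sit on a (maximal) independent set, because their number is fixed and each vertex holds at most one guard. In $K_{1,3}$ with $\alpha=3$ guards, once the centre is attacked the occupied set contains the centre and two leaves, so it is not independent.
Second, aiming only for a \emph{maximal} independent set through the attacked vertex leaves the surplus guards uncontrolled. Take the path with vertices $1,\dots,6$ in order, guards on $\{1,3,5\}$, and an attack on $2$. Your set $(D\setminus N(2))\cup\{2\}=\{2,5\}$ is already maximal. If the guard on $3$ answers, the position $\{1,2,5\}$ loses to an attack on $4$: the guard on $5$ must move to $4$, and no guard can then reach $5$ or $6$.
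Third, the induction through $G-N[v]$ needs one guard to protect all of $N[v]$, which is guaranteed only when $N[v]$ is a clique. After that guard moves to $w\in N(v)$, the strategy for $G-N[v]$ gives no control over $G-N[w]$. This re-rooting, which you leave open, is the entire difficulty.

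\textbf{The missing idea} is a reduction that lets the guards always stay on \emph{maximum} independent sets. Induct on $|E(G)|$.
If some edge $e$ satisfies $\alpha(G-e)=\alpha(G)$, then $\gamma_m^\infty(G)\le\gamma_m^\infty(G-e)\le\alpha(G)$. This holds because every dominating set and every guards move of $G-e$ is also one of $G$.
Otherwise, for every edge $uv$, an independent set of size $\alpha(G)+1$ in $G-uv$ must contain both $u$ and $v$. Deleting $u$ from it leaves a maximum independent set of $G$ containing $v$. Isolated vertices lie in every maximum independent set. Hence every vertex lies in some maximum independent set.
Now keep the guards on a maximum independent set $I$. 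When $r\notin I$ is attacked, choose a maximum independent set $I'\ni r$. For $T\subseteq I'\setminus I$ we have $N(T)\cap I\subseteq I\setminus I'$. Also $(I\setminus N(T))\cup T$ is independent, so $|N(T)\cap (I\setminus I')|\ge |T|$. Since $|I\setminus I'|=|I'\setminus I|$, Hall's theorem gives a perfect matching between these two sets. The guards on $I\setminus I'$ move along it and those on $I\cap I'$ stay, producing the dominating set $I'\ni r$.
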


  \begin{observation}\label{universal}
      Let $G$ be a non-complete graph. Then $G$ has a universal vertex if and only if $\gamma_m^\infty (G)=2$.
\end{observation}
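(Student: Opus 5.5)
We plan to prove the forward direction by giving an explicit two-guard strategy, together with a lower bound. We then check the reverse direction carefully, since as stated it appears to be false.

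\emph{Lower bound.} We first show $\gamma_m^\infty(G)\ge 2$ whenever $G$ is non-complete. Suppose a single guard formed an $m$-eternal dominating set. Every vertex $w$ can be attacked, and after the defence the guard sits on $w$. The resulting configuration $\{w\}$ must be dominating, so $w$ is universal. Since $w$ was arbitrary, every vertex is universal and $G$ is complete, a contradiction. This direction does not need a universal vertex.

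\emph{Upper bound when $G$ has a universal vertex $u$.} We place guards on $D_0=\{u,v\}$ for an arbitrary $v\neq u$. We maintain the invariant that one guard is always on $u$; any configuration containing $u$ is dominating. Suppose a vertex $w\notin D_i$ is attacked. The guard on $u$ moves to $w$, which is allowed because $u$ is universal. The other guard, at some $x\neq u$, moves to $u$, which is allowed because $x\in N(u)$. The new configuration $\{w,u\}$ has two distinct vertices, contains the attacked vertex, contains $u$, and arises from $D_i$ by a guards move. By induction on the length of the attack sequence, $D_0$ is an $m$-eternal dominating set. Combined with the lower bound, $\gamma_m^\infty(G)=2$.

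\emph{Reverse direction: a counterexample.} We would try to show that if two guards suffice then some vertex is universal. Note that every configuration must be a dominating pair, and every vertex lies in some such pair. However, $C_4$ with vertices $1,2,3,4$ in cyclic order is a counterexample. Guards on $\{1,3\}$ answer an attack at $2$ by moving $1\to2$ and $3\to4$, giving the dominating set $\{2,4\}$, and the situation is symmetric. Hence $\gamma_m^\infty(C_4)=2$, yet $C_4$ has no universal vertex.

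So we expect the only-if direction to be the real obstacle: it fails in the stated generality. Our plan is to prove the forward implication as above, and to restrict the converse to the setting in which the observation is actually used, namely split graphs $G=(C\cup I,E)$. There we would argue as follows. Suppose no vertex is universal, and consider the guard configuration after an attack on some $y\in I$. The other guard $z$ must dominate $N[y]^c$, which contains every other vertex of $I$ and every vertex of $C\setminus N(y)$. If $z\in I$, then $z$ is isolated from the rest of $I$, which forces $|I|\le 2$, and a short case check finishes. Otherwise $z\in C$ is adjacent to all of $I\setminus\{y\}$. Next we attack a vertex of $I$ that is not adjacent to $z$, using that $z$ is not universal and $C$ is a clique. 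Tracking the forced responses should produce a contradiction with the maximality of $C$ or with the assumption that no vertex is universal.
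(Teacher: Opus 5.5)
Your lower bound $\gamma_m^\infty(G)\ge 2$ for non-complete $G$, your two-guard strategy pivoting on a universal vertex, and your $C_4$ counterexample are all correct. The paper states this observation without proof, and the ``only if'' direction is indeed false as written. The gap is in your proposed repair: the converse is \emph{also} false for split graphs with a maximal clique, even claw-free ones, so the last paragraph of your plan cannot be completed. In your case $z\in I$ (so $|I|=2$), take $P_4=abcd$ with $C=\{b,c\}$ and $I=\{a,d\}$. It has no universal vertex, yet keeping one guard on each edge of the perfect matching $\{ab,cd\}$ gives $\gamma_m^\infty(P_4)=2$. So the promised ``short case check'' fails.

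In your case $z\in C$, the step ``attack a vertex of $I$ not adjacent to $z$'' is vacuous. Here $z$ is adjacent to all of $C$ and all of $I\setminus\{y\}$, so non-universality only says that $y$ is the unique non-neighbour of $z$, and $y$ already carries a guard. The 3-sun realises this: let $C=\{z_1,z_2,z_3\}$ be a triangle and $I=\{y_1,y_2,y_3\}$ with $N(y_i)=C\setminus\{z_i\}$. This is a connected claw-free split graph with maximal clique $C$, $\Delta_G^I=2$, $|I|=3$ and no universal vertex. Yet two guards suffice, using the dominating configurations $\{y_i,z_i\}$ and $\{z_j,z_k\}$ with $j\ne k$. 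From $\{y_i,z_i\}$, answer an attack on $y_j$ by $z_i\to y_j$, $y_i\to z_j$, and an attack on $z_j$ by $y_i\to z_j$. From $\{z_j,z_k\}$, an attack on $y_i$ ends in $\{y_i,z_i\}$: a guard already on $z_i$ stays, otherwise one guard moves to $z_i$ along the clique, and the other guard, which is not on $z_i$ and hence is adjacent to $y_i$, moves to $y_i$. An attack on the third clique vertex is met by moving either guard there. This is exactly the situation (Subcase~2 with $|I|=3$ in the theorem on $K_{1,3}$-free split graphs) where the paper invokes the converse to conclude $\gamma_m^\infty(G)\ge 3$. So restricting to ``the setting in which the observation is used'' does not rescue it. Only the forward implication, together with your lower bound, is true; any application of the converse has to be replaced by a direct argument.
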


\begin{observation}\label{maximal_independnet}
    Let $G=(V,E)$ be a graph and $S\subseteq V$ be such that $N(u)\cap N(v)=\emptyset$, for any two vertices $u,v$ of $S$. then $\gamma(G)\ge |S|$.
\end{observation}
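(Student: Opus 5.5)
The plan is a direct counting argument. Fix a minimum dominating set $D$ of $G$, so $|D|=\gamma(G)$. I will build a map $f\colon S\to D$ and show that it is injective, which gives $\gamma(G)=|D|\ge |S|$. For each $u\in S$, since $D$ is dominating, either $u\in D$ or $u$ has a neighbour in $D$. In both cases I can choose some $f(u)\in D\cap N[u]$.

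The key step is injectivity. Suppose $u\neq v$ in $S$ and $f(u)=f(v)=x$, so $x\in N[u]\cap N[v]$. I split into cases according to where $x$ lies.
\begin{itemize}
\item If $x\notin\{u,v\}$, then $x\in N(u)\cap N(v)$. This contradicts the hypothesis $N(u)\cap N(v)=\emptyset$.
\item If $x=u$ (the case $x=v$ is symmetric), then $u\in N[v]$, so $u$ and $v$ are adjacent.
\end{itemize}

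The second case is the main obstacle. Disjointness of the open neighbourhoods does not by itself exclude adjacent pairs. For example, if $G=K_2$ and $S=V(G)$, the open neighbourhoods are disjoint, yet $\gamma(K_2)=1<2$. So the statement needs $S$ to be independent, as the observation's label suggests, or equivalently the hypothesis read with closed neighbourhoods, $N[u]\cap N[v]=\emptyset$. I will make this assumption explicit. Under it, $u$ and $v$ are non-adjacent, which rules out the second case.

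With both cases excluded, $f$ is injective, and so $|S|\le|D|=\gamma(G)$. No earlier results are needed; the only point requiring care is stating the independence of $S$, which is implicit in how the observation is meant to be used later.
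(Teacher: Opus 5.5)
Your proof is correct. The paper states this observation without proof, so there is no route to compare against; your injection $u\mapsto f(u)\in D\cap N[u]$ into a minimum dominating set $D$ is the standard argument one would supply.

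You are also right that the statement as literally written, with open neighbourhoods, is false. The graph $K_2$ with $S=V(K_2)$ is a valid counterexample. The repair you make is the right one: require $N[u]\cap N[v]=\emptyset$, which is equivalent to adding that $S$ is independent.

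This repair costs the paper nothing. In every application (the $K_{1,3}$-free and $K_{1,4}$-free split graph characterizations), the set fed into the observation lies inside the independent side of the split graph: $I$, $S\subseteq \hat{I}$, $I'\cup X^*$, or $U\cup I'$. So independence holds automatically there.
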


\begin{observation}\label{ind_number_splitgraph}
    Let $G=(V=(C\cup I),E)$ be a split graph. Then the maximum independence number of $G$ is either $|I|$ or $|I|+1$. 
\end{observation}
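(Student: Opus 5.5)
The plan is to prove the two bounds $|I| \le \alpha(G) \le |I|+1$ separately, directly from the split partition $V = C \cup I$. Neither direction needs any of the earlier cited results; both follow from the definitions of clique and independent set.

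For the lower bound, I use that $I$ is itself an independent set of $G$ by the definition of a split graph. Hence $\alpha(G) \ge |I|$.

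For the upper bound, let $S$ be any independent set of $G$ and write $S = (S \cap C) \cup (S \cap I)$.
\begin{itemize}
\item Any two vertices of $C$ are adjacent, since $C$ is a clique, so $S$ contains at most one vertex of $C$. Thus $|S \cap C| \le 1$.
\item Trivially $|S \cap I| \le |I|$.
\end{itemize}
Therefore $|S| \le |I| + 1$. Taking $S$ to be a maximum independent set gives $\alpha(G) \le |I|+1$.

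Since $\alpha(G)$ is an integer with $|I| \le \alpha(G) \le |I|+1$, it equals either $|I|$ or $|I|+1$. For completeness I would also record when each value occurs. We have $\alpha(G) = |I|+1$ exactly when some $v \in C$ has $N^I_G(v) = \emptyset$, since then $I \cup \{v\}$ is independent. Otherwise every clique vertex has a neighbour in $I$, and one checks that no independent set of size $|I|+1$ exists: such a set would have to contain all of $I$ together with one clique vertex. Under the paper's convention that $C$ is a maximal clique, this characterisation is not needed for the statement itself.

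There is no genuine obstacle here. The only point requiring care is the observation that an independent set meets the clique $C$ in at most one vertex, which is immediate.
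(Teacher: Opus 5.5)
Your proof is correct. The paper states this observation without proof, and your argument is exactly the standard justification it relies on: $I$ is independent, and any independent set meets the clique $C$ in at most one vertex. Your refinement that $\alpha(G)=|I|+1$ if and only if some $v\in C$ has $N_G^I(v)=\emptyset$ is also correct, and it is the form the paper actually uses later, e.g.\ in Subcase~1 of Theorem~\ref{alg_K_{1,3}}, where $\bigcup_{v\in I}N(v)=C$ is used to conclude $\alpha(G)\le|I|$. One small quibble: the maximal-clique convention plays no role here, since both values occur even with $C$ maximal (e.g.\ $P_3$ with $C$ an edge gives $|I|+1$, and $P_4$ with $C$ the middle edge gives $|I|$).
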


\section{Complexity Difference}\label{sec:1}
In this section, we establish a complexity difference among three closely related graph problems: \textsc{Dominating Set}, \textsc{Eternal Dominating Set}, and $m$-\textsc{Eternal Dominating Set}. Specifically, we identify two distinct graph classes that exhibit contrasting computational complexities for these problems. In the first graph class, both \textsc{Dominating Set} and $m$-\textsc{Eternal Dominating Set} are \textsf{NP}-hard, whereas \textsc{Eternal Dominating Set} is solvable in polynomial time. In the second graph class, \textsc{Eternal Dominating Set} is \textsf{NP}-hard, while \textsc{Dominating Set} and $m$-\textsc{Eternal Dominating Set} admit polynomial-time solutions. 
Now, let us define a graph class $GP_3$.
\begin{definition}($GP_3$-graph) \label{def:G^*}
 Let $G=(V, E)$ be any graph. A graph $G'=(V', E')$ is called a $GP_3$-graph if it can be constructed from $G$ by attaching $n$ distinct paths of length 3, $P_3^i=\{v_i^0,v_i^1,v_i^2\}, i\in [n]$ to $G$ such that $v_i\in V$ is adjacent to $v_i^1$ (the middle vertex of $P_3^i$), for each vertex $v_i$, $i\in [n]$. 
 
 $GP_3$-graph class consists of all the $GP_3$-graphs.
 \end{definition}

 Note that for a $GP_3$-graph $G'=(V',E')$, constructed from the graph $G=(V,E)$, we have $|V'|=4|V|$ and $|E'|=3|V|+|E|$.

We first show that \textsc{Dominating Set} and $m$-\textsc{Eternal Dominating Set} can be computed in linear time on $GP_3$ graphs.
\begin{lemma}\label{GP_3gamma}
    For a $GP_3$ graph $G'$, $\gamma(G')=\frac{|V'|}{4}$ and $\gamma_m^\infty(G')=\frac{|V'|}{2}$.
\end{lemma}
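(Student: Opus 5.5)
Write $n=|V|$, so $|V'|=4n$; the claim becomes $\gamma(G')=n$ and $\gamma_m^\infty(G')=2n$. For each $i\in[n]$ let $B_i=\{v_i,v_i^0,v_i^1,v_i^2\}$. These $n$ blocks partition $V'$, and $G'[B_i]$ is the star $K_{1,3}$ with centre $v_i^1$ and leaves $v_i,v_i^0,v_i^2$. The plan is to obtain the lower bounds from the leaves $v_i^0,v_i^2$, and the upper bounds from explicit guard configurations that work block by block.

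For the domination number, the set $\{v_1^1,\dots,v_n^1\}$ dominates $G'$, since each $v_i^1$ dominates all of $B_i$; hence $\gamma(G')\le n$. For the reverse inequality, take the vertices $v_i^0$, $i\in[n]$. Their closed neighbourhoods $N[v_i^0]=\{v_i^0,v_i^1\}$ are pairwise disjoint. Any dominating set must therefore meet each of these $n$ disjoint sets (Observation~\ref{maximal_independnet} is essentially this argument), so $\gamma(G')\ge n$.

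For the $m$-eternal number, the upper bound uses the fact that each block induces $K_{1,3}$, whose $m$-eternal domination number is $2$ by Observation~\ref{universal}. Start with two guards on each block, at $v_i^1$ and $v_i^0$. Each block is defended independently: when a vertex of $B_i$ is attacked, only the guards of $B_i$ move, following the star strategy, which keeps one guard on $v_i^1$ and one on some leaf. Every configuration reached still has a guard on each $v_i^1$, so it stays dominating in $G'$, and guards never leave their block. Hence $\gamma_m^\infty(G')\le 2n$.

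The main obstacle is the lower bound $\gamma_m^\infty(G')\ge 2n$. I will show that in every valid configuration, each set $S_i=\{v_i^0,v_i^1,v_i^2\}$ contains at least two guards; since the $S_i$ are disjoint, this gives the bound. Suppose some configuration $D$ arising during the defence has at most one guard in $S_i$. Domination of $v_i^0$ forces a guard on $v_i^0$ or on $v_i^1$, and likewise for $v_i^2$. With only one guard available in $S_i$, that guard must be on $v_i^1$. Now attack $v_i^0$. Every neighbour of $v_i^0$ lies in $S_i$, so only guards already in $S_i$ can reach $v_i^0$, and the only such guard is the one on $v_i^1$. So the guard on $v_i^1$ moves to $v_i^0$. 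Could another guard enter $v_i^1$ in the same move? Only the guard on $v_i$ could, since $N(v_i^1)\setminus S_i=\{v_i\}$. Even then, $S_i$ ends up with at most two guards, one on $v_i^0$ and possibly one on $v_i^1$. If $v_i$ did not send a guard, $v_i^2$ is left undominated, which is a contradiction. So a guard must have arrived from $v_i$, and now $S_i$ holds guards on $v_i^0$ and $v_i^1$.

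This local argument alone does not yet rule out guards being pumped in from $G$. I will close the gap by arguing that the configuration has at least $n$ guards at the $v_i^1$'s together with the leaves. Specifically, I will use $\alpha$-type counting: the $2n$ leaves $v_i^0,v_i^2$ form an independent set. Attacking them in an adversarial order shows that the number of guards in $\bigcup_i S_i$ can never fall below $2n$. The key fact is that a guard can only enter $S_i$ through $v_i\to v_i^1$, and after an attack on $v_i^0$ or $v_i^2$ both leaves and $v_i^1$ are threatened. I would formalise this as an invariant: an adversary that repeatedly attacks the leaves of a block with a single guard in $S_i$ forces guards from $V$ to be used up, and since the total number of guards is finite while the attacks are infinite, this eventually produces an undominated vertex.
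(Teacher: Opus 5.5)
Your proof that $\gamma(G')=n$ and your upper bound $\gamma_m^\infty(G')\le 2n$ are correct. Your dominating set $\{v_i^1\mid i\in[n]\}$ is the right choice; the paper's $\{v_i^0\mid i\in[n]\}$ leaves every $v_i$ and $v_i^2$ undominated. The gap is in the lower bound $\gamma_m^\infty(G')\ge 2n$.

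The invariant you set out to prove is false. You claim every configuration has at least two guards in each $S_i=\{v_i^0,v_i^1,v_i^2\}$. But in your own block strategy with $2n$ guards, an attack on $v_i$ is answered by $v_i^1\to v_i$, $v_i^0\to v_i^1$, and afterwards $S_i$ holds a single guard. For the same reason, the claim that the number of guards in $\bigcup_i S_i$ ``can never fall below $2n$'' is false. The closing ``pumping'' argument does not repair this. Guards can flow back out of $S_i$ along $v_i^1\to v_i$, so nothing is monotonically used up, and you give no potential function that would turn the infinite attack sequence into a contradiction.

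What is missing is only the right unit of counting. Your local analysis already shows that if a configuration $D$ has exactly one guard in $S_i$, then that guard is on $v_i^1$ and $v_i\in D$. Otherwise, after the forced move $v_i^1\to v_i^0$ in response to an attack on $v_i^0$, no guard can reach $v_i^1$ or $v_i^2$, and $v_i^2$ is undominated. So $|D\cap B_i|\ge 2$ for every block $B_i=S_i\cup\{v_i\}$, and since the blocks partition $V'$, $|D|\ge 2n$.

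Equivalently, suppose $|D|<2n$. By pigeonhole some block contains at most one guard, hence exactly one. That guard must be on $v_i^1$, with $v_i$ empty, and a single attack on $v_i^0$ already defeats the defence. This is essentially the paper's argument: a $P_3^j$ containing only one guard forces $v_j\in D'$. No infinite sequence of attacks is needed.
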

\begin{proof}
 First, we prove that $\gamma(G')=\frac{|V'|}{4}$. Let $G$ be any graph and $G'$ be the corresponding $GP_3$ graph. Let $D'$ be any dominating set of $G'$. As $D'$ is a dominating set, it follows that $|\{v_i^0,v_i^1,v_i^2\}\cap D'|\ge 1$, for all $i\in [n]$. Thus, $|D'|\ge |V|=\frac{|V'|}{4}$. 
    
    Conversely, let $D'=\{v_i^0\mid 1\le i\le n\}$. Since $\displaystyle\cup_{i\in [n]} N[v_i^0]=V'$, $D'\subseteq V'$ is a dominating set of $G'$. Therefore, $\gamma(G')\le |V|=\frac{|V'|}{4}$.\\
    
 Now, we prove that $\gamma_m^\infty(G')=\frac{|V'|}{2}$. 
     Let $G$ be any graph and $G'$ be the corresponding $GP_3$ graph. Let $D'$ be an $m$-eternal dominating set of $G'$. As $D'$ is a dominating set, it follows that $|\{v_i^0,v_i^1,v_i^2\}\cap D'|\ge 1$, for all $i\in [n]$. If for each $P_3^i, i\in [n]$, $|D'\cap P_3^i|\ge 2$ then $|D'|\ge 2|V|=\frac{|V'|}{2}$. Otherwise, let there exists a $P_3^j$ for some $j\in [n]$ such that $|D'\cap P_3^j|< 2$. Also, we have $|D'\cap P_3^j|\ge 1$ because $|\{v_i^0,v_i^1,v_i^2\}\cap D|\ge 1$, for all $i\in [n]$. This implies that $|D'\cap P_5^j|= 1$. So, clearly $v_j^1\in D'$. If $v_j\notin D'$, then the attack on $v_j^2$ can not be defended, a contradiction to $D'$ being an $m$-eternal dominating set. Thus, $v_j\in D'$. Now, the attack on $v_j^2$ can be defended by $v_j^1\rightarrow v_j^2, v_j\rightarrow v_j^1$, implying $|D'\cap P_3^j|\ge 2$. Thus, $|D'\cap \{v_j^0,v_j^1,v_j^2\}|\ge 2$, implying $|D'\cap P_3^j|\ge 2$. Hence, $|D'|\ge 2|V|=\frac{|V'|}{2}$. 
    
    Conversely, let $D=\{v_i^1,v_i^2\mid 1\le i\le n\}$. Since $\{v_i,v_i^1,v_i^2,v_i^0\mid i\in [n]\}$ forms an induced star of $G'$ having $v_i^1$ as the central vertex, by maintaining $v_i^1$ as invariant for each $i\in [n]$, any attack on leaves of the induced star can be defended. Thus, $D'$ is an $m$-eternal dominating set of $G'$. Therefore, $\gamma_m^\infty(G')\le 2|V|=\frac{|V'|}{2}$. 
\end{proof}

Now, we provide the hardness of \textsc{Eternal Dominating Set} on $GP_3$ graphs.

 \begin{lemma}\label{lem_NP-eternal}
    Let $G = (V, E)$ be a graph of order $n$ and let $G' = (V',E')$ be the corresponding $GP_3$-graph. Then for any integer $k \le n$, the graph $G$ has an eternal dominating set of cardinality at most $k$ if and only if $G'$ has an eternal dominating set of cardinality at most $k + 2n$.
\end{lemma}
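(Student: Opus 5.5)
The plan is to prove the two directions separately. Both rest on one structural fact about the pendant paths, which plays the role that the ``two guards per $P_3^i$'' argument plays in Lemma~\ref{GP_3gamma}. \textbf{Claim:} in the one guard move model, every configuration $D'_t$ that occurs while defending $G'$ satisfies $|D'_t\cap\{v_i^0,v_i^1,v_i^2\}|\ge 2$ for every $i\in[n]$. To see this, suppose some triple holds exactly one guard. Domination of both leaves forces that guard onto $v_i^1$. Now attack $v_i^0$: the only neighbour of $v_i^0$ is $v_i^1$, so the guard on $v_i^1$ must move to $v_i^0$. After this single move, $v_i^2$ is undominated, a contradiction. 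The same argument shows a further fact. If a triple holds exactly two guards and one of them is on $v_i^1$, then that guard can never leave towards $v_i$, since after it left only one guard would remain in the triple.

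\textbf{Forward direction.} Let $D$ be an eternal dominating set of $G$ with $|D|\le k$, and take $D'=D\cup\{v_i^0,v_i^2\mid i\in[n]\}$, which has size $k+2n$. The defender keeps the invariant that every triple holds exactly two guards and these dominate the triple, and that the guards on $V$ form a configuration reachable by the $G$-strategy. An attack on $v_i^1$ is answered by moving a leaf guard to $v_i^1$; an attack on an empty leaf is answered by moving the guard on $v_i^1$ there. In both cases the triple stays dominated and has two guards. An attack on a vertex of $V$ is answered by exactly the single move prescribed by the eternal strategy for $G$. Every vertex of $V$ is then dominated by guards inside $V$, and every triple dominates itself, so the result is a dominating set of $G'$.

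\textbf{Backward direction.} Let $D'$ be an eternal dominating set of $G'$ with $|D'|\le k+2n$. By the Claim, at most $k$ guards are ``excess''. Here the excess guards are those on $V$, together with the third guard of any triple that holds three. The plan is to simulate the $G'$-strategy on $G$ through a projection $\pi(D'_t)$. This projection contains $D'_t\cap V$, and it places at $v_i$ one guard for each triple holding three guards, so that $|\pi(D'_t)|\le k$. I will then check two things. First, the projection turns one-guard moves of $G'$ into at most one-guard moves of $G$. A guard entering $V$ from $v_i^1$ must come from a triple with three guards, by the Claim, so it corresponds to the projected guard already sitting at $v_i$. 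Second, an attack on $r\in V$ in $G'$ yields $r\in\pi(D'_{t+1})$.

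\textbf{Main obstacle.} The hard step is domination of $G$ by $\pi(D'_t)$. A vertex $v_i$ may be dominated in $G'$ only by a guard on $v_i^1$ whose triple holds exactly two guards, and the projection does not see such a guard. My first attempt at this will be a normalisation lemma. It should show that a ``stuck'' guard on $v_i^1$ of this kind can be charged to a guard placed at $v_i$ in $G$, while the excess count still stays at most $k$. The idea is to argue that, for each such triple, the defender's configuration must carry a compensating excess guard. The reason is that attacks on $v_i^0$ and then on $v_i^2$ force the guard off $v_i^1$ without a replacement from inside the triple. From that point on, $v_i$ must be dominated from $V$ or from a third guard of the triple. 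If this normalisation goes through, I will first attack all leaves to bring $D'$ into a normalised configuration, and then run the simulation from that configuration.
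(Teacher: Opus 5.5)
Your Claim and forward direction are fine, and the domination obstacle you isolate is real; the paper, which instead argues that without loss of generality no triple ever holds three guards and then keeps only the guards on $V$, does not address it. You do not need the charging argument, though, because your final normalisation already settles it. After attacking every empty leaf, all $2n$ leaves are occupied. If the simulation only ever forwards attacks on $V$ to the $G'$-strategy, no leaf guard can move again, since its only neighbour $v_i^1$ is never attacked. So any guard on some $v_i^1$ is a third guard, and its image at $v_i$ dominates $v_i$. You should state this restriction to $V$-attacks explicitly.

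The genuine gap is your assertion that the projection turns one-guard moves of $G'$ into at most one-guard moves of $G$.
\begin{itemize}
\item Nothing prevents $v_i\in D'_t$ while triple $i$ is full. Then $\pi(D'_t)$ puts two guards on $v_i$, which is not a configuration in the model.
\item If the guard on $v_i$ then answers an attack on a neighbour $r$, the image gains $r$ without losing $v_i$. No single guard move in $G$ realises this.
\item Such coincidences may already be present after normalisation.
\item They are also created whenever the $G'$-strategy answers an attack on an empty $v_i$ whose triple is full by a guard from $N_G(v_i)$ rather than from $v_i^1$.
\end{itemize}

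One repair has two steps.
\begin{enumerate}
\item Forward an attack $r\in V$ to $G'$ only if $r\notin\pi(D'_t)$, and otherwise move nothing in either graph. Then $r^1$ is empty, so the responding guard comes from $N_G(r)$. Third guards never move, and no new coincidence ever arises. The projection is therefore a winning one-guard strategy in $G$ with at most $|D'|-2n\le k$ guards, except that some guards may share a vertex from the start.
\item Show that sharing cannot help in the one-guard model, by induction on the excess $k'-|\mathrm{supp}|$, where $k'$ is the number of guards and $\mathrm{supp}$ the set of occupied vertices.
\begin{itemize}
\item If the strategy ever moves a guard off a shared vertex, restart from that (still winning) position. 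Its excess is strictly smaller, because guards only ever move to unoccupied vertices.
\item If it never does, simply delete one of the shared guards.
\end{itemize}
At excess $0$ you have an eternal dominating set of $G$ of size at most $k$.
\end{enumerate}
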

\begin{proof}
   Let $D\subseteq V$ be an eternal dominating set of $G$ of cardinality at most $k$. Define $D'=D\cup \{v_i^0,v_i^2\mid i\in [n]\}$. Clearly, $|D'|\le k+2n$. We now show that $D'$ is an eternal dominating set of $G'$. First, observe that $D'$ is a dominating set of $G'$: since $D$ dominates $G$ and for each $i\in [n]$, $N[\{v_i^0,v_i^2\}]=\{v_i^0,v_i^1,v_i^2\}$. Now, consider an attack on any vertex $v\in V'$. If $v\in V$, the attack can be defended using the same strategy as in $G$. If $v\in \{v_i^0,v_i^1,v_i^2\}$ for some $i\in [n]$, observe that the set $\{v_i^0,v_i^1,v_i^2\}$ forms an induced star in $G'$, centered at $v_i^1$. By maintaining $v_i^1$ as invariant, any attack on leaves $v_i^0$ or $v_i^2$ can be defended. Moreover, it is easy to see that all the new configuration of guards remains a dominating set of $G'$. Hence, $D'$ is an eternal dominating set of $G'$. 

   Conversely, let $D'\subseteq V'$ be an eternal dominating set of $G'$ with $|D'|\le k+2n$. Since $D'$ is a dominating set of $G'$, we have for each $i\in [n]$, $|D'\cap \{v_i^0,v_i^1,v_i^2\}|\ge 1$.
  Moreover, $|D'\cap \{v_i^0,v_i^1,v_i^2\}|\ge 2$ as $D'$ is an eternal dominating set of $G'$. Suppose there exists some $i\in [n]$ such that in all configurations, $|D'\cap \{v_i^0,v_i^1,v_i^2\}|= 3$. Then it is easy to see that $D^*=D\setminus \{v_i^2\}$ is also an eternal dominating set of $G'$ of smaller cardinality. Now consider the case where, in some configuration, $|D'\cap \{v_i^0,v_i^1,v_i^2\}|=3$ for some $i\in [n]$. Since the model allows only one guard to move per attack, it must be the case that the attack occurred at $v_i^1$ and was defended by $v_i$. We can modify the defense strategy such that $|D'\cap \{v_i^0, v_i^1, v_i^2\}| = 2$, and still other attacks can be defended; then we are done. Instead, $v_i^0$ defends the attack at $v_i^1$ and the guard at vertex $v_i$ stays where it is. Observe that any attack on $v_i$ can be defended by itself, and for all other vertices, it can be defended using the previous strategy. Hence, for each $i\in [n]$, $|D'\cap \{v_i^0,v_i^1,v_i^2\}|=2$. Thus, in all configurations, $|D'\cap \{v_i^0,v_i^1,v_i^2\}|= 2$ for all $i\in [n]$. Then define $D=D'\setminus (D'\cap \{v_i^0,v_i^1,v_i^2\})$ implies $|D|\le k$. Now, it remains to prove that $D$ is an eternal dominating set of $G$. We show this by mimicking the moves of $D'$ in $G$. If the attack occurs on $v_i$, and the guard had moved in the original strategy for $D'$ from $v_i$ to $v_i^1$, then we do not need to take any action. For any other attack on some $v_i$, the guard defending the attack for $D'$ can defend the attack in the modified strategy as well.  
\end{proof}

Since the \textsc{Eternal Dominating Set} is known to be \textsf{NP}-hard for general graphs \cite{virgile2024mobile}, the \textsf{NP}-hardness of the \textsc{Eternal Dominating Set} on $GP_3$-graphs follows directly from Lemma~\ref{lem_NP-eternal}. 

It is known that \textsc{Eternal Dominating Set} is polynomial time solvable for perfect graphs \cite{goddard2005eternal}. Thus, as a corollary, we have:
  \begin{corollary}
    \textsc{Eternal Dominating Set} is polynomial time solvable for chordal graphs. 
 \end{corollary}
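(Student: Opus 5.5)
The plan is to combine two facts: chordal graphs are perfect, and \textsc{Eternal Dominating Set} can be solved in polynomial time on perfect graphs~\cite{goddard2005eternal}. The first fact is classical: every induced subgraph of a chordal graph is chordal, so it has a perfect elimination ordering. Ordering the vertices greedily along this elimination ordering colors each induced subgraph with $\omega$ colors. Hence $\chi(H)=\omega(H)$ for every induced subgraph $H$, which is exactly perfection.

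Next I would recall why the result of~\cite{goddard2005eternal} applies. For every graph, $\alpha(G)\le \gamma^\infty(G)\le \theta(G)$, where $\theta$ denotes the clique covering number.
\begin{itemize}
\item \emph{Lower bound.} If the guards are fewer than $\alpha(G)$, the attacker repeatedly attacks vertices of a maximum independent set. Each guard can occupy at most one vertex of the set, and one-guard moves never bring a second guard onto a vertex already dominated by a stationary guard. So the attacker eventually reaches a configuration where some vertex of the independent set is undefended.
\item \emph{Upper bound.} Place one guard in each clique of a minimum clique cover. An attack on a vertex is answered by the guard of a clique containing that vertex, moving within its clique. Each clique keeps exactly one guard, so the configuration is always dominating.
\end{itemize}

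For a perfect graph $G$, the complement $\overline{G}$ is also perfect by the Perfect Graph Theorem. Therefore $\theta(G)=\chi(\overline{G})=\omega(\overline{G})=\alpha(G)$, and so $\gamma^\infty(G)=\alpha(G)$.

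Finally, $\alpha(G)$ is computable in polynomial time. For general perfect graphs this follows from the Lovász theta function via the ellipsoid method. For chordal graphs a simple greedy algorithm suffices: repeatedly take a simplicial vertex and delete its closed neighbourhood. This runs in linear time after a perfect elimination ordering is found by lexicographic BFS. So to decide whether $\gamma^\infty(G)\le k$, compute $\alpha(G)$ and compare it with $k$.

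No step is a genuine obstacle, since every ingredient is standard. The only point needing care is citing the Perfect Graph Theorem, so that $\alpha=\theta$ holds. For chordal graphs one could instead note directly that the greedy simplicial construction yields an independent set and a clique cover of equal size.
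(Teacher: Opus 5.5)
Your proposal is correct and takes the same route as the paper, which observes that chordal graphs are perfect and invokes the polynomial-time result of Goddard et al.\ \cite{goddard2005eternal} for perfect graphs; that result rests on $\alpha(G)\le\gamma^\infty(G)\le\theta(G)$ with $\alpha=\theta$ computable, exactly as you reconstruct. Two minor corrections: the greedy colouring must follow the \emph{reverse} of the perfect elimination ordering (forward greedy can use three colours on $P_4$), and your justification of $\alpha\le\gamma^\infty$ should instead say that a guard answering an attack on an unoccupied vertex of the independent set must come from outside that set, so guards on the set never leave and their number strictly grows until, with fewer than $\alpha$ guards, some vertex of the set is left undominated.
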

However, for the chordal graphs, the other two problems are \textsf{NP}-hard. That is, \textsc{Dominating Set} is \textsf{NP}-complete for chordal graphs \cite{booth1982dominating}. Moreover, it is also known that $m$-\textsc{Eternal Dominating Set} is \textsf{NP}-complete for Hamiltonian split graphs \cite{macgillivrayeternal}. Since Hamiltonian split graphs form a subclass of chordal graphs, we obtain the following corollary.
 \begin{corollary}
     $m$-\textsc{Eternal Dominating Set} is \textsf{NP}-hard for chordal graphs.
 \end{corollary}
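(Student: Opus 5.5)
The corollary follows by class inclusion, so the plan is to check that Hamiltonian split graphs sit inside the chordal class and then transfer hardness along the identity map. First, every split graph is chordal. Take a split partition $V=C\cup I$ and suppose $v_1v_2\cdots v_\ell v_1$ is a cycle with $\ell\ge 4$. Since $I$ is independent, no two consecutive vertices of the cycle lie in $I$, so the cycle contains at least $\lceil \ell/2\rceil\ge 2$ vertices of $C$. If some two vertices of $C$ on the cycle are non-consecutive, the edge between them (present because $C$ is a clique) is a chord. Otherwise the cycle meets $C$ in exactly two consecutive vertices, and the remaining $\ell-2\ge 2$ vertices are all in $I$. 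Two of those are then consecutive on the cycle, which contradicts independence of $I$. Hence split graphs, and in particular Hamiltonian split graphs, are chordal, as the paper already notes in the preliminaries.

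Next, invoke the result of \cite{macgillivrayeternal} that $m$-\textsc{Eternal Dominating Set} is \textsf{NP}-complete on Hamiltonian split graphs. The reduction to chordal graphs is the identity: an instance $(G,k)$ with $G$ a Hamiltonian split graph is already an instance of the problem on chordal graphs with the same answer. Any polynomial-time algorithm for chordal graphs would therefore decide the problem on Hamiltonian split graphs, so \textsf{NP}-hardness transfers. Membership in \textsf{NP}, also from \cite{macgillivrayeternal}, gives \textsf{NP}-completeness on chordal graphs as well if that stronger statement is wanted.

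There is no real obstacle. The only point needing a moment's care is the chordality of split graphs, which the argument in the first paragraph settles.
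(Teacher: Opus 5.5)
Your proposal is correct and follows the paper's argument: \textsf{NP}-completeness on Hamiltonian split graphs, due to MacGillivray et al., transfers to chordal graphs by class inclusion. Your explicit check that split graphs are chordal is a valid detail that the paper takes for granted.
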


 \subsection{\textsc{Dominating Set} vs. $m$-\textsc{Eternal Dominating Set}}
\label{sec:Complexity Difference}
\vspace{-5pt}
In this subsection, we identify a graph class, where \textsc{Dominating Set} is \textsf{NP}-hard while $m$-\textsc{Eternal Dominating Set} is solvable in polynomial time.

Let $P_5$ denote a path graph on five vertices. For $1 \le i \le n$, let $\{P_5^i\mid 1 \le i \le n\}$ be a collection of $n$ distinct 5-vertex paths, where $P_5^i=v_i^0v_i^1v_i^2v_i^3v_i^4$. Now, we formally define the graph class $GP_5$-graph as follows.

\begin{definition}($GP_5$-graph)
 Let $G=(V, E)$ be any graph. A graph $G_P=(V_P,E_P)$ is called a $GP_5$-graph if it can be constructed from $G$ by attaching $v_i^2$ (the middle vertex of the path $P_5^i$), to the corresponding vertex $v_i$ in $V$, $i\in [n]$. 
 
 $GP_5$-graph class consists of all the $GP_5$-graphs.
\end{definition}
Note that for a $GP_5$-graph $G_P=(V_P,E_P)$, constructed from the graph $G=(V,E)$, we have $|V_P|=6|V|$ and $|E_P|=5|V|+|E|$.

We first show that $m$-\textsc{Eternal Dominating Set} can be computed in linear time on $GP_5$ graphs.

\begin{lemma}\label{GP_5}
    For a $GP_5$ graph $G_P,$ $\gamma_m^\infty(G_P)=\frac{|V_P|}{2}$. Hence,  $m$-\textsc{Eternal Dominating Set} can be computed in linear time for $GP_5$ graphs.
\end{lemma}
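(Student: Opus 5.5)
The claim is $\gamma_m^\infty(G_P)=3|V|$, since $|V_P|=6|V|$. I will prove matching upper and lower bounds, each argued separately for every attached gadget $P_5^i$ together with its host vertex $v_i$.

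\textbf{Upper bound.} I take the initial configuration $D=\{v_i^0,v_i^2,v_i^4 \mid i\in[n]\}$, which has $3|V|$ vertices. The defence keeps the guards of the $i$-th gadget inside $\{v_i,v_i^0,\dots,v_i^4\}$, and each gadget is played independently. The structure is a spider centred at $v_i^2$ with legs $v_i^1v_i^0$, $v_i^3v_i^4$ and $v_i$. Call a configuration on this spider \emph{good} if it dominates the five path vertices and has a guard on $v_i^2$ or $v_i$.

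The good configurations I use are $\{v_i^0,v_i^2,v_i^4\}$, $\{v_i^1,v_i^2,v_i^4\}$, $\{v_i^0,v_i^2,v_i^3\}$, $\{v_i^0,v_i^4,v_i\}$, $\{v_i^1,v_i^3,v_i\}$ and $\{v_i^1,v_i^2,v_i^3\}$. I will check directly that each of these dominates the path vertices. For every attack on a vertex of the spider, I will exhibit a simultaneous move of the three guards, each along an edge or staying put, that reaches another good configuration containing the attacked vertex. For example, an attack on $v_i$ from the starting configuration is answered by $v_i^2\to v_i$. Since every $v_i$ then carries a guard or is adjacent to a guarded $v_i^2$, every vertex of $G$ is dominated at all times. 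Attacks on other gadgets move no guards in gadget $i$. Hence the union of the gadget configurations is always a dominating set of $G_P$.

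I want to keep this case check minimal. The cleanest option is to use the known tree result, $\gamma_m^\infty$ for trees \cite{klostermeyer2009eternal}, which gives $\gamma_m^\infty=3$ for this 6-vertex spider. Then the observation above, that the guard on $v_i$ or $v_i^2$ suffices to dominate $G$, finishes the upper bound.

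\textbf{Lower bound.} I will show that every $m$-eternal dominating set $D$ satisfies $|D\cap\{v_i^0,\dots,v_i^4\}|\ge 3$ for every $i$ at all times. This is the main obstacle, because guards from $G$ can enter the gadget through $v_i\to v_i^2$. The leaves $v_i^0$ and $v_i^4$ force a guard in $\{v_i^0,v_i^1\}$ and a guard in $\{v_i^3,v_i^4\}$, so the gadget holds at least 2 guards. Suppose exactly two, say $a\in\{v_i^0,v_i^1\}$ and $b\in\{v_i^3,v_i^4\}$, with $v_i^2$ unguarded. Domination of $v_i^2$ then forces $a=v_i^1$, $b=v_i^3$, or a guard on $v_i$.

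I will attack a leaf whose neighbour is unguarded, for instance $v_i^0$ when $a=v_i^0$ would not help. In that case the guard at $v_i^1$ must move to $v_i^0$, and in the next configuration either $v_i^1$ or the other leaf can be attacked so that some path vertex loses domination. This holds because at most one extra guard can enter, via $v_i^2$ from $v_i$. In the configuration with a guard on $v_i$, an attack on $v_i^1$ (or $v_i^3$, whichever is unguarded) is answered only by $v_i\to v_i^2$ or $v_i^0\to v_i^1$. I will trace the remaining cases to show that, within two attacks, some vertex of the path goes undominated.

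The resulting counting claim is that after any attack sequence the path $P_5^i$ plus $v_i$ holds at least 3 guards. Summing gives $|D|\ge 3|V|$ only if these guards are distinct across gadgets, so I will need a cleaner statement with no double counting. The statement to aim for is that $P_5^i$ alone holds at least 3 guards in every configuration reachable under the strategy, or that the initial $D$ can be adversarially forced into such a configuration. I would try first to show that attacking $v_i^0$, then $v_i^4$, then $v_i^2$ forces three guards into $P_5^i$ simultaneously, since only one guard can enter per move through $v_i^2$.

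Once the lower bound is in place, linear time is immediate, since the answer is $|V_P|/2$ and recognising the construction is trivial.
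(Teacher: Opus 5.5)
Your upper bound is fine. Your six configurations are closed under attacks on the spider, so each gadget can be defended independently. The paper gets the same bound faster by placing one guard on each edge of the perfect matching $\{v_iv_i^2, v_i^0v_i^1, v_i^3v_i^4\mid i\in[n]\}$.

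The gap is in the lower bound. The invariant you set out to prove, $|D\cap V(P_5^i)|\ge 3$ for every $m$-eternal dominating set, is false, and your own strategy refutes it. The configurations $\{v_i^0,v_i^4,v_i\}$ and $\{v_i^1,v_i^3,v_i\}$ are among your good ones, so $\{v_i^0,v_i^4,v_i\mid i\in[n]\}$ is an $m$-eternal dominating set with only two guards on every path. The case you plan to ``trace'' (a guard on $v_i$ and two guards on the path) therefore never ends with an undominated vertex. The leaf attacks force nothing either. From $\{v_i^1,v_i^3\}$, the attack on $v_i^0$ is answered by $v_i^1\to v_i^0$. The next attack, on $v_i^4$, is answered by $v_i^3\to v_i^4$ together with $v_i^0\to v_i^1$. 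As written, no lower bound is established.

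The missing observation is that the double counting you worry about does not occur. The sets $S_i=\{v_i\}\cup V(P_5^i)$ are pairwise disjoint and partition $V_P$ (this is exactly why $|V_P|=6|V|$). So showing that each $S_i$ holds at least three guards, and summing over $i$, gives $|D|\ge 3|V|=|V_P|/2$ directly.

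That claim needs a single attack. Suppose $|D\cap S_i|\le 2$. Domination of the leaves $v_i^0$ and $v_i^4$ forces exactly one guard in each of $\{v_i^0,v_i^1\}$ and $\{v_i^3,v_i^4\}$, and none on $v_i$ or $v_i^2$. Now attack $v_i^2$. Its only guarded neighbours lie in $\{v_i^1,v_i^3\}$. Suppose $v_i^1\to v_i^2$. A guard ending in $\{v_i^0,v_i^1\}$ must start in $\{v_i^0,v_i^1,v_i^2\}$, whose only guard just left, so $v_i^0$ is undominated. The case $v_i^3\to v_i^2$ is symmetric. This is the paper's argument; note that it too writes $P_5^j$ where $S_j$ is meant.
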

\begin{proof}
    Let $G$ be any graph and $G_P$ be the corresponding $GP_5$ graph. Let $D$ be an $m$-eternal dominating set of $G_P$. As $D$ is a dominating set, it follows that $|\{v_i^0,v_i^1\}\cap D|\ge 1$ and $|\{v_i^3,v_i^4\}\cap D|\ge 1$, for all $i\in [n]$. If for each $P_5^i, i\in [n]$, $|D\cap P_5^i|\ge 3$ then $|D|\ge 3|V|=\frac{|V_P|}{2}$. Otherwise, let there exists a $P_5^j$ for some $j\in [n]$ such that $|D\cap P_5^j|\le 2$. Also, we have $|D\cap P_5^j|\ge 2$ because $|\{v_i^0,v_i^1\}\cap D|\ge 1$ and $|\{v_i^3,v_i^4\}\cap D|\ge 1$, for all $i\in [n]$. This implies that $|D\cap P_5^j|= 2$. Now, if $|D\cap \{v_j^1,v_j^3\}|=0$, then attack on $v_j^2$ can only be defended by $v_j$, implying $|D\cap P_5^j|\ge 3$; else let $|D\cap \{v_j^1,v_j^3\}|\le 1$. Without loss of generality, let $D\cap \{v_j^1,v_j^3\}=v_j^1$. Observe that the attack on $v_j^2$ can not be defended because if the only neighbor $v_j^1\in D$ moves to $v_j^2$, the vertex $v_j^0$ remains undominated. Thus, $|D\cap \{v_j,v_j^2\}|\ge 1$, implying $|D\cap P_5^j|\ge 3$. Hence, $|D|\ge 3|V|=\frac{|V_P|}{2}$. 
    
    Conversely, let $\hat{D}=\{v_i^1,v_i^2,v_i^3\mid 1\le i\le n\}$. Since $\{v_iv_i^2,v_i^0v_i^1,v_i^3v_i^4\mid i\in [n]\}$ forms a perfect matching of $G_P$, $v_i^1,v_i^2,v_i^3$ defends the attack on vertices of the respective edges $v_i^0v_i^1,v_iv_i^2,v_i^3v_i^4$. Thus, $\hat{D}$ is an $m$-eternal dominating set of $G_P$. Therefore, $\gamma_m^\infty(G_P)\le 3|V|=\frac{|V_P|}{2}$. 
\end{proof}

Next, we show that \textsc{Dominating Set} is \textsf{NP}-hard for $GP_5$ graphs. To do this, we prove that the \textsc{Dominating Set} for a general graph $G$ is efficiently solvable if and only if it is efficiently solvable for the corresponding $GP_5$-graph.

\begin{lemma}\label{lem_GP_5}
    Let $G = (V, E)$ be a graph of order $n$ and let $G_P = (V_P,E_P)$ be the corresponding $GP_5$-graph. Then for any integer $k \le n$, the graph $G$ has a dominating set of cardinality at most $k$ if and only if $G_P$ has a dominating set of cardinality at most $k + 2n$.
\end{lemma}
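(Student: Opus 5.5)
The plan is to prove the two directions separately. Throughout, for each $i\in[n]$ write $Q_i=\{v_i^0,v_i^1,v_i^2,v_i^3,v_i^4\}$ for the vertex set of the attached path $P_5^i$. Recall that $v_i^2$ is the only vertex of $Q_i$ adjacent to $v_i\in V$.

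\emph{Forward direction.} Let $D\subseteq V$ be a dominating set of $G$ with $|D|\le k$. Set $D_P=D\cup\{v_i^1,v_i^3\mid i\in[n]\}$, so that $|D_P|\le k+2n$. Every vertex of $Q_i$ lies in $N[v_i^1]\cup N[v_i^3]$, because $v_i^2$ is adjacent to both. Every vertex of $V$ is dominated by $D$ inside $G$, which is an induced subgraph of $G_P$. Hence $D_P$ dominates $G_P$.

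\emph{Converse direction.} Let $D_P$ be a dominating set of $G_P$ with $|D_P|\le k+2n$. We normalize $D_P$ path by path without increasing its size.
\begin{itemize}
\item The pendant vertex $v_i^0$ forces $D_P\cap\{v_i^0,v_i^1\}\neq\emptyset$, and $v_i^4$ forces $D_P\cap\{v_i^3,v_i^4\}\neq\emptyset$. So $|D_P\cap Q_i|\ge 2$ for every $i$.
\item Replace $D_P\cap Q_i$ by $\{v_i^1,v_i^3\}$. If $|D_P\cap Q_i|=2$, this does not change the size. If $|D_P\cap Q_i|\ge 3$, at least one element is freed; in that case also add $v_i$ to the set, which keeps the size at most the original.
\item Check that the normalized set $D'$ is still dominating. $\{v_i^1,v_i^3\}$ dominates all of $Q_i$. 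The only vertex outside $Q_i$ that could have relied on $Q_i$ is $v_i$, through $v_i^2$.
\end{itemize}

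The key point is the case $|D_P\cap Q_i|=2$. Here the two chosen vertices are one from $\{v_i^0,v_i^1\}$ and one from $\{v_i^3,v_i^4\}$. Therefore $v_i^2\notin D_P$, so $v_i$ was not dominated via the path and was already dominated inside $V$ by $D_P\cap V$. In the case $|D_P\cap Q_i|\ge3$, the added $v_i$ handles domination of $v_i$.

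Let $D=D'\cap V$. Then $|D|\le |D'|-2n\le k$. Every vertex $v_i\in V$ is dominated by $D'$ through a neighbour in $V$ or by itself, since $v_i^2\notin D'$. Hence $D$ dominates $G$.

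The main subtlety is the bookkeeping in this normalization. The swap must never lose domination of $v_i$, and one freed vertex from a path with three or more guards must be charged to $v_i$. A single surplus vertex suffices for this, so the count $|D|\le k$ follows.
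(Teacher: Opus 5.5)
Your proof is correct and takes essentially the same approach as the paper. The forward direction is identical, and the converse uses the same two-vertices-per-path lower bound plus a local exchange that keeps $v_i^2$ out, followed by restriction to $V$. The only cosmetic difference is that the paper just swaps $v_i^2$ for $v_i$ whenever $v_i^2$ is chosen, whereas you replace the whole trace on $Q_i$ by $\{v_i^1,v_i^3\}$ and charge any surplus to $v_i$.
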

\begin{proof}
    Let $D\subseteq V$ be a dominating set of $G$ such that $|D|\le k$. Define $D'=D\cup \{v_i^1,v_i^3\mid i\in [n]\}$. It is easy to verify that $D'$ is a dominating set of $G_P$ and $|D'|\le k+2n$.

    On the other hand, let $D'\subseteq V_P$ be a dominating set of $G_P$ such that $|D'|\le k+2n$. Note that, for each $i\in [n]$, $|D'\cap \{v_i^0,v_i^1\}|\ge 1$ and $|D'\cap \{v_i^3,v_i^4\}|\ge 1$. If $v_i^2\in D'$ then let $D^*=D'\setminus \{v_i^2\} \cup \{v_i\}$. Now, we assume that $v_i^2\notin D^*$ for any $i\in [n]$. Define $D=D^*\cap V$. Then $|D|\le k$, as $|D\cap \{v_i^0, v_i^1, v_i^2, v_i^3, v_i^4\}|\ge 2$. Since, $v_i^2\notin D^*$ for any $i\in [n]$,  $D$ is a dominating set of $G$.
\end{proof}

Since the \textsc{Dominating Set} is known to be \textsf{NP}-hard for general graphs \cite{bertossi1984dominating}, the \textsf{NP}-hardness of the \textsc{Dominating Set} on $GP_5$-graphs follows directly from Lemma~\ref{lem_GP_5}.

\begin{definition}($GP_2$-graph)
 Let $G=(V, E)$ be any graph. A graph $G'=(V',E')$ is called a $GP_2$-graph if it can be constructed from $G$ by attaching $n$ distinct pendant edges $\{v_i^1v_i^2\mid i\in [n]\}$ such that $v_i$ is adjacent to $v_i^1$ for each $i\in [n]$. 
 
 $GP_2$-graph class consists of all the $GP_2$-graphs.
 \end{definition}

 \begin{observation}
      For a $GP_2$-graph $G'$, $\gamma(G')=\frac{|V'|}{3}$.
 \end{observation}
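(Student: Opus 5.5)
We prove the two inequalities $\gamma(G')\ge |V'|/3$ and $\gamma(G')\le |V'|/3$, using that $|V'|=3|V|=3n$. Throughout, the attached gadget at $v_i$ is the path $v_i v_i^1 v_i^2$. Here $v_i^2$ is a pendant vertex whose only neighbour is $v_i^1$, and $v_i^1$ is adjacent only to $v_i$ and $v_i^2$.

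For the lower bound, let $D'$ be any dominating set of $G'$. The vertex $v_i^2$ must be dominated, and $N[v_i^2]=\{v_i^1,v_i^2\}$. Hence $|D'\cap\{v_i^1,v_i^2\}|\ge 1$ for every $i\in[n]$. These $n$ pairs are pairwise disjoint, so $|D'|\ge n=|V'|/3$. Equivalently, one can apply Observation~\ref{maximal_independnet} to $S=\{v_i^2\mid i\in[n]\}$, whose members have pairwise disjoint neighbourhoods.

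For the upper bound, take $D'=\{v_i^1\mid i\in[n]\}$. Each $v_i^1$ dominates $v_i$, $v_i^1$ and $v_i^2$. Since every vertex of $V'$ lies in some triple $\{v_i,v_i^1,v_i^2\}$, the set $D'$ is a dominating set of size $n=|V'|/3$.

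Combining the two bounds gives $\gamma(G')=|V'|/3$. The only point needing care is that each pendant vertex forces its own guard inside its own gadget. This is immediate because the gadgets $\{v_i^1,v_i^2\}$ are pairwise disjoint, so there is no real obstacle.
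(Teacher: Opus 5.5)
Your proof is correct: each pendant vertex $v_i^2$ forces a vertex of any dominating set into its own pair $\{v_i^1,v_i^2\}$, these pairs are disjoint, and $\{v_i^1\mid i\in[n]\}$ dominates $G'$, so $\gamma(G')=n=|V'|/3$. The paper states this as an observation without proof, and your argument is the same one it uses for the domination bound on $GP_3$-graphs in Lemma~\ref{GP_3gamma}: at least one forced vertex per gadget, and one vertex per gadget suffices.
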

    
Now, we conjecture the following.
 \begin{conjecture}\label{lem_GP_2}
    Let $G = (V, E)$ be a graph of order $n$ and let $G' = (V',E')$ be the corresponding $GP_2$-graph. Then for any integer $k \le n$, the graph $G$ has an $m$-eternal dominating set of cardinality at most $k$ if and only if $G'$ has an $m$-eternal dominating set of cardinality at most $k + n$.
\end{conjecture}

The above Conjecture \ref{lem_GP_2} implies that $m$-\textsc{Eternal Dominating Set} is \textsf{NP}-hard for $GP_2$-graphs.

The Table \ref{tab:complexity_diff} summarizes the results obtained in this section.

\begin{table}[h!]
    \centering
    \begin{tabular}{|c|c|c|c|}
    \hline
         & Problem $1$ & Problem $2$ & Problem $3$ \\
         \hline
       Problem $1$  & $-$ & Chordal graph & $GP_5$-graph\\
         \hline
       Problem $2$  & $GP_3$-graph & $-$ & $GP_3$-graph\\
         \hline
         Problem $3$  & $?$ & Chordal graph & $-$\\
         \hline
    \end{tabular}
    \caption{The cell $(i,j)$ denote a graph class where the $i^{th}$ problem is harder and $j^{th}$ problem is efficiently solvable. Here, Problem $1$ refers to \textsc{Dominating Set}, Problem $2$ refers to \textsc{Eternal Dominating Set}, and Problem $3$ refers to \textsc{Eternal Dominating Set}. `$?$' denotes whether there exists a graph class (where $i^{th}$ problem is harder and $j^{th}$ problem is efficiently solvable) is still unknown.}
    \label{tab:complexity_diff}
\end{table}
\section{Complexity Dichotomy for Split Graphs}
\label{sec:$K_{1,t}$-free-SG}
In this section, we present a dichotomy result for split graphs. Note that every split graph is $K_{1,t}$-free for some $t \ge 1$. Since the $m$-eternal domination number of a disconnected graph equals the sum over its connected components, we focus on connected graphs. It is easy to see that $\gamma_m^\infty(K_{1,n}) = 2$ for $n \ge 1$, and $\gamma_m^\infty(G) = 1$ for a complete graph $G$. Thus, we consider only connected graphs that are neither complete nor stars for the rest of this section.  

\subsection{$K_{1,3}$-free Split Graphs}
\vspace{-5 pt}
We now present a characterization of the $m$-eternal domination number for connected $K_{1,3}$-free split graphs. This result also leads to a linear-time algorithm for computing this parameter.
\begin{theorem}\label{alg_K_{1,3}}
Let $G = (C \cup I, E)$ be a connected $K_{1,3}$-free split graph, where $C$ is a clique and $I$ is an independent set. Then, the $m$-eternal domination number of $G$, $\gamma_m^\infty(G)$, is given by
$\gamma_m^\infty(G) = 
\begin{cases}
|I| + 1, & \text{if } \cup_{v \in I} N(v) \neq C; \\
|I|, & \text{if } \cup_{v \in I} N(v) = C.
\end{cases}$
\end{theorem}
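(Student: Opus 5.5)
The plan is to split according to Theorem~\ref{union}. Since $G$ is connected, claw-free, and neither complete nor a star, either $\Delta_G^I\le 1$, or $\Delta_G^I=2$, in which case $|I|\le 3$ by Lemma~\ref{delta}. In every case the upper bound $|I|+1$ is free: by Observation~\ref{o1} and Observation~\ref{ind_number_splitgraph}, $\gamma_m^\infty(G)\le\alpha(G)\le |I|+1$. What remains is the lower bound in each case and a strategy with $|I|$ guards when $\bigcup_{v\in I}N(v)=C$.

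\emph{Case $\Delta_G^I\le 1$.} Here the sets $N(v)$, $v\in I$, are pairwise disjoint, and so are the closed neighbourhoods $N[v]$. By Observation~\ref{maximal_independnet}, $\gamma_m^\infty(G)\ge\gamma(G)\ge |I|$.

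If $\bigcup_{v\in I}N(v)=C$, the sets $N[v]$ partition $V$. I would assign one guard to each class $N[v]$, start it on some $c_v\in N(v)$ (nonempty by connectivity), and keep two invariants: each guard stays inside its own class, and at most one guard sits in $I$.
\begin{itemize}
\item If $v\in I$ is attacked, its guard moves to $v$, and any other guard currently on some $u\in I$ moves back into $N(u)$.
\item If $c\in C$ is attacked, with $c\in N(v)$, the guard of class $v$ moves to $c$; this is legal because that guard is on $v$ or on a vertex of the clique $C$.
\end{itemize}
Domination is preserved: each $v\in I$ is dominated by its own guard, and $C$ is dominated because $|I|\ge 2$ (since $G$ is not a star) leaves at least one guard in $C$.

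If instead some $w\in C\setminus\bigcup N(v)$ exists, suppose only $|I|$ guards are used. Domination forces exactly one guard in each $N[v]$, and no guard on $w$. An attack on $w$ must be answered by some guard leaving its class $N[v]$. Since the classes are disjoint, no other guard can enter $N[v]$, and $w\notin N[v]$, so $v$ becomes undominated. This contradiction gives $\gamma_m^\infty(G)=|I|+1$.

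\emph{Case $\Delta_G^I=2$, so $|I|\in\{2,3\}$.} I would do a direct small case analysis, using Observation~\ref{K1,3} and the condition of Theorem~\ref{union} that every $u\in C$ meets $N^I(v)$ for each $v$ with $d^I(v)=2$.
\begin{itemize}
\item For $|I|=2$ there is a vertex $v\in C$ adjacent to both vertices of $I$. Then $v$ is universal, so $\gamma_m^\infty(G)=2=|I|$ by Observation~\ref{universal}. This must be reconciled with the formula: Observation~\ref{K1,3} gives $N(y_1)\cup N(y_2)=C$, so the formula also predicts $|I|$.
\item For $|I|=3$ I expect to show that the union condition holds. I would then exhibit three guards (one per $I$-vertex class, adapting the invariant strategy above) and show two guards never suffice. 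For the latter, there is no universal vertex, since $\Delta^I=2<3$, so Observation~\ref{universal} rules out $\gamma_m^\infty=2$.
\item If in some configuration the union fails, the lower bound $|I|+1$ would again come from an attack on an uncovered clique vertex.
\end{itemize}

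This $\Delta^I=2$ subcase is where I expect the main obstacle. The neighbourhoods overlap there, so the clean partition argument breaks, and one must carefully enumerate the few possible structures allowed by Theorem~\ref{union} to confirm the formula.
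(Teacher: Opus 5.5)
\textbf{Gap: the case $\Delta_G^I=2$, $|I|=3$ is not proved.} You leave it as a plan (``I expect to show'', ``I would then exhibit''). Both missing pieces are short, and the first uses a fact you already cite.

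(i) The condition you quote from Theorem~\ref{union} says every $u\in C$ meets $N_G^I(v)$ for a vertex $v$ with $d_G^I(v)=2$. In particular, every clique vertex has a neighbour in $I$. So $\bigcup_{v\in I}N(v)=C$ holds automatically whenever $\Delta_G^I=2$; the paper uses this implicitly to get $\Delta_G^I=1$ in its Case~1. Your third bullet (``if the union fails'') is therefore vacuous. That is fortunate, because the disjoint-classes argument you propose to reuse there does not survive once neighbourhoods overlap.

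(ii) Whenever $\bigcup_{v\in I}N(v)=C$, one has $\alpha(G)=|I|$. An independent set contains at most one vertex $c\in C$, and then it must omit the nonempty set $N_G^I(c)$. Observation~\ref{o1} then gives $\gamma_m^\infty(G)\le|I|$ with no explicit strategy, so the ``overlapping classes'' obstacle you anticipate never has to be faced. Alternatively, as the paper does, the cliques $N[v_i]$ cover $V$; after shrinking them to a partition into cliques, one guard per clique suffices.

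Combined with your correct lower bound (no universal vertex, so Observation~\ref{universal} rules out two guards), this gives $\gamma_m^\infty(G)=3=|I|$. The same $\alpha$-bound is what the paper uses in its $\Delta_G^I=1$, $\bigcup N(v)=C$ subcase. Your explicit strategy there is also valid, so that case is fine either way.

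\textbf{Smaller slips.} In the lower bound for $\Delta_G^I\le 1$, $\bigcup N(v)\neq C$, the sentence ``no other guard can enter $N[v]$'' is false. A guard on $N(u)\subseteq C$ can step along a clique edge into $N(v)$. Replace it by a count: after the defence $w$ is occupied, and each of the $|I|$ pairwise disjoint sets $N[v]$, none of which contains $w$, must still contain a guard. That forces at least $|I|+1$ guards.

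In your strategy for $\Delta_G^I\le 1$, $\bigcup N(v)=C$, the invariant ``at most one guard in $I$'' is unnecessary. Each $N[v]$ is a clique, so its guard dominates it wherever it stands. Also, the reason $|I|\ge 2$ there is that $G$ is not complete, not that it is not a star.
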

\begin{proof}
\begin{itemize}
    \item {\emph{Case 1}}: ($\cup_{v \in I} N(v) \neq C$)
    Since $G$ is a $K_{1,3}$-free split graph and $\cup_{v \in I} N(v) \neq C$, we have $\Delta_G^I=1$. Thus, for any pair of vertices $v_1,v_2\in I$, $N(v_1)\cap N(v_2)=\emptyset$. This implies that $\gamma(G)\ge |I|$ (from Observation \ref{maximal_independnet}) and $\alpha(G)\le |I|+1$ (from Observation \ref{ind_number_splitgraph}). Thus, we have $|I|\le \gamma_m^\infty(G)\le |I|+1$ (by Observation \ref{o1}). To show $\gamma_m^\infty(G)>|I|$, i.e., with $|I|$ number of guards, it is not always possible to defend all the attacks on the graph $G$. Assume for contradiction that there exists a minimal $m$-eternal dominating set $D$ such that $|D|=|I|$. 
    Since, $\cup_{v \in I} N(v) \neq C$, there exist a vertex (say $x$) in $C\setminus N(I)$ and as $D$ is also a dominating set, $D\cap N(I)\neq \emptyset$. Then an attack at $x\in C\setminus N(I)$ can not be defended by moving any guard from $C$ without leaving some vertex in $I$ undominated. Hence, $\gamma_m^\infty(G)=|I|+1$.

    \item {\emph{Case 2}}: ($\cup_{v \in I} N(v) = C$)
    Since $G$ is a $K_{1,3}$-free split graph and $\cup_{v \in I} N(v) = C$, we have the following two subcases. 
    \begin{itemize}
        \item {\emph{Subcase 1}}: ($\Delta_G^I=1$) Since for any pair of vertices $v_1,v_2\in I$, $N(v_1)\cap N(v_2)=\emptyset$, it implies that $\gamma(G)\ge |I|$ (from Observation \ref{maximal_independnet}). Thus, we have $\gamma_m^\infty(G)\ge |I|$ (from Observation \ref{o1}). Moreover, because $\cup_{v_i\in I} N(v_i) = C$, we have $\gamma_m^\infty(G)\le \alpha(G)\le |I|$ (from Observation \ref{ind_number_splitgraph}). Hence, $\gamma_m^\infty(G)= |I|$. 
        
        \item {\emph{Subcase 2}}: ($\Delta_G^I=2$) Since $G$ is not a complete graph, we have $|I|\ge 2$ and from Lemma \ref{delta}, we have $|I|\le 3$. If $|I|=2$, there exist a universal vertex in $C$ as $\Delta_G^I=2$. In this case, $\gamma_m^\infty(G)=2=|I|$ (by Observation \ref{universal}). Now consider the case $|I|=3$. Since $\Delta_G^I=2$, no universal vertex exists. So, by  Observation \ref{universal}, $\gamma_m^\infty(G)\ge 3$. We now show that $\gamma_m^\infty(G)\le 3$. Observe that $G[N[v_i]]$ is a complete graph for each $v_i\in I$. Thus, maintaining one vertex in each $G[N[v_i]]$, any attack on the vertices of $G$ can be defended. Hence, $\gamma_m^\infty(G)=3$.
    \end{itemize}
\end{itemize}
\end{proof}

\emph{Time Complexity:} Assume the graph $G=(C\cup I, E)$ is given as an adjacency list representation. Constructing the union $\cup_{v \in I} N(v)$ and verifying whether this union equals the clique $C$ can be done simultaneously by first initializing an array of size $|C|$ indexed by vertices of $C$ (takes $O(|V|)$ time) and then for each vertex of $I$, scan through its adjacency list and mark its neighbors in the array (takes $O(|V|+|E|)$ time), then scan the array again to check whether all vertices are marked or not (takes $O(|V|)$ time). Therefore, the overall computation can be performed in $O(|V| + |E|)$ time. Thus, the m-eternal domination number of a connected $K_{1,3}$-free split graph can be computed in linear time.

\subsection{$K_{1,4}$-free Split Graphs}
\label{sec:$K_{1,4}$-free-SG}
We now present a characterization of the $m$-eternal domination number for connected $K_{1,4}$-free split graphs. This class extends the result obtained for $K_{1,3}$-free split graphs in the previous subsection. Furthermore, our characterization leads directly to a polynomial time algorithm for computing this parameter.

\begin{observation}\label{4to3split}
Every $K_{1,4}$-free $1$-split graph is also a $K_{1,3}$-free $1$-split graph.
\end{observation}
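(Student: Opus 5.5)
The plan is to show directly that a $1$-split graph cannot contain an induced claw $K_{1,3}$. This is a stronger statement that does not use the $K_{1,4}$-freeness hypothesis at all. The $1$-split property ($\Delta_G^I=1$) is part of the hypothesis and carries over unchanged, so the only thing to verify is $K_{1,3}$-freeness.

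Let $G=(C\cup I,E)$ be a split graph with $\Delta_G^I=1$. Suppose, for contradiction, that $G$ has an induced $K_{1,3}$ with center $c$ and pairwise nonadjacent leaves $a_1,a_2,a_3$. Since $C$ is a clique, at most one of the leaves lies in $C$, so at least two leaves, say $a_1,a_2$, lie in $I$. I split on where the center lies.

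\emph{Case $c\in I$.} Then $c$ has no neighbors in $I$, because $I$ is independent. But $a_1,a_2\in I$ are neighbors of $c$, a contradiction.

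\emph{Case $c\in C$.} Then $a_1,a_2\in N_G^I(c)$, so $d_G^I(c)\ge 2>\Delta_G^I=1$, again a contradiction.

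Hence $G$ is $K_{1,3}$-free, and in particular every $K_{1,4}$-free $1$-split graph is a $K_{1,3}$-free $1$-split graph. The whole argument is elementary counting of leaves in $I$. The only point that needs any care is the observation that pairwise nonadjacency of the leaves forces at least two of them into $I$; once that is noted, both cases are immediate.
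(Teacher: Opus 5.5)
Your proof is correct. Because the three leaves are pairwise nonadjacent and $C$ is a clique, at least two leaves lie in $I$. The center then cannot be in $I$, by independence, and cannot be in $C$, since that would give $d_G^I(c)\ge 2>\Delta_G^I$.

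The paper states this observation without proof. Presumably it is meant to be read off from the cited characterization in Theorem~\ref{union}: a connected split graph with $\Delta_G^I\le 1$ is claw-free. Your direct argument is essentially the $\Delta_G^I\le 1$ part of that result. It is self-contained, does not need the connectedness assumption of Theorem~\ref{union}, and makes explicit that the $K_{1,4}$-freeness hypothesis is redundant, since every $1$-split graph is already $K_{1,3}$-free.
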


\begin{definition}\cite{renjith2020steiner}\label{labeled graph} (labeled graph)
For an $l$-split graph $H=(\hat{C}\cup \hat{I},\hat{E}), 0\leq l\leq 2$, the labeled graph $M=(V_M,E_M)$ is defined such that $V_M=\hat{I}$ and $E_M=\{uv\mid u,v\in \hat{I}, N_H(u)\cap N_H(v) \neq \emptyset\}$. Each edge $uv$ is labeled by a vertex $w_{uv} \in N_H(u)\cap N_H(v)$, which is referred to as its labeled clique vertex.     
\end{definition} 

\begin{theorem}\label{alg_K_{1,4}-2SG}
Let $H=(\hat{C}\cup \hat{I},\hat{E})$ be a $K_{1,4}$-free $2$-split graph, and let $M$ be its labeled graph. Denote $\mathcal{M}$, a maximum matching of $M$, and let $L$ be the set of vertices in $\hat{C}$ corresponding to the edges of $\mathcal{M}$. Construct the reduced graph $H'=(\hat{C}\cup I',E')$, where $I'=\hat{I}\setminus N_{H}^{\hat{I}}(L), 
E'=\hat{E}\setminus \{l_v i_v \mid l_v\in L,\, i_v\in N_H^{\hat{I}}(L)\}$. Then the $m$-eternal domination number of $H$ is given by
$
\gamma_m^\infty(H)= |L|+|I'|+1$.
\end{theorem}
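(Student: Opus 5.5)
Note first that $\gamma(H)=|L|+|I'|$, and then show that one extra guard is both necessary and sufficient.

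\emph{Structure of $H$ and $M$.} Since $H$ is a $2$-split graph, every labeled clique vertex $w_{uv}$ has $N_H^{\hat I}(w_{uv})=\{u,v\}$. Hence for each edge of $\mathcal M$ the vertex in $L$ dominates exactly its two endpoints in $\hat I$, and the sets $N_H^{\hat I}(l)$, $l\in L$, are pairwise disjoint. So $I'$ consists of the vertices of $\hat I$ left unmatched by $\mathcal M$, and $|\hat I| = 2|L|+|I'|$.

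\emph{Step 1: $\gamma(H)=|L|+|I'|$.} A dominating set can be assumed to lie in $\hat C$, after replacing each chosen $I$-vertex by a clique neighbour. Every clique vertex dominates at most two vertices of $\hat I$, and it dominates two only when they are adjacent in $M$. So a dominating set of size $s$ dominating $\hat I$ gives a matching of $M$ of size $p$ with $s\ge p+(|\hat I|-2p)=|\hat I|-p$. This is minimised by a maximum matching, giving $\gamma(H)\ge |\hat I|-|\mathcal M|=|L|+|I'|$. For the upper bound take $L$ together with one clique neighbour of each vertex of $I'$ (clique neighbours exist by connectivity). This set is nonempty, so it also dominates $\hat C$.

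\emph{Step 2: the lower bound $\gamma_m^\infty(H)\ge \gamma(H)+1$.} Assume $|L|+|I'|$ guards suffice. Every configuration is then a minimum dominating set, so by Step 1 every configuration has exactly one guard ``responsible'' for each matched pair or unmatched vertex. Now attack a vertex $u\in\hat I$ in a configuration where all guards sit in $\hat C$; such a configuration exists, since a minimum dominating set moved into $\hat C$ is still minimum. If $u$'s responsible guard at $w$ covers a pair $\{u,v\}$, then after it moves to $u$ the vertex $v$ is only dominated if some other guard moves to $N(v)$. Counting again gives more than $\gamma$ guards, or forces a different maximum matching. I would make this precise by a counting argument on the configuration after the attack. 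That configuration contains $u\in\hat I$, whose closed neighbourhood is a clique containing no other $\hat I$ vertex. Removing $u$ leaves $|L|+|I'|-1$ guards that must dominate $\hat I\setminus\{u\}$ inside $\hat C$ up to moves. I would then exploit the fact that $\hat C$ must also be dominated, and use the $K_{1,4}$-free structure to find an attack that breaks every responsibility assignment. I expect this to be the main obstacle, because Theorem~\ref{alg_K_{1,3}}, Case 2 shows $\gamma_m^\infty=\gamma$ can occur (e.g.\ a claw-free $2$-split graph with $|I|=3$). The argument must use hypotheses implicit here, such as non-completeness and the existence of a vertex outside the matched structure. I would first try attacking a clique vertex not in $N(\hat I)$, or a vertex of $\hat I$ whose partner has no other guard nearby.

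\emph{Step 3: the upper bound $\gamma(H)+1$.} Place guards on $L$, on one clique neighbour of each $x\in I'$, and one extra guard anywhere in $\hat C$, keeping all guards in $\hat C$ as a ``home'' configuration. For an attack on a matched vertex $u$ with partner $v$ at label $w$, move the $w$ guard to $u$ and the extra guard to $w$. This is possible because the clique makes every clique vertex adjacent to every other, and the vertex $w$ still dominates $v$ and the clique. For an attack on $x\in I'$, move its guard to $x$ and the extra guard into its place. For an attack on a clique vertex, move the extra guard there. On the next attack, first return the guard on $\hat I$ to $\hat C$, which is a single move, and simultaneously perform the response above. Checking that this simultaneous move is legal when the extra guard is currently occupying a label is routine, since guards are interchangeable within the clique. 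So $|L|+|I'|+1$ guards suffice, and combined with Step 2 this gives the formula.
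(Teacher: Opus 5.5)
\textbf{Step 2 is a genuine gap, and it cannot be filled: the statement is false.} Your Step 1 is correct. The matching count $\gamma(H)\ge|\hat I|-\mu(M)=|L|+|I'|$ is sounder than the paper's route through $A=I'\cup X^*$. Your Step 3 is the standard roaming-guard argument and does give $\gamma_m^\infty(H)\le|L|+|I'|+1$. (If $\hat C$ happens to consist only of $L$ and the chosen neighbours of $I'$, start the extra guard on a vertex of $\hat I$; the same moves work.) Step 2, however, you leave as a plan, and the inequality $\gamma_m^\infty(H)\ge\gamma(H)+1$ fails for $2$-split graphs.

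Take the $3$-sun: $\hat C=\{a,b,c\}$, $\hat I=\{x,y,z\}$, with $N(x)=\{a,c\}$, $N(y)=\{a,b\}$, $N(z)=\{b,c\}$. It is connected, non-complete, not a star, and claw-free (hence $K_{1,4}$-free), with $\Delta_H^{\hat I}=2$. Its labeled graph $M$ is a triangle, so $|L|=|I'|=1$ and the formula predicts $3$. Yet two guards suffice, because the dominating sets $\{a,b\},\{b,c\},\{a,c\},\{b,x\},\{c,y\},\{a,z\}$ are closed under defence:
\begin{itemize}
\item From $\{a,b\}$: answer $c$ by $b\to c$; $x$ by $a\to x$; $y$ by $b\to y$ and $a\to c$; $z$ by $b\to z$.
\item From $\{b,x\}$: answer $a$ or $c$ by moving $x$ there; $y$ by $b\to y$ and $x\to c$; $z$ by $b\to z$ and $x\to a$.
\item The other four sets are images of these two under the automorphism $a\mapsto b\mapsto c\mapsto a$, $x\mapsto y\mapsto z\mapsto x$.
\end{itemize}
Hence $\gamma_m^\infty=\gamma=2$.

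This strategy uses different maximum matchings of $M$ in different configurations, which is exactly the escape you flagged. So no ``responsibility'' count can close Step 2 in general. What your counting does prove is the lower bound whenever some $x\in\hat I$ is covered by every maximum matching of $M$. Any configuration containing $x$ has $\gamma-1$ further guards that must dominate $\hat I\setminus\{x\}$, and your Step 1 count then forces $\mu(M-x)\ge\mu(M)$. In the $3$-sun no such $x$ exists.

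Your stated reason for suspicion misreads the paper's theorem on $K_{1,3}$-free split graphs. That theorem does not say $\gamma_m^\infty=\gamma$ can occur; it claims $\gamma_m^\infty=3=\gamma+1$ when $|I|=3$. That claim is itself wrong on the same example, because it rests on the false implication ``$\gamma_m^\infty=2$ forces a universal vertex''.

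You also cannot borrow the paper's own lower-bound argument. Its Case 2 asserts that each matched edge has an endpoint sharing no clique neighbour with $I'$; this is false here, since $x$ and $y$ each share one with $z$. It also lets only the single adjacent guard respond to an attack, ignoring coordinated moves such as $b\to y$, $a\to c$.
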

\begin{proof}
Since $H=(\hat{C}\cup \hat{I}, \hat{E})$ is a $K_{1,4}$-free $2$-split graph, $\Delta_{H}^{\hat{I}}=2$. After forming a labeled graph $M$ followed by deleting $N_H^{\hat{I}}(l_v)$ of each labeled clique vertex $l_v$ of $L$, corresponding to all edges in $\mathcal{M}$ (a maximum matching of the labeled graph), yields the reduced graph $H'=(\hat{C}\cup I', E')$, where $I'=\hat{I}\setminus N_H^{\hat{I}}(L)$ and $E'=\hat{E}\setminus \{l_vi_v\mid l_v\in L, i_v\in N_H^{\hat{I}}(l_v)\}$. Observe that, the resulting graph $H'$ is a $K_{1,3}$-free $1$-split graph as $\Delta_{H'}^{I'}\le 1$. Next, we analyze two cases based on $\Delta_{H'}^{I'}$.
\begin{itemize}
    \item {\emph{Case 1}}: ($\Delta_{H'}^{I'}=0$) Let us denote the vertices of the matching edge $m\in \mathcal{M}$ by $v^1_m$ and $v^2_m$ and their corresponding labeled clique vertex by $l_m$. Observe that $l_{m_i}\neq l_{m_j}$ for any two matching edges $m_i,m_j\in \mathcal{M}$. Let $S\subseteq N_H^{\hat{I}}(L)$ consists of exactly one neighbor of $l_m$ for each $l_m\in L$. Clearly, $|S|=|L|$ and $N(x)\cap N(y)=\emptyset$ for any vertices $x,y\in S$. Thus, any dominating set of $H$ must have at least $|L|$ vertices, i.e., $\gamma(H)\ge |L|$ (from Observation \ref{maximal_independnet}). So, $\gamma_m^\infty(H)\ge |L|$ follows from Observation \ref{o1}. Now, we show that $\gamma_m^\infty(H)>|L|$, i.e., with $|L|$ number of guards, it is not always possible to defend against attacks on the vertices of $H$. Let $D$ be a minimal $m$-eternal dominating set of $H$ of cardinality $L$. In fact, $D=L$, as each vertex in $L$ has two neighbors in $\hat{I}$. An attack on any vertex of the matching edge $m_i\in \mathcal{M}$ (without loss of generality, let $v^1_{m_i}$), can not be defended as moving the only adjacent guard $l_{m_i}$ to $v^1_{m_i}$, leaves the vertex $v^2_{m_i}$ undominated. Hence, $\gamma_m^\infty(H)\ge |L|+1$. Now, let a set $D=L\cup \{i_v\}$, where $i_v\in \hat{I}$ (note that $i_v=v_m^j, j\in [2]$ for some $m\in \mathcal{M}$ as $N_H^{\hat{I}}=\hat{I}$). Next, to show $\gamma_m^\infty(H)\le |L|+1$, we show that $D$ is an $m$-eternal dominating set of $H$. See that $D$ is a dominating set of $H$ since $L\subseteq D$. Define $D_1=D$ and $D_2=L\cup \{c^*\}$, where $c^*\in \hat{C}\setminus L$. By maintaining the guards at $L$ as an invariant in every configuration, we defend the attacks on the vertices of $H$ by using the strategy described in Table \ref{tab:K_1,4_case1.2}. Hence, $\gamma_m^\infty(H)=1+|L|$. Since, $\Delta_{H'}^{I'}=0$, we have $|I'|=0$. Thus, $\gamma_m^\infty(H)=|L|+|I'|+1$. 
    
        \begin{table}[h!]
     \centering
    \renewcommand{\arraystretch}{1.3}
    \begin{tabular}{|c|>{\centering\arraybackslash}p{5 cm}|>{\centering\arraybackslash}p{4cm}|}
    \hline
    \multirow{2}{3 cm}{Vertex on which attack happened} & 
    \multicolumn{2}{c|}{Type of guard configuration at the time of attack} \\
    \cline{2-3}
    & $D_1$ & $D_2$ \\
         \hline
         $\{c_v, c_v\in \hat{C}\setminus L\}$& $D_2 ~\{l_m\rightarrow c_v, i_v\rightarrow l_m\}$ & $D_2 \{c^*\rightarrow c_v\} $\\
         \hline
        $\{i^*, i^*\in \hat{I}\}$ & $D_1 ~\{l_{m^*}\rightarrow i^*, l_m\rightarrow l_{m^*}, i_v\rightarrow l_m\}$ & $D_1~ \{l_{m^*}\rightarrow i^*, c^*\rightarrow l_{m^*}\}$\\
         \hline
    \end{tabular}
    \caption{\centering Defending strategies against vertex attacks, illustrated for case 1 in the proof of Theorem \ref{alg_K_{1,4}-2SG}. Here $i_v\in \{v_m^1,v_m^2\}$ and $i^*\in \{v_{m^*}^1,v_{m^*}^2\}$.}
    \label{tab:K_1,4_case1.2}
\end{table}

    \item {\emph{Case 2}}: ($\Delta_{H'}^{I'}=1$) Let $A=I'\cup X^*$, where $X^*$ is a maximal independent subset of $X=\{u\in N_H^{\hat{I}}(v)\mid v\in L$ and $N_H(u)\cap N_H(I')=\emptyset\}$. Now, we claim that $|X^*|=|L|$. Since $\mathcal{M}$ is a maximum matching of $M$, there exists at least one neighbor of $l_m\in \mathcal{M}$ (without loss of generality take $v_m^1$, say $N(l_m)=\{v_m^1, v_m^2\}$) such that $N_H(v_m^1)\cap N_H(I')=\emptyset$. Thus, $|X^*|\ge |L|$. Also, $|X^*|\le |L|$ holds because $X^*$ is an independent set implies $\{v_m^1, v_m^2\}\nsubseteq X^*$. Thus, $|A|=|L|+|I'|$. Now, observe that no two vertices of $A$ have a common neighbor. Thus, any dominating set of $H$ must require at least $|A|$ vertices, i.e., $\gamma(H)\ge |A|$ (from Observation \ref{maximal_independnet}). This follows that $\gamma_m^\infty(H)\ge |A|$ (by Observation \ref{o1}). Now, we show $\gamma_m^\infty(H)>|A|$, i.e., with $|A|$ number of guards, it is not always possible to defend the attacks on the graph $H$. Let $D$ be a minimal $m$-eternal dominating set of $H$ such that $|D|=|A|$. Then there are two cases: either $N(v_m^2)\cap N(I')=\emptyset$ or $N(v_m^2)\cap N(I')\neq\emptyset$. First, let $N(v_m^1)\cap N(I')=\emptyset$. Clearly, $l_m\in D$ as there are $|A|$ guards. Thus, an attack on $v_m^1$ (resp. $v_m^2$) makes $v_m^2$ (resp. $v_m^1$) undominated. Now, consider the case $N(v_m^2)\cap N(I')\neq\emptyset$. Let $i_m\in I'$ such that $N(v_m^2)\cap N(i_m)= \{c_m\}$. Clearly, $\{l_m,c_m\}\cap D\neq \emptyset$ as there are $|A|$ guards. Thus, attack on $v_m^2$ makes $v_m^1 ($ resp. $i_m)$ undominated if $l_m ($ resp. $c_m)\in D$ defends. Hence, $\gamma_m^\infty(H)\ge |A|+1= 1+|L|+|I'|$. Next, to show $\gamma_m^\infty(H)\le 1+|L|+|I'|$, we take a set $D$ of cardinality $1+|L|+|I'|$ and show that $D$ is an $m$-eternal dominating set of $H$. Observe that $H'$ is a $K_{1,4}$-free $1$-split graph. So, $d_{H'}^{I'}(v_i)\le 1$ for each $v_i\in C$. This follows that for any pair $v_i,v_j\in I'$, $N(v_i)\cap N(v_j)=\emptyset$. Let $C_i=N(v_i)$, for each $v_i\in I'$. Now, let $S=\{c_i\in C_i\mid c_i$ is a representative of $C_i\}$ and $D=S\cup L \cup \{i^*\}$ (note that $|S|=|I'|$), where $i^*\in \hat{I}$. Now, we show that $D$ is an $m$-eternal dominating set of $H$. Observe that $D$ is a dominating set of $H$ as $S\cup L \subseteq D$. Define $D_1=D$ and $D_2=D\setminus \{i^*\}\cup \{c^*\}$, where $c^*\in \hat{C}\setminus (S\cup L)$. Clearly, $|D_1|=|D_2|$. By maintaining the set $S\cup L$ as invariant, we defend the attacks on any vertex of $H$ by using the strategy described in Table \ref{tab:K_1,4_case2}. Therefore, $\gamma_m^\infty(H)= |L|+|I'|+1$. 
\end{itemize}
\begin{table}[h!]
     \centering
    \renewcommand{\arraystretch}{1.3}
    \begin{tabular}{|c|>{\centering\arraybackslash}p{4.5cm}|>{\centering\arraybackslash}p{4.5cm}|}
    \hline
    \multirow{2}{3 cm}{Vertex on which attack happened} & 
    \multicolumn{2}{c|}{Type of guard configuration at the time of attack} \\
    \cline{2-3}
    & $D_1$ & $D_2$ \\
         \hline
         $\{c_v, c_v\in C\setminus (S\cup L) \}$&  $D_2  \{v^*\rightarrow c_v, i^*\rightarrow v^*\}$  & $D_2 ~\{c^*\rightarrow c_v\}$\\
         \hline
        $\{i_v, i_v\in I\}$ & $D_1 ~\{v_i\rightarrow i_v, v^*\rightarrow v_i, i^*\rightarrow v^*\}$ & $D_1 \{v_i\rightarrow i_v, c^*\rightarrow v_i\}$\\
         \hline
    \end{tabular}
    \caption{\centering Defending strategies against vertex attacks, illustrated for case 2 in the proof of Theorem \ref{alg_K_{1,4}-2SG}. Here $v^*\in N(i^*)\cap D_2$, $v_i\in N(i_v)\cap D_1$.}
    \label{tab:K_1,4_case2}
\end{table}
\end{proof}

\emph{Time Complexity:} Let the graph $H=(\hat{C}\cup \hat{I}, \hat{E})$ be given in adjacency list representation. Constructing the labeled graph $M$ from $H$ requires $O(|V|)$ time. Since $H$ is a $2$-split graph, each vertex $c\in \hat{C}$ has at most two neighbors in $\hat{I}$; thus, by iterating over the vertices of $\hat{C}$, we can generate all edges of $M$ together with their corresponding in $O(|V|)$ time. Finding a maximum matching $\mathcal{M}$ of $M$ can be done in $O(|V|^{\frac{3}{2}})$ time using a standard matching algorithm \cite{micali1980v}. Collecting all labels $L\subseteq \hat{C}$ of the matching edges of $\mathcal{M}$ and constructing the graph $H'=(\hat{C}\cup I',E')$ by removing $N_H^{\hat{I}}(L)$ and their incident edges takes $O(|V|)$ time. Therefore, the overall running time of the algorithm derived from Theorem~\ref{alg_K_{1,4}-2SG} is $O(|V|^{\frac{3}{2}})$.

\begin{definition}(Type I and Type II)
    Let $G=(X\cup Y, E)$ be a bipartite graph such that $|X|=3$ and there exists a weight function $w: Y\rightarrow \mathbb{N}$. Then
    \begin{itemize}
        \item $G$ is said to be of Type I if there exists a matching in $G$ that saturates all vertices of $X$ such that no two saturated vertices in $Y$ have the same weight.
        \item $G$ is said to be of Type II if it contains two disjoint induced subgraphs: $K_{1,2}=\{ac, bc\}$ and $K_2=\{de\}$, such that $a,b,d\in X$, $c,e\in Y$, and $w(c)\neq w(e)$.
    \end{itemize}
\end{definition}

\begin{theorem}\label{alg-Dom3}
Let $G = (C \cup I, E)$ be a $K_{1,4}$-free $3$-split graph. Let $x \in C$ such that $d_G^I(x) = 3$. Let $H = (C \cup \hat{I}, \hat{E})$ be the graph obtained from $G$ by removing the set $N_G^I(x)$, where
 $\hat{I} = I \setminus N_G^I(x)$ and $\hat{E} = E \setminus \{xy \mid y \in N_G^I(x)\}$, and let $L$ be the labeled clique vertex set of the graph $H$. Construct the reduced graph $H'=(C\cup I',E')$, where $I'=\hat{I}\setminus N_{H}^{\hat{I}}(L), 
E'=\hat{E}\setminus \{l_v i_v \mid l_v\in L,\, i_v\in N_H^{\hat{I}}(L)\}$. Form a bipartite graph $Q=(A\cup B, E_Q)$, where $A=N_G^{I}(x)=\{x_1,x_2,x_3\}$, $B=L\cup N_{H'}(I')$, and $E_Q=\{c x_j, l_mx_j\mid l_m\in (N_G(x_j)\cap L), c\in (N_G(x_j)\cap  N_{H'}(I')), j\in [3]\}$ and $w
    (c_i)=w(c_j)$ if $N_{H'}^{I'}(c_i)\cap N_{H'}^{I'}(c_j)\neq \emptyset$, for any $c_i,c_j\in B$.\\
Then, the $m$-eternal domination number, \\ $\gamma_m^\infty(G) = 
\begin{cases}
|L| + |I'| + 1, & \text{if the graph $Q$ is either of Type I or Type II}, \\
|L| + |I'| + 2, & \text{otherwise.}
\end{cases}
$
\end{theorem}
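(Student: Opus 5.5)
We plan to follow the scheme of the proof of Theorem~\ref{alg_K_{1,4}-2SG}. The argument has three steps: a lower bound $|L|+|I'|+1$ from a packing argument plus one undefendable attack; an upper bound $|L|+|I'|+2$ from an explicit invariant strategy; and an analysis of when one guard can be saved, which is where Type~I and Type~II enter.

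First we establish the structure. By Lemma~\ref{lemlsplit}, $H$ is an $l$-split graph with $l\le 2$ and its labeled graph has $\mu\le 2$. So $|L|\le 2$, and as before $H'$ satisfies $\Delta_{H'}^{I'}\le 1$. By Theorem~\ref{union_K_{1,4}}, every $u\in C\setminus\{x\}$ has a neighbour in $A=\{x_1,x_2,x_3\}$. This is the key fact we will use: every clique vertex, and in particular every vertex of $B$, sees some $x_j$.

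For the lower bound we build a packing set $P$ in the spirit of Case~2 of Theorem~\ref{alg_K_{1,4}-2SG}. It consists of one private vertex for each vertex of $I'$, one suitable endpoint for each matching edge of $L$, and one vertex $x_j$ whose neighbourhood avoids the neighbourhoods of the other chosen vertices. We choose $x_j$ using the fact that $\mu\le 2$ and that the neighbourhoods in $H'$ are pairwise disjoint. Observation~\ref{maximal_independnet} then gives $\gamma(G)\ge |L|+|I'|+1$, and hence $\gamma_m^\infty(G)\ge |L|+|I'|+1$ by Observation~\ref{o1}. For the general upper bound we place one guard on each vertex of $L$, one representative in each $N_{H'}(v)$ for $v\in I'$, one guard on $x$, and one spare guard on some vertex of $I$. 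This gives $|L|+|I'|+2$ guards. We keep every guard except the spare fixed, and defend each attack with a two-configuration table as in Table~\ref{tab:K_1,4_case2}: the spare moves between a vertex of $C$ and a vertex of $I$.

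The main step is the characterisation. Suppose we have $|L|+|I'|+1$ guards. Counting shows that, in every configuration, the clique guards must dominate the independent vertices with essentially one guard per ``block'': one block for each labeled vertex, one for each $I'$-class, and a single guard responsible for all of $x_1,x_2,x_3$. No spare can be sitting at $x$, so $A$ must be dominated by guards lying in $B$.

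In the sufficiency direction we treat the two types separately. If $Q$ is of Type~I, the matching saturating $A$ with distinct weights says that $x_1,x_2,x_3$ are dominated by three guards in three different blocks, which frees one guard to act as the spare; we then write out the defence table. If $Q$ is of Type~II, we use the star $ac,bc$ and the edge $de$ with $w(c)\ne w(e)$: two block guards, at $c$ and $e$, cover $A$, again leaving one spare.

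In the necessity direction we take an $m$-eternal dominating set of size $|L|+|I'|+1$. We attack the relevant $x_j$ and $I'$ vertices and argue that the guards on $B$ dominating $A$ must form either a saturating matching with distinct weights or the Type~II pattern. Otherwise some attack on an $x_j$, or on a vertex sharing a block with a guard, forces two guards out of the same block and leaves a vertex undominated.

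We expect this necessity argument to be the main obstacle. It needs a careful case analysis on $|L|\in\{0,1,2\}$ and on how weights coincide, that is, on which $B$-vertices share an $I'$-neighbour.
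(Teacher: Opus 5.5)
\textbf{The lower-bound step fails.} You claim that the packing (one vertex per $I'$-class, one endpoint per labeled vertex) can always be extended by some $x_j$, which would give $\gamma(G)\ge |L|+|I'|+1$. This is false.

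Theorem~\ref{union_K_{1,4}}, which you cite, actually works against it: every clique vertex other than $x$ is adjacent to some $x_j$, so $N(x_j)$ generally meets the block neighbourhoods. Here is a concrete counterexample. Take $C=\{x,c_1,c_2,c_3\}$ and $I=\{x_1,x_2,x_3,y_1,y_2,y_3\}$, with edges $xx_j$, $c_jx_j$, $c_jy_j$ for $j\in[3]$.
\begin{itemize}
\item $G$ is a $K_{1,4}$-free $3$-split graph with $d_G^I(x)=3$, $L=\emptyset$ and $I'=\{y_1,y_2,y_3\}$.
\item $Q$ is of Type~I via the matching $c_jx_j$, so the theorem gives $\gamma_m^\infty(G)=4$.
\item Yet $\{c_1,c_2,c_3\}$ dominates $G$, so $\gamma(G)=3=|L|+|I'|$. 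Hence no packing of size $4$ exists; for instance $N(x_j)\cap N(y_j)=\{c_j\}$.
\end{itemize}
Type~I/II describes exactly the situation where the block representatives can be chosen to dominate $A$. There the extra guard cannot come from a domination bound; it is a genuinely dynamic effect.

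\textbf{What the paper does instead.} Keep only the packing $U\cup I'$ of size $|L|+|I'|$, i.e.\ drop $x_j$. The closed neighbourhoods $N[p]$, $p\in U\cup I'$, are pairwise disjoint. So with $|L|+|I'|$ guards, every configuration has exactly one guard in each $N[p]$. Since $x$ has no neighbour in $\hat I$, it lies in none of these blocks. After an attack on $x$, some block is therefore empty and its vertex $p$ is undominated. This gives $\gamma_m^\infty(G)\ge |L|+|I'|+1$ in all cases, with no case distinction needed.

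\textbf{The necessity direction has the same problem.} Your argument for $|L|+|I'|+2$ when $Q$ is neither type rests on the same counting, namely ``one guard per block plus a single guard responsible for $A$'', which in turn rests on the nonexistent packing. It is also not actually carried out. The paper's route is as follows. If $Q$ is neither type, some $u\in A$ has no neighbour among the block representatives. Dominating $u$ then costs an extra guard, so $\gamma(G)\ge |L|+|I'|+1$. An attack on a vertex of $A\setminus\{u\}$ then cannot be answered without uncovering either $u$ or a block vertex. You need an argument of that form, taking care over where inside each block the guard may sit, rather than a static packing bound.
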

\begin{proof}

Given a $K_{1,4}$-free $3$-split graph $G$, we have $\Delta_G^I=3$. Let $x\in C$ such that $d_G^I(x)=3.$ 
Let $H$ be the resulting graph, after removing $N_G^I(x)$ from $G$. From Lemma \ref{lemlsplit}, it follows that $H$ is a $K_{1,4}$-free $l$-split graph where $0\le l\le 2$. Observe that if $l\le 1$, then $L=\emptyset$. In particular, if $l=0$, then $x$ is a universal vertex of $G$. From Observation \ref{universal}, $\gamma_m^\infty(G)=2=|L|+|I'|+2$, where $L$ and $I'$ are empty set, and $Q$ is neither of \emph{Type I} nor of \emph{Type II}. Because $\mathcal{M}$ is the maximum matching of the labeled graph $M$, so, for each vertex $l_v\in L$, there exist at least a vertex $v_k\in N^{\hat{I}}(l_v)$ such that $N(v_k)\cap N(I')=\emptyset$. Now, let $U$ be a set that picks one such vertex from each $N(l_v)$. Observe that $N_G(v_i)\cap N_G(v_j)=\emptyset$ for any $v_i,v_i\in U\cup I'$. Thus, by Observation \ref{maximal_independnet}, we have $\gamma(G)\ge |U|+|I'|=|L|+|I'|$. Further, $\gamma_m^\infty(G)\ge |L|+|I'|$ follows from Observation \ref{o1}. Now, we consider two cases based on the graph $Q$ (note that $l=\Delta_H^{\hat{I}}\le 2$).
    \begin{itemize}
        \item \emph{Case 1}: (If the graph $Q$ is either of \emph{Type I} or \emph{Type II}) Now, we show $\gamma_m^\infty(G)>|L|+|I'|$. Let $D'$ be a minimal $m$-eternal dominating set of $G$ such that $|D'|=|L|+|I'|$. Because there exists a vertex $x\in C\setminus (N(I')\cup L)$ that has to be dominated by $D'$, so $D'\cap (N(I')\cup L) \neq \emptyset$. An attack on $x$ can not be defended as moving any vertex from $D'\cap (N(I')\cup L)$ (say $c_j$) leaves $v_j\in ((I'\cup U)\cap N(c_j))$ undominated. Hence, $\gamma_m^\infty(G)\ge |I'|+|L|+1$. Now, let $C_i=N(v_i), v_i\in I'$ and $D=\{c_i\mid c_i$\ is the representative of $C_i\}\cup L \cup \{i^*\}$ where $i^*\in I$. Clearly, $|D|=|I'|+|L|+1$. Next, we show $D$ is an $m$-eternal dominating set of $G$. $D$ is a dominating set of $G$ because $N[\{c_i\mid c_i$\ is the representative of $C_i\}\cup L]=V$. Define $D_1=D$ and $D_2=D\setminus \{i^*\}\cup \{c^*\}$, where $c^*\in C\setminus [N_G(I')\cup L]$. By maintaining one guard in each representative vertex of $C_i$ and the set $L$ as invariant, we defend the attack on any vertex of $G$ using a strategy shown in Table \ref{tab:K_1,4_3-SG}. Therefore, $\gamma_m^\infty(G)=|I'|+|L|+1$. 
    
        \item \emph{Case 2}: (If the graph $Q$ is neither of \emph{Type I} nor of \emph{Type II}) Then there exist a vertex $u\in N^I(x)$ such that $N(u)\cap (\{c_i\mid c_i$\ is the representative of $C_i\}\cup L)=\emptyset$. So, to dominate $u$, $\gamma(G)\ge |L|+|I'|+1$. This implies, $\gamma_m^\infty(G)\ge |L|+|I'|+1$. Now, we claim that $\gamma_m^\infty(G) > |L|+|I'|+1$. Let $D'$ be a minimal $m$-eternal dominating set of $G$ such that $|D'|=|I'|+|L|+1$. In fact, $D'\cap \{x,u\} \neq \emptyset$, as $D'$ is also a dominating set of $G$. Then an attack on vertex $v\in (N_G^I(x)\setminus \{u\})$ can not be defended as moving any of its adjacent guards (say $c_j$ (resp. $x$)) to $v$ leaves $v_j\in (N(c_j)\cap (I'\cup U)) (resp. $u$)$ undominated. Hence, $\gamma_m^\infty(G)\ge |I'|+|L|+2$. 
    Now, let $D=\{c_i\mid c_i$\ is the representative of $C_i\}\cup L \cup \{i^*,x\}$ where $i^*\in I$. Next, we show $D$ is an $m$-eternal dominating set of $G$. $D$ is a dominating set of $G$ because $N[\{c_i\mid c_i$\ is the representative of $C_i\}\cup L]=V$. Define $D_1=D$ and $D_2=(D\setminus \{i^*\})\cup \{c^*\}$, where $c^*\in C\setminus (N_G(I')\cup L)$. By maintaining one guard at each representative vertex of $C_i$ and the set $L\cup \{x\}$ as an invariant, the attack on any vertex of $G$ can be defended by the strategy shown in Table \ref{tab:K_1,4_3-SG}. Therefore, $\gamma_m^\infty(G)= |I'|+|L|+2$. 
    \end{itemize}

    \begin{table}[h!]
     \centering
    \renewcommand{\arraystretch}{1.3}
    \begin{tabular}{|c|>{\centering\arraybackslash}p{6cm}|>{\centering\arraybackslash}p{3cm}|}
    \hline
    \multirow{2}{3 cm}{Vertex on which attack happened} & 
    \multicolumn{2}{c|}{Type of guard configuration at the time of attack} \\
    \cline{2-3}
    & $D_1$ & $D_2$ \\
         \hline
         $\{c_v, c_v\in C\setminus \mathcal{B}\}$& $D_2 \{v^*\rightarrow c_v, i^*\rightarrow v^*\}, ~\text{where}~ v^*\in N(i^*)\cap D_1$  & $D_2 ~\{c^*\rightarrow c_v\}$\\
         \hline
        $\{i_v, i_v\in I\}$ & $D_1 ~\{v_i\rightarrow i_v, v^*\rightarrow v_i, i^*\rightarrow v^*\}  ~\text{where}~ v_i\in N(i_v)\cap D_1$ & $D_1 \{v_i\rightarrow i_v, c^*\rightarrow v_i\}$\\
         \hline
    \end{tabular}
    \caption{\centering Defending strategies against vertex attacks, illustrated for Cases 2 and 3. Here $\mathcal{B}$ represents the invariant taken in each corresponding subcase.}
    \label{tab:K_1,4_3-SG}
\end{table}
\end{proof}

\emph{Time Complexity:} Assume that the graph $G=(C\cup I, E)$ is represented using an adjacency lists. Identifying the vertex $x$ such that $d_G(x)=3$ requires scanning all vertices of $C$, which takes $O(|V|)$ time. Constructing the graph $H=(\hat{C}\cup \hat{I}, \hat{E})$ from $G$ by removing the neighborhood $N_G^I(x)$ and all its incident edges can also be done in $O(|V|)$ time, since only the adjacency lists of $x$ and its three neighbors are updated. Next, computing $\Delta_{H}^{\hat{I}}$ requires visiting each vertex in $\hat{C}$ and counting its neighbors, in $\hat{I}$. Because the graph is a $l$-split graph ($l\le 2$), this step takes $O(|V|)$ time. Forming the graph $H'$ can be done in $O(|V|)$ time because by Lemma \ref{lemlsplit}, the maximum matching $\mathcal{M}\le 2$ of the labeled graph $M$ of $H$, so by iterating over degree 2 vertices of $C$, $L$ can be formed. Constructing the bipartite graph $Q=(A\cup B, E_Q)$ requires only examining the adjacency relation between the constant size set $A$ and the vertices in $B$; hence, these steps also take $O(|V|)$ time. Checking whether the auxiliary graph $Q$ is of \emph{Type I} or \emph{Type II} can also be performed in constant time by checking if the vertices of $A$ have degree at least 1 in $Q$. Hence, the overall time taken by the algorithm derived from Theorem \ref{alg-Dom3} is $O(|V|)$ time.

\subsection{$K_{1,5}$-free Split Graphs}
\label{sec:$K_{1,5}$-free-SG}
\vspace{-5pt}
The classical problem \textsc{Exact 3-Cover} is defined as: given a collection $\mathcal{C}=\{C_1,C_2,\ldots,C_m\}$ of 3-element subsets of a set $\mathcal{Z}=\{u_1,u_2,\ldots,u_{3q}\}$, the aim is to find if there exists a sub-collection $\mathcal{C}'\subseteq \mathcal{C}$ such that for every $u_i\in \mathcal{Z}, i\in [3q]$, is in exactly one member of $\mathcal{C}'.$ 
\vspace{-5pt}
\begin{theorem} \cite{karp2009reducibility}
    The \textsc{Exact 3-Cover} is \textsf{NP}-complete.
\end{theorem}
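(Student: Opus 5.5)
This statement is the classical result of Karp, cited rather than proved in the paper. To prove it I would go through \textsc{3-Dimensional Matching} (3DM). The first step is membership in \textsf{NP}. A certificate is a sub-collection $\mathcal{C}'\subseteq\mathcal{C}$. To verify it, check that $|\mathcal{C}'|=q$ and that every $u_i\in\mathcal{Z}$ lies in exactly one member of $\mathcal{C}'$. With a marking array over $\mathcal{Z}$ this takes time polynomial in $m+q$.

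The second step is hardness, by a polynomial reduction from 3DM. An instance of 3DM consists of pairwise disjoint sets $W,X,Y$ with $|W|=|X|=|Y|=q$ and a set of triples $T\subseteq W\times X\times Y$. It asks whether some $T'\subseteq T$ of size $q$ covers every element of $W\cup X\cup Y$ exactly once. Given such an instance, I would set $\mathcal{Z}=W\cup X\cup Y$, which has $3q$ elements because the three sets are disjoint. For each triple $(w,x,y)\in T$, I would put the $3$-element set $\{w,x,y\}$ into $\mathcal{C}$. Disjointness guarantees that each such set really has three elements and that distinct triples give distinct sets.

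The correctness argument then goes both ways. A perfect matching $T'$ maps to a sub-collection $\mathcal{C}'$ in which every element lies in exactly one set, so $\mathcal{C}'$ is an exact cover. Conversely, an exact cover $\mathcal{C}'$ has exactly $q$ members, and each member contains one element from each of $W,X,Y$. The corresponding triples are therefore pairwise disjoint and cover everything, so they form a perfect 3-dimensional matching. The construction is clearly linear in the input size.

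The main obstacle is the \textsf{NP}-hardness of 3DM itself, if I do not simply take it as known from Karp. For that I would use the standard reduction from \textsc{3-SAT}, with three kinds of gadgets:
\begin{itemize}
\item a "truth-setting" wheel for each variable, whose two perfect sub-matchings encode the assignments true and false;
\item a "satisfaction-testing" triple for each clause occurrence, which can use a free tip of the wheel of a literal that satisfies the clause;
\item "garbage-collection" triples that absorb the remaining uncovered tips.
\end{itemize}
The delicate part is checking that the garbage gadgets cannot cover tips in a way that bypasses the truth-setting constraint. This is settled by counting the tips left free by each wheel, and by making sure that every garbage element is paired with every tip.
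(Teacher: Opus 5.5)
Your proposal is correct and consistent with the paper, which proves nothing here and simply cites Karp. Membership in \textsf{NP}, together with the observation that \textsc{3-Dimensional Matching} is exactly the special case of \textsc{Exact 3-Cover} on a tripartite ground set, is the standard derivation behind that citation, and your 3-SAT sketch is the Garey--Johnson proof that 3DM is hard. One small correction to that sketch: the all-true/all-false behaviour of each wheel is forced by its private internal elements, which no garbage triple contains, so the tip count and the fact that every garbage pair forms a triple with every tip only serve to let the garbage absorb exactly the tips left uncovered by the wheels and clause triples.
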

\vspace{-5pt}
\begin{theorem}\label{K_1,5_NP_hard}
$m$-\textsc{Eternal Dominating Set} is \textsf{NP}-complete for $K_{1,5}$-free splits graphs. 
\end{theorem}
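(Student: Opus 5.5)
Membership in \textsf{NP} is known \cite{macgillivrayeternal}, so the work is the hardness reduction from \textsc{Exact 3-Cover}. Given $\mathcal{Z}=\{u_1,\ldots,u_{3q}\}$ and $\mathcal{C}=\{C_1,\ldots,C_m\}$, we build a split graph $G=(C\cup I,E)$ as follows. The clique is $C=\{c_1,\ldots,c_m\}$, one vertex per set. The independent set is $I=\{u_1,\ldots,u_{3q}\}$, one vertex per element, and $c_j$ is joined to $u_i$ exactly when $u_i\in C_j$. Thus $d^I_G(c_j)=3$ for every $j$. Any induced star centred at a clique vertex can use at most one further clique vertex, since the clique vertices are pairwise adjacent. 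So its independent leaves number at most $3$ and the star has at most $4$ leaves, which makes $G$ $K_{1,5}$-free. Centres in $I$ only see clique vertices, which are pairwise adjacent, so they give at most one leaf. If $\Delta^I_G=3$ turns out to be too small for the eternal argument, I would pad each $c_j$ with one private pendant vertex in $I$. This keeps $d^I\le 4$, and the same leaf count still gives $K_{1,5}$-freeness. I would test the target value $k=q+1$ first, using the dichotomy of \cite{macgillivrayeternal} that $\gamma_m^\infty\in\{\gamma,\gamma+1\}$ on split graphs.

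For the forward direction, let $\mathcal{C}'$ be an exact cover with $|\mathcal{C}'|=q$, and put $L=\{c_j: C_j\in\mathcal{C}'\}$. Then $L$ dominates $G$, and the neighbourhoods $N^I(c)$ for $c\in L$ partition $I$. The strategy follows the proofs of Theorems \ref{alg_K_{1,4}-2SG} and \ref{alg-Dom3}. We keep $L$ as an invariant and use one extra guard that alternates between two configurations, $D_1=L\cup\{i^*\}$ with $i^*\in I$ and $D_2=L\cup\{c^*\}$ with $c^*\in C\setminus L$. An attack on an element vertex $u$ is answered by the owner $c\in L$ moving to $u$, with the spare guard moving to $c$ (via the clique, or via its $L$-neighbour in $D_1$). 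Because the cover is exact, the new configuration again has the form $L\cup\{\text{one vertex}\}$ after relabelling, with $L$ occupied. The moves are the ones in Table~\ref{tab:K_1,4_3-SG}, so this step only needs checking.

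The backward direction is the main obstacle. Suppose $\gamma_m^\infty(G)\le q+1$. Each clique vertex dominates only $3$ element vertices, so $\gamma(G)\ge q$, with equality exactly when $G$ has an exact cover. The hard part is ruling out $\gamma=q+1$ together with $\gamma_m^\infty=q+1$. I would show that, when $\gamma(G)=q+1$, no configuration of $q+1$ guards survives a suitable attack. Take a configuration $D$ of size $q+1$. Since no exact cover exists, the counting argument forces either some element vertex $u$ to be dominated by exactly one guard whose other element neighbours are also privately dominated, or a guard sitting on $I$. Attacking a private neighbour of a clique guard then forces that guard off the clique vertex. Replacing it requires another clique guard to move in, and that guard's three private elements become undominated, because the $q+1$ guards have no slack beyond covering $3q$ elements with triples. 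I would make this rigorous by tracking the quantity $\sum_{g\in D\cap C}|N^I(g)|$ against $3q$.

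If this slack argument fails, I would fall back on the characterization of split graphs with $\gamma_m^\infty=\gamma$ in \cite{macgillivrayeternal}. The plan would be to add gadgets, such as extra pendant pairs sharing a clique vertex as in Case~1 of Theorem~\ref{alg_K_{1,4}-2SG}, which force $\gamma_m^\infty=\gamma+1$, and then set $k=q+1$ plus the gadget offset.
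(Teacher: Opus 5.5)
Your forward direction is fine, but the backward implication of your main reduction (the bare incidence split graph with threshold $q+1$), namely that $\gamma_m^\infty(G)\le q+1$ forces an exact cover, is false. So no slack argument can close it.

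Take $q=2$, element vertices $z_1,\dots,z_6$, and $\mathcal{C}=\{S_1,\dots,S_6\}$ with $S_i=\{z_i,z_{i+1},z_{i+3}\}$, indices modulo $6$. The complement of $S_i$ is $\{z_{i+2},z_{i+4},z_{i+5}\}\notin\mathcal{C}$, so there is no exact cover. Yet three guards defend $G$ forever, using the following family of configurations:
\begin{itemize}
\item all triples $\{c_a,c_b,c_d\}$ whose sets cover the ground set, for instance $\{c_d,c_{d+1},c_{d+2}\}$;
\item $\{z_j,c_{j+1},c_{j+2}\}$ and $\{z_j,c_{j+2},c_{j+4}\}$ for each $j$. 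Both pairs of sets avoid $z_j$ and cover every other element.
\end{itemize}
From a clique triple, an attack on $z_j$ is answered by a guard whose set contains $z_j$ moving there, while the other two move inside the clique to $c_{j+1},c_{j+2}$. An attack on $c_d$ is answered by moving to $\{c_d,c_{d+1},c_{d+2}\}$.

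From $\{z_1,c_2,c_3\}$, an attack on $z_4$ is answered by $c_3\to z_4$ and $z_1\to c_6$, giving $\{z_4,c_6,c_2\}$. An attack on $c_4$ is answered by $z_1\to c_4$, giving the covering triple $\{c_2,c_3,c_4\}$. The remaining cases check out the same way, and the rotation $i\mapsto i+1$ is an automorphism.

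Your heuristic fails because $q+1$ triples covering $3q$ elements carry three units of excess, and in the all-guards model the clique guards reshuffle globally. After an attack on $z_j$ you only need $q$ sets covering everything except $z_j$ with one element repeated. The absence of an exact cover does not exclude this.

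The missing idea is a gadget that, in some reachable configuration, ties up two guards on vertices with no element neighbours. Then at most $q$ guards remain to dominate the $3q$ element vertices, each dominating at most three, which forces an exact cover.

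The paper adds a clique vertex $u$ adjacent to every $c_i$, with two pendant leaves $v,w$; this keeps $d^I\le 3$. It uses threshold $q+2$. After an attack on $v$ (or already, if $v$ is occupied), the configuration contains $v$ and a vertex of $\{u,w\}$, since $w$ must remain dominated. The forward direction keeps the cover plus $u$ as the invariant and uses one spare guard, essentially your $D_1/D_2$ scheme.

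Your fallback gestures at a pendant pair on a clique vertex, but it does not give the attachment, the offset, or this counting step, so it is not yet a proof. Also, padding each $c_j$ with a private pendant would make $\gamma\ge m$ independently of the set system, which destroys the reduction.
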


\begin{proof}
It has been shown that there exists a succinct certificate for the existence of an $m$-eternal dominating set of a given size \cite{macgillivrayeternal}. Therefore, for split graphs, the 
$m$-eternal domination problem is in \textsf{NP}.
Given an instance of exact 3-cover $(X,\mathcal{C})$, where $X=\{x_1,x_2\ldots,x_{3q}\}$ and $\mathcal{C}=\{C_1,C_2,\ldots,C_m\}$, $C_i\subseteq X,$ for all $i\in[m]$, we construct the graph $G=(V,E)$ as an instance of $m$-\textsc{Eternal Dominating Set}, in the following manner. Corresponding to each $C_i\in \mathcal{C}$, create a vertex $c_i\in V$ and corresponding to each element $x_i\in X,$ create a vertex $x_i\in V.$ Make $c_i$ adjacent to $x_j$ if and only if $x_j\in C_i.$ Also, make the vertex set $\{c_i\mid i\in [m]\}$ such that $G[\{c_i\mid i\in [m]\}]$ is a clique. Finally, add three new vertices namely, $u$, $v$, and $w$ such that $u$ is adjacent to each $c_i$, $i\in [m]$ and $v$ and $w$ are adjacent to their unique neighbor $u$.
    This completes the construction of the graph $G$. The corresponding construction is illustrated in a figure \ref{fig:split}. 

    \begin{figure}[htbp]
    \centering
    \includegraphics[width=3cm, height=4 cm]{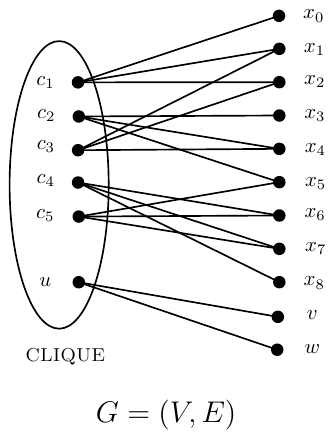}
    \caption{Illustration of construction of $G$ from an instance of exact 3-cover $(X,\mathcal{C})$, where $C_1=\{x_0,x_1,x_2\}$, $C_2=\{x_3,x_4,x_5\}$, $C_3=\{x_1,x_2,x_4\}$, $C_4=\{x_6,x_7,x_8\}$, and $C_5=\{x_5,x_6,x_7\}$ in Theorem \ref{K_1,5_NP_hard}.}
    \label{fig:split}
\end{figure}

Observe that $G$ is a split graph with split partition of vertex set $V=C\cup I$, where the clique $C=\{c_i\mid i\in [m]\}\cup \{u\}$, the independent set $I=\{x_i\mid i \in [3q]\}\cup \{v,w\}$, and the edge set $E=\{c_ix_j\mid x_j\in C_i, j\in[3q], i\in [m]\}\cup \{uc_i\mid i\in [m]\}\cup \{uv,uw\}$. In fact, it is a $K_{1,5}$-free split graph, since each vertex in $C$ has at most three neighbors in $I$. Also, it is easy to see that the construction can be done in polynomial time. Now we claim the following. 
\vspace{-5pt}
\begin{claim}\label{claim-K_1,5-NPhard}
$(X,\mathcal{C})$ has an exact cover $\mathcal{C'}$ of size $q$ if and only if $G$ has an $m$-eternal dominating set $S$ of cardinality at most $q+2.$
\end{claim}
\begin{proof}
Let $\mathcal{C'}$ be an exact cover of $(X,\mathcal{C})$ of cardinality $q.$ Define $S=D\cup \{u,w\},$ where $D=\{c_i\mid C_i\in \mathcal{C'}\}$. Clearly, $|S|\le q+2$. Now, we show that $S$ is an $m$-eternal dominating set of $G$. For that, we must be able to defend against all attacks. Let $S^*=D\cup \{u\}$. Observe that $S^*$ is a dominating set of $G$. Further, let the two types of family of sets as $\mathcal{S}_1=\{S^*\cup \{\beta\}\mid  \beta\in \{v,w\}\} ~\text{and}~ \mathcal{S}_2=\{S^*\cup \{\delta\} \mid \delta\in\{c^*,x^*\}\}, ~\text{where}~ c^*\notin S^*, x^*\in \{x_1,x_2,\ldots,x_{3q}\}$. Note that $S\in \mathcal{S}_1$. Since $S^*\subseteq \hat{S}, ~\text{for any}~ \hat{S}\in \mathcal{S}_i, i\in [2]$, this follows that $\mathcal{S}_1$ and $\mathcal{S}_2$ consists of dominating sets of $G$. 
The attack on a vertex can be defended by following a strategy such that the new configuration of vertices also forms a dominating set of $G$, that is, either of type $\mathcal{S}_1$ or type $\mathcal{S}_2$. We maintain $S^*$ as the invariant while defending against any attack.

\begin{table}[h!]
     \centering
    \renewcommand{\arraystretch}{1.3}
    \begin{tabular}{|c|>{\centering\arraybackslash}p{2.5cm}|>{\centering\arraybackslash}p{6.5cm}|}
    \hline
    \multirow{2}{3cm}{Vertex on which attack happened} & 
    \multicolumn{2}{c|}{Type of guard configuration at the time of attack} \\
    \cline{2-3}
    & $\mathcal{S}_1$ & $\mathcal{S}_2$ \\
         \hline
         $\{x_i, i\in [3q]\}$& $\mathcal{S}_2$ $\{c_{x_i}\rightarrow x_i, u\rightarrow c_{x_i}, \beta\rightarrow u\}$ & $\mathcal{S}_2$ $\begin{cases}
           c_{x_i}\rightarrow x_i, c^*\rightarrow c_{x_i}, &\text{if}~ \delta=c^* \\
           c_{x_i}\rightarrow x_i, c_{x^*}\rightarrow c_{x_i}, x^*\rightarrow  c_{x^*}. &\text{if}~ \delta=x^*
         \end{cases}$\\
         \hline
        $\{c_k, c_k\notin S^*\}$ & $\mathcal{S}_2$ $\{u\rightarrow c_k, v(w)\rightarrow u\}$ & $\mathcal{S}_2$ $\begin{cases}
           c^*\rightarrow c_k, &\text{if}~ \delta=c^* \\
           c_{x^*}\rightarrow c_k, x^*\rightarrow  c_{x^*}, &\text{if}~ \delta=x^*
         \end{cases}$\\
         \hline
        $v(w)$ & $\mathcal{S}_1$ $\{u\rightarrow v(w), w(v)\rightarrow u\}$ & $\mathcal{S}_1$ $\begin{cases}
           u\rightarrow v(w), c^*\rightarrow u, &\text{if}~ \delta=c^* \\
           u\rightarrow v(w), c_{x^*}\rightarrow u, x^*\rightarrow  c_{x^*}, &\text{if}~ \delta=x^*
         \end{cases}$ \\
         \hline
    \end{tabular}
    \caption{The cell in row 2 under column 2 indicates the type of configuration the guards are initially in. Each cell in rows 3, 4, and 5 under the same column represents the configuration to which the guards move, and the text in brackets indicates the corresponding strategy. Here $c_{x_i}\in S^*\cap N(x_i)$, $c_{x^*}\in N(x^*)\cap S^*$.}
    \label{tab:my_label}
\end{table}

Conversely, let $S$ be an $m$-eternal dominating set of $G$ of cardinality at most $q+2$. We first show that any $m$-eternal dominating set of $G$ of cardinality at most $q+2$ intersects the set $X=\{x_1, x_2,\ldots, x_{3q}\}$ in at most one vertex.

\begin{claim}
    For any $m$-eternal dominating set $S$ of $G$ of size at most $q+2$, $|X\cap S|\le 1$.
\end{claim}
\begin{proof}
    On the contrary, let $|X\cap S|=t$ where $t\ge 2$. Since, $S$ is a dominating set of $G$, $|S\cap \{u,v,w\}|\ge 1$. Observe that each vertex of $C\setminus \{u\}$ can defend at most three vertices of $X$. Thus, to defend the remaining $3q-t$ vertices of $X$ using at most $q-t+1$ vertices of $S$, it hold that $3q-t\le 3(q-t+1)$ that implies $t\le \frac{3}{2}$, a contradiction as $t\ge 2$.    
\end{proof}

Further, in the following claim, we show that any $m$-eternal dominating set of $G$ of cardinality at most $q+2$, intersects the set $\{u,v,w\}$ in at most two vertices.

\begin{claim}\label{claim2}
    For any $m$-eternal dominating set $S$ of $G$ of size at most $q+2$, $|S\cap \{u,v,w\}|\le 2$. Moreover, if $|S\cap \{u,v,w\}|=2$ then $|X\cap S|=0$.
\end{claim}
\begin{proof}
    Assume for contradiction that $|S\cap \{u,v,w\}|\neq 2$. That is $|S\cap \{u,v,w\}|= 3$ implies $\{u,v,w\}\subseteq S$. First, consider the case $|X\cap S|=0$. To defend $3q$ vertices of $X$ using at most $q-3$ vertices of $S$, it hold that $3q\le 3(q-3)$ that implies $0 \le -9 $, which is absurd. Now, consider the case $|X\cap S|=1$. To defend the remaining $3q-1$ vertices of $X$ using at most $q-4$ vertices of $S$, it hold that $3q-1\le 3(q-4)$ that implies $1\ge 12 $, which is absurd. 

    Next, let $|S\cap \{u,v,w\}|=2$ but $|X\cap S|=p$ where $p\ge 1$. To defend the remaining $3q-p$ vertices of $X$ using at most $q-p$ vertices of $S$, it holds that $3q-p\le 3(q-p)$, which implies $1\ge 3$, which is absurd. 
\end{proof}

Now, define $\hat{C}=\{c_i\in (S\cap C) \mid |S\cap \{u,v,w\}|=2\}$ and $\mathcal{C}'=\{C_i\mid c_i\in \hat{C}\}$. As $N(\{u,v,w\})\cap X=\emptyset$ and each vertex $c_i\in C$, $i\in [m]$ has at most 3 neighbors in $X$, to dominate all $3q$ elements of $X$, $|S\cap \{c_i\mid i\in [m]\}|\ge q$. Now, we claim that $\mathcal{C'}$ is an exact cover of $(X,\mathcal{C}).$ From Claim \ref{claim2}, $|X\cap S|=0$ and as $S$ is a dominating set of $G$, it follows that each vertex of $X$ must be dominated by at least one vertex of $\hat{C}$. Hence, $|\mathcal{C'}|=|S|-2\le q$ that implies $|\mathcal{C}'|=q$.
\end{proof}

Therefore, $m$-\textsc{Eternal Dominating Set} is \textsf{NP}-hard for $K_{1,5}$-free splits graphs. 
\end{proof}
\vspace{-5pt}
Since every $K_{1,t}$-free split graphs is also $K_{1,5}$-free splits graphs for $t\ge 5$, we have the following corollary from Theorem \ref{K_1,5_NP_hard}.
\begin{corollary}
   $m$-\textsc{Eternal Dominating Set} is \textsf{NP}-complete for $K_{1,t}$-free splits graphs for $t\ge 5$.  
\end{corollary}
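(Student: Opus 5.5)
The corollary follows by combining the \textsf{NP}-hardness in Theorem~\ref{K_1,5_NP_hard} with class containment and membership in \textsf{NP}. One point needs care: the hardness transfers from the smaller class to the larger one, not the other way. For $t\ge 5$, every $K_{1,5}$-free graph is $K_{1,t}$-free. Indeed, if a graph contained an induced $K_{1,t}$, then taking the centre together with any five of its leaves would give an induced $K_{1,5}$. So $K_{1,5}$-free split graphs form a subclass of $K_{1,t}$-free split graphs. (The sentence preceding the corollary states this inclusion in the reverse direction; the direction given here is the one we use.)

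\textbf{Hardness.} Fix $t\ge 5$. Given an instance $(X,\mathcal{C})$ of \textsc{Exact 3-Cover}, we use the same polynomial-time construction of $G$ as in the proof of Theorem~\ref{K_1,5_NP_hard}. That proof notes that each clique vertex has at most three neighbours in $I$. Hence every induced star centred in $C$ has at most three leaves in $I$ and at most one leaf in $C$, so at most four leaves in total. A star centred at a vertex of $I$ has all its leaves in the clique $C$, so it has at most one leaf. Therefore $G$ is $K_{1,5}$-free, and by the containment above it is also $K_{1,t}$-free. Claim~\ref{claim-K_1,5-NPhard} then says that $(X,\mathcal{C})$ has an exact cover if and only if $\gamma_m^\infty(G)\le q+2$. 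This is a polynomial reduction from an \textsf{NP}-complete problem to $m$-\textsc{Eternal Dominating Set} restricted to $K_{1,t}$-free split graphs.

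\textbf{Membership.} Membership in \textsf{NP} holds for all split graphs by the succinct certificate of \cite{macgillivrayeternal}, already invoked in Theorem~\ref{K_1,5_NP_hard}. In particular it holds for every $K_{1,t}$-free split graph.

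Combining hardness and membership gives \textsf{NP}-completeness for each $t\ge 5$. There is no genuine obstacle. The only step that needs checking is that the hardness statement moves upward along the class inclusion, that is, that the instances produced by the reduction really do lie in the $K_{1,t}$-free class.
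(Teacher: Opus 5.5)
Your proof is correct and follows the paper's route: the corollary is read off from Theorem~\ref{K_1,5_NP_hard}, with \textsf{NP} membership for split graphs taken from \cite{macgillivrayeternal}, via class containment. You are right that the needed inclusion is that $K_{1,5}$-free split graphs are $K_{1,t}$-free for $t\ge 5$, which is the reverse of how the paper's sentence before the corollary phrases it, and your check that the reduction's output graphs contain no induced $K_{1,5}$ supplies a detail the paper leaves implicit.
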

\vspace{-15pt}
\section{Undirected Path Graph}
\label{sec:UPG}
\vspace{-10pt}

This section shows that the \textsc{$m$-Eternal dominating Set} is \textsf{NP}-hard for undirected path graphs, a subclass of chordal graphs. For this, we provide a polynomial-time reduction from the well-known $3D$-matching problem.

The $3D$-matching problem is defined as follows: let $W, X$, and $Y$ be three disjoint sets, each of cardinality $q$, and let $M$ be a subset of $W\times X\times Y$ of cardinality $p$. The \emph{$3D$-matching problem} asks if there exists a subset $M'$ of $M$ of cardinality exactly $q$ such that each element of $W, X$ and $Y$ occurs exactly once in a triple of $M'$.
\vspace{-5pt}
\begin{theorem}\cite{garey2002computers}
The $3D$-matching problem is \textsf{NP}-complete. 
\end{theorem}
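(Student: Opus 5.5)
Membership of the $3D$-matching problem in \textsf{NP} is immediate: a certificate is a subset $M'\subseteq M$ of size $q$, and we check in polynomial time that its triples are pairwise disjoint (equivalently, that every element of $W\cup X\cup Y$ is covered exactly once). For \textsf{NP}-hardness I would follow the classical gadget reduction from \textsc{3-SAT}, as in \cite{garey2002computers}. Take a formula with variables $x_1,\ldots,x_n$ and clauses $c_1,\ldots,c_m$. I build three kinds of components and take $q=2mn$.

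\emph{Truth-setting components.} For each variable $x_i$, introduce internal elements $a_i[j]\in X$ and $b_i[j]\in Y$, and tip elements $u_i[j],\bar u_i[j]\in W$, for $j\in[m]$. Add the ring of triples
\[
T_i^t[j]=(\bar u_i[j],a_i[j],b_i[j]),\qquad T_i^f[j]=(u_i[j],a_i[j+1],b_i[j]),
\]
with indices taken modulo $m$. The internal elements $a_i[\cdot],b_i[\cdot]$ appear in no other triples. Hence any perfect matching must cover them using either all of the $T_i^t$ or all of the $T_i^f$. Choosing all $T_i^t$ (meaning $x_i$ true) leaves every $u_i[j]$ uncovered; choosing all $T_i^f$ leaves every $\bar u_i[j]$ uncovered. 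Proving this alternation property is the key lemma: the $a$'s and $b$'s form a cycle, so once a single triple is chosen, the rest of the ring is forced.

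\emph{Satisfaction testing and garbage collection.} For each clause $c_j$, add $s_1[j]\in X$ and $s_2[j]\in Y$. Add the triple $(u_i[j],s_1[j],s_2[j])$ if $x_i\in c_j$, and $(\bar u_i[j],s_1[j],s_2[j])$ if $\bar x_i\in c_j$. In this way $s_1[j],s_2[j]$ can be covered only by a tip left free by a literal that is true. Finally, add garbage pairs $g_1[k]\in X$ and $g_2[k]\in Y$ for $k\in[m(n-1)]$. For every $k$ and every tip $t\in\{u_i[j],\bar u_i[j]\}$, add the triple $(t,g_1[k],g_2[k])$. The counts balance:
\begin{itemize}
\item $|W|=2mn$;
\item $|X|=|Y|=mn+m+m(n-1)=2mn$;
\item exactly $mn$ tips are left free by the truth-setting components, $m$ of them are used for clauses, and the remaining $m(n-1)$ are absorbed by the garbage pairs.
\end{itemize}

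\emph{Correctness.} If the formula is satisfiable, choose the ring triples according to the assignment. For each clause, pick one true literal and use its free tip to cover $s_1[j],s_2[j]$. Match the leftover free tips to the garbage pairs arbitrarily, which is possible because every tip forms a triple with every garbage pair. Conversely, given a perfect matching, the alternation lemma defines a truth assignment. Each clause pair $s_1[j],s_2[j]$ is covered by a triple using a tip not consumed by its ring, so that tip's literal is true. The construction is clearly polynomial.

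The main obstacle I anticipate is stating the alternation lemma cleanly, with the cyclic indexing and the case $m=1$ handled correctly. Ordinary clauses are the easy part. I would handle $m=1$, or a variable appearing in few clauses, by padding the instance, for example by duplicating a clause.
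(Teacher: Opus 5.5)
Your proposal is correct, and it is the classical Garey--Johnson reduction from \textsc{3-SAT}: truth-setting rings, clause pairs $s_1[j],s_2[j]$, and $m(n-1)$ garbage pairs, with the same triples and counts. That is the proof the paper relies on by citing \cite{garey2002computers}; the paper gives no proof of its own.

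The padding you mention is unnecessary. For $m=1$, the two ring triples $(\bar u_i[1],a_i[1],b_i[1])$ and $(u_i[1],a_i[1],b_i[1])$ both cover the same internal pair and differ only in their tip, so the alternation argument goes through unchanged. Also, every variable receives $m$ pairs of tips regardless of how often it occurs in the clauses, so variables that appear in few clauses need no special handling.
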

\vspace{-10pt}
\begin{theorem}\label{NP_UPG}
$m$-\textsc{Eternal Dominating Set} is \textsf{NP}-hard for undirected path graphs. 
\end{theorem}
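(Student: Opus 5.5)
We reduce from $3D$-matching, adapting Booth and Johnson's construction \cite{booth1982dominating} for \textsc{Dominating Set} on undirected path graphs, and then add gadgets that force the extra guard of the $m$-eternal setting into a fixed, predictable place. Given $W,X,Y$ with $|W|=|X|=|Y|=q$ and $M\subseteq W\times X\times Y$ with $|M|=p$, we first build the Booth--Johnson clique tree. It has a central clique node $R$. For each triple $m_i=(w,x,y)$ there is a branch hanging from $R$, with vertices $a_i,b_i,c_i$ and a few private pendant-type vertices. Finally, for each element $z\in W\cup X\cup Y$ there is an element vertex whose path runs through the branches of all triples containing $z$. We will arrange the three coordinates on separate parts of each branch, so every vertex corresponds to a path in the tree and the graph is an undirected path graph.

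In that construction the domination number equals $2p+q$ exactly when a perfect $3D$-matching exists. A chosen triple contributes one guard that dominates its three element vertices, together with one private guard. A non-chosen triple contributes two guards that cover its branch without touching element vertices.

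To pass to $m$-eternal domination we will mimic the $GP_5$ idea from Lemma \ref{GP_5} and Theorem \ref{K_1,5_NP_hard}. We attach to $R$ a star gadget: a new vertex $u$ adjacent to all of $R$, plus two pendant leaves $v,w$ on $u$. Implemented in the clique tree as new leaf nodes $\{u,v\}$ and $\{u,w\}$ beside $R\cup\{u\}$, this keeps $G$ an undirected path graph. Inside each triple branch we replace private pendant vertices by perfect-matchable pieces ($P_2$ pendants and induced stars, as in the proof of Lemma \ref{GP_3gamma}). Each branch then has forced guards that can be held invariant, and that handle local attacks by the star/matching strategy. The claim to prove is that $G$ has an $m$-eternal dominating set of size at most $k=2p+q+2$ if and only if a perfect $3D$-matching exists.

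For the forward direction, take the dominating set coming from the matching, add $u$ and $w$, and keep the Booth--Johnson part invariant. Then $u$ together with one roaming guard (initially on $v$ or $w$) defends every attack. This is exactly the two-configuration strategy tables $\mathcal{S}_1,\mathcal{S}_2$ of Theorem \ref{K_1,5_NP_hard}, with the roaming guard shuttling through $u$ into $R$ and back, and with local gadget attacks handled by moving guards along the matching.

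For the converse, run counting claims analogous to those in Theorem \ref{K_1,5_NP_hard}. First, each branch gadget needs a prescribed number of guards in every configuration. Second, $\{u,v,w\}$ needs two guards: with one guard there, an attack on $v$ forces the guard on $u$ to leave, so some vertex of $R$ must refill $u$, and the count then forbids this. Third, with these forced guards accounted for, at most $q$ guards remain to dominate the $3q$ element vertices. Since every vertex of $R$ and every triple vertex dominates at most three element vertices, and guards on element vertices waste capacity, the $q$ remaining guards must sit on triple vertices whose triples partition $W\cup X\cup Y$.

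The main obstacle is this converse counting. We must show that no configuration can "borrow" guards between branches, the star gadget and $R$ to defend some attack while saving a guard. The plan is to prove a local lower bound for each gadget that holds in every configuration, not just the initial one, using the same undefendable-attack arguments as in Lemmas \ref{GP_3gamma} and \ref{GP_5}. Summing these bounds then leaves no slack.
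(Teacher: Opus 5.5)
Your outline has the paper's skeleton: a reduction from $3D$-matching via Booth--Johnson gadgets, the star $u,v,w$ attached to the central clique through leaf cliques $\{u,v\},\{u,w\}$, budget $2p+q+2$, and one roaming guard. However, both load-bearing steps fail as written.

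\textbf{The gadget and counting step.} Your accounting is the split-graph counting of Theorem~\ref{K_1,5_NP_hard}, which a path graph does not support.
\begin{itemize}
\item In the paper's construction the only element edges are $w_ja_i$, $x_kb_i$, $y_lc_i$, so every vertex dominates at most one element vertex.
\item Each element vertex lies in a single leaf clique such as $\{w_j\}\cup\{a_i : w_j\in m_i\}$, next to the central clique. A path ``through the branches of all triples containing $z$'' is impossible for an element in three or more triples, since a path in a tree meets at most two subtrees at $R$.
\item A chosen triple costs three guards $a_i,b_i,c_i$ and a non-chosen one costs two guards $d_i,e_i$. This gives $3q+2(p-q)=2p+q$; your one-plus-one count gives $2p$.
\end{itemize}
So your final step, ``$q$ remaining guards on triple vertices each dominating three element vertices,'' has no such vertices to use. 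What the converse actually needs is a gadget property. Two guards must dominate a gadget's private vertices only from positions adjacent to no element vertex, and a gadget touches only the three elements of its own triple. With $q$ guards beyond the baseline $2p$, this forces exactly $q$ three-guard gadgets with pairwise disjoint triples.

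Your replacement of private vertices by ``perfect-matchable pieces held invariant'' goes the wrong way. It is unspecified, and it changes the budget you keep at $2p+q+2$. Moreover, Lemmas~\ref{GP_3gamma} and~\ref{GP_5} are precisely the paper's examples of attachments that force $\gamma_m^\infty$ to half the order independently of the rest of the graph, i.e.\ that make the problem easy. The paper instead keeps the gadget and lets a non-chosen gadget hold only $d_i,e_i$. It answers, say, an attack on $f_i$ by $d_i\to f_i$, $u\to c_i$, $\beta\to u$. This creates a third configuration family $\mathcal{D}_3$ (gadget guards such as $\{f_i,c_i,e_i\}$), so the two-family strategy $\mathcal{S}_1,\mathcal{S}_2$ does not transfer.

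\textbf{The $\{u,v,w\}$ step.} Your claim that $\{u,v,w\}$ must hold two guards, because refilling $u$ from $R$ is ``forbidden by the count,'' is false. Your own forward strategy reaches configurations where only $u$ is guarded in $\{u,v,w\}$, and it answers an attack on $v$ exactly by refilling $u$ from $R$. Lower bounds valid in every configuration sum only to $2p+q+1$, so a generic configuration has one guard of slack. With $2p+q+1$ guards on gadgets and element vertices, no perfect matching is forced, so ``summing local bounds leaves no slack'' cannot work.

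The fix is to choose a configuration rather than bound all of them. Every configuration in a defence is itself an $m$-eternal dominating set, so take the one right after an attack on $v$. There $v$ is guarded, and $w$ needs a guard on $u$ or $w$, so $\{u,v,w\}$ holds two guards. Since $u,v,w$ dominate no private gadget vertex and no element vertex, at most $2p+q$ guards remain for those. The gadget property above then yields the $3D$-matching.
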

\vspace{-10pt}
\begin{proof}
Let $W, X$, and $Y$ be three disjoint sets, each of cardinality $q$, and let $M$ be a subset of $W\times X\times Y$ having cardinality $p$, as an instance of the $3D$-matching problem. For notation, we use $W=\{w_j\mid j\in [q]\}$, $X=\{x_k\mid k\in [q]\}$, $Y=\{y_l\mid l\in [q]\}$ and $M=\{m_i=(w_j,x_k,y_l)\mid w_j\in W, x_k\in X, y_l\in Y,i\in [p]\}$. Without loss of generality, we can assume each element of $W$, $X$, and $Y$ occurs in at least two triples of $M$. Using ideas similar to the construction in \cite{booth1982dominating}, from the $3D$-matching problem, we construct a graph $G$ as an instance of $m$-\textsc{Eternal Dominating Set} in the following way. 
 For each $m_i\in M$, create a gadget $M_i$, $i\in[p]$ (graph induced by the vertex set $\{a_i,b_i,c_i,d_i,e_i,f_i,g_i,h_i,i_i\}$ as shown in Figure \ref{fig:UPGgraph}). For each $w_j\in W$, $x_k\in X$ and $y_l\in Y$, create a vertex $w_j$, $x_k$, and $y_l$, respectively. Add edges $w_ja_i$, $x_kb_i$, and $y_lc_i$ if and only if $w_j\in m_i$, $x_k\in m_i$, and $y_l\in m_i$ respectively, for all $j,k,l\in [q], i\in[p]$. Further, make the graph induced by $\{a_i,b_i,c_i\mid i\in [p]\}$ a clique. Finally, introduce three new vertices $u$, $v$, and $w$ such that $u$ is adjacent to all $a_i,b_i,c_i, i\in [p]$ and $v,w$ are adjacent to their unique neighbor $u$.

To clarify the above construction, an illustration of the graph $G$ construction from a $3D$-matching problem instance is provided in Figure \ref{fig:UPGgraph}.

     \begin{figure}[ht]
        \centering
        \includegraphics[width=10 cm, height=5 cm]{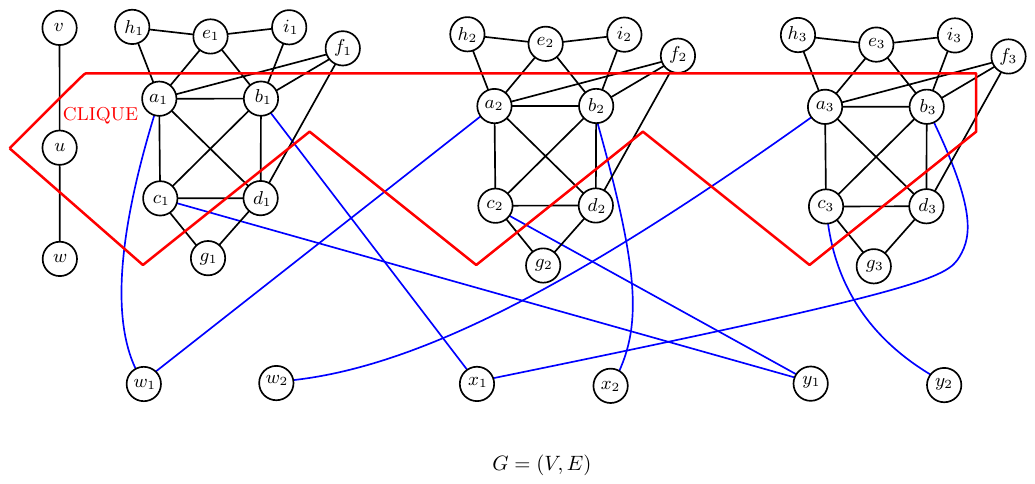}
        \caption{The graph above corresponds to the instance $W=\{w_1,w_2\}, X=\{x_1,x_2\},Y=\{y_1,y_2\}$ and $M=\{m_1=(w_1,x_1,y_1),m_2=(w_1,x_2,y_1),m_3=(w_2,x_1,y_2)\}$ of the $3D$-matching problem in accordance with the construction of the Theorem \ref{NP_UPG}.}
        \label{fig:UPGgraph}
    \end{figure}

    First, \emph{we show that the constructed graph $G$ is an undirected path graph.} For this, we need to show that there exists a clique tree $T$ (recall that vertices of the clique tree correspond to maximal cliques in $G$), such that for each vertex $v_i\in V(G)$, the set of all maximal cliques of $G$ containing $v_i$, i.e., $P_{v_i}=\{C_j\mid v_i\in V(C_j)\}$ induces a path in $T$. The clique tree $T$ associated with the graph $G$ is illustrated in Figure \ref{clique tree}. It is clear from Figure \ref{clique tree} that $P_{v_i}$, $P_{s_t},$ where $v_i\in\{a_i,b_i,c_i,d_i,e_i,f_i,g_i,h_i,i_i\mid i\in [p]\}$ and $s_t\in \{w_j,x_k,y_l\mid j,k,l\in [q]\}$, and $P_u, P_v, P_w$ induces a path in the clique tree $T$. 

     \begin{figure}[h!]
        \centering
        \includegraphics[width=12 cm, height=5.5cm]{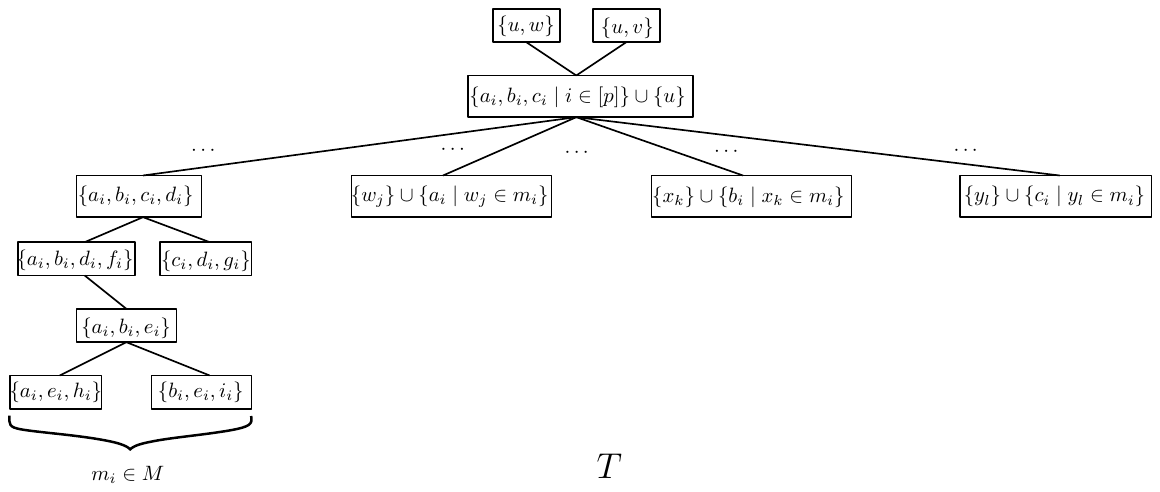}
        \caption{The clique tree $T$ corresponding to the graph $G$ constructed from $3D$-matching problem in Theorem \ref{NP_UPG}.}
        \label{clique tree}
    \end{figure}
    
Next, \emph{we claim that the $3D$-matching problem $M$ has a solution $M'$ of size $q$ if and only if the graph $G$ has an $m$-eternal dominating set $D$ of cardinality at most $2p+q+2$.}
    \begin{claim}
      If $M$ has a solution $M'$ of size $q$, then the graph $G$ has an $m$-eternal dominating set $D$ of cardinality at most $2p+q+2$.    
    \end{claim}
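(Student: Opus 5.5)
The plan is to build $D$ explicitly and then give a finite family of configuration types together with a transition rule for every attack, in the same style as the proof of Claim~\ref{claim-K_1,5-NPhard} for $K_{1,5}$-free split graphs. The gadget $M_i$ is only available through Figure~\ref{fig:UPGgraph}. So I assume, following the construction of \cite{booth1982dominating}, that $a_i,b_i,c_i$ are the attachment vertices and that the remaining six vertices $d_i,\dots,i_i$ form pendant structures hanging from them. I also assume that $M_i$ can be guarded by two guards in two different ways:
\begin{itemize}
\item an \emph{unselected} way, with two guards inside $\{d_i,\dots,i_i\}$, which protects the whole gadget except possibly the shared vertices $a_i,b_i,c_i$;
\item a \emph{selected} way, with guards that also dominate $a_i,b_i,c_i$ and hence their neighbours $w_j,x_k,y_l$.
\end{itemize}

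For $m_i\in M'$ I would use the selected placement and add one extra guard, giving three guards on $M_i$ so that all of $w_j,x_k,y_l$ are defended. For $m_i\notin M'$ I would use the unselected placement with two guards. Together with guards on $u$ and $w$, the count is $3q+2(p-q)+2=2p+q+2$.

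Next I would set up an invariant, as in Table~\ref{tab:my_label}. Let $S^*$ consist of one guard fixed at $u$ together with a fixed core of guards in each gadget. The single remaining ``floating'' guard sits either at $v$ or $w$ (type $\mathcal{S}_1$), or somewhere in the clique $\{a_i,b_i,c_i\}\cup\{u\}$, or on an element vertex (type $\mathcal{S}_2$). Because $u$ is adjacent to every $a_i,b_i,c_i$ and the clique is complete, the floating guard can always be brought back through $u$ or through a clique vertex in one round of moves.

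The attacks would be handled in four groups.
\begin{itemize}
\item Attacks on $v$ or $w$ are answered by $u$, with the floating guard refilling $u$.
\item Attacks on clique vertices are answered by the floating guard or by a chain $u\to c$ refilled from $v$ or $w$.
\item Attacks on an element vertex, say $x_k$, use the unique selected gadget $M_i$ with $x_k\in m_i$. Its attachment guard moves to $x_k$, the clique neighbour refills the attachment vertex, and $u$ or the floating guard refills in turn.
\item Attacks inside a gadget are handled by the gadget's own two-guard strategy, which is self-contained because the gadget dominates itself with its core guards.
\end{itemize}
After each move I would verify that the result is again of type $\mathcal{S}_1$ or $\mathcal{S}_2$. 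Exactness of $M'$ is what guarantees that every element vertex stays dominated when guards shift.

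The main obstacle is the gadget-internal analysis. Without the exact gadget edges, I must confirm three things: the two-guard core really is an eternal (all-guards) defense of $\{d_i,\dots,i_i\}$; it keeps $a_i,b_i,c_i$ dominated, or has that job covered by the clique guard; and moving an attachment guard out to an element vertex never leaves a pendant vertex undominated. I would first try to find a perfect-matching or star structure inside $M_i$, as in the proof of Lemma~\ref{GP_5}, that makes the two-guard core a set of star centres. If that fails, I would allocate the extra guard per selected gadget differently.
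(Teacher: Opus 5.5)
The gap is in the step you postpone as ``the main obstacle''. Your guard count and broad scheme do match the paper, which takes $D=\{a_i,b_i,c_i\mid m_i\in M'\}\cup\{d_i,e_i\mid m_i\notin M'\}\cup\{u,v\}$ with the guard on $v$ as the spare. But gadget-internal attacks are not self-contained, and your invariant cannot be maintained. That invariant fixes a core in every gadget and allows the spare only on $v$, $w$, the clique, or an element vertex.

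\emph{Unselected gadgets.} Take an unselected gadget with guards on $d_i,e_i$, and let $f_i$ be attacked. The defence is $d_i\to f_i$, as in the paper. Afterwards $g_i$ needs a new dominator, but the guard on $e_i$ is needed for $h_i$ and $i_i$. The paper therefore answers with $d_i\to f_i$, $u\to c_i$, $\beta\to u$. So the spare is drawn into the gadget, and the gadget's guards become $\{f_i,e_i\}$ plus a pinned guard on $c_i$; the core does not stay fixed.

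\emph{Selected gadgets.} There is no ``extra'' guard here either. The vertices $w_j,x_k,y_l$ are adjacent only to attachment vertices, and by exactness only the selected gadget's attachment vertices are occupied. Hence all three guards are pinned to $a_i,b_i,c_i$; they must \emph{occupy} these vertices, not merely dominate them, to cover $w_j,x_k,y_l$. Any attack on $d_i,\dots,i_i$ therefore vacates an attachment vertex. That vertex must be refilled from $u$, and $u$ from the spare. The spare ends up on an internal vertex of $M_i$, a position your types $\mathcal{S}_1,\mathcal{S}_2$ do not contain.

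For the same reason, your fallback of reallocating a per-gadget extra guard cannot work: with budget $2p+q+2$ there is exactly one spare guard in the whole graph. The missing idea is to lend this single spare, via $u$ and the clique, to whichever gadget is attacked, and to recall it through $u$ when an attack occurs elsewhere. The paper encodes this with three families of configurations:
\begin{itemize}
\item $\mathcal{D}_1$: the spare is on $v$ or $w$.
\item $\mathcal{D}_2$: the spare is on an element vertex, on an internal vertex $\theta$ of a selected gadget, or on an attachment vertex $\epsilon$ of an unselected gadget.
\item $\mathcal{D}_3$: an unselected gadget temporarily holds three guards, in one of $\{f_i,c_i,e_i\}$, $\{g_i,b_i,e_i\}$, $\{h_i,b_i,d_i\}$, $\{i_i,a_i,d_i\}$.
\end{itemize}
These come with recall moves such as $u\to v$, $c_i\to u$, $f_i\to d_i$. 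Until you enlarge your configuration families in this way, the check that the result is again of type $\mathcal{S}_1$ or $\mathcal{S}_2$ fails at the first attack on a gadget-internal vertex.
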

    \begin{proof}
    Let $M'$ be the solution of $M$. Define $D=\{a_i,b_i,c_i\mid m_i\in M'\} \cup \{d_i,e_i\mid m_i\notin M'\}\cup \{u,v\}$. Clearly, $|D|=3q+2(p-q)+2=2p+q+2$ and $D$ is a dominating set of $G$. Now, we show that $D$ is an $m$-eternal dominating set of $G$. 
    
    We claim that \emph{$D\subseteq V$ is an $m$-eternal dominating set of $G$}.

 For this, we need to show that starting with the set $D$, we can defend against any attack eternally. Now, let $D'=\{a_i,b_i,c_i\mid m_i\in M'\} \cup \{u\}, A=\{w_j,x_k,y_l \mid j,k,l\in [q]\}$ and further, let us define following family $\mathcal{F_V}=\{\mathcal{D}_1,\mathcal{D}_2,\mathcal{D}_3\}$ consisting of subsets of $V$. Define $\beta\in \{v,w\}$, $\theta \in \{d_i,e_i,f_i,g_i,h_i,i_i\mid m_i\in M'\}$, $\epsilon \in \{a_i,b_i,c_i\mid m_i\notin M'\}$, and $s_t\in A$.\\ Define $\mathcal{D}_1 = \left\{D_1^\beta= \left\{D' \cup \{d_i,e_i\mid m_i\notin M'\}\cup \{\beta\}\right\}\right\}$, \\$\mathcal{D}_2 = \left\{
       D_1^\beta \setminus \{\beta\} \cup \{s_t\},
        D_1^\beta \setminus \{\beta\} \cup \{\theta\},
         D_1^\beta \setminus \{\beta\} \cup \{\epsilon\}\right\},$\\
         and $\mathcal{D}_3 = \left\{\begin{array}{l}
       D_1^\beta \setminus \{\beta,d_i\} \cup \{f_i,c_i\},
        D_1^\beta \setminus \{\beta,d_i\} \cup \{g_i,b_i\},\\
         D_1^\beta \setminus \{\beta,e_i\} \cup \{h_i,b_i\},  D_1^\beta \setminus \{\beta,e_i\} \cup \{i_i,a_i\}
        \end{array}\right\},\\ \text{where}~ \{d_i,e_i,f_i,c_i,a_i,b_i\mid m_i\notin M'\}.$

Observe that for any $D^*\in \mathcal{D}_i , D^{**}\in \mathcal{D}_j, i,j\in [3]$, $|D^*|=|D^{**}|$. Now, we show that $\mathcal{D}_i, i\in [3]$ consists of subsets of $V$ that are dominating sets of $G$.

Firstly, as $D'\subseteq D^*$, for all $D^*\in \mathcal{D}_i$, $i\in [3]$, and $D'$ dominates $\hat{V}=\{V(M_i)\mid m_i\in M'\}\cup A \cup \{u,v,w\} \cup \{a_i,b_i,c_i\mid m_i\notin M'\}$. In other words, $\hat{V}=V\setminus \{d_i,e_i,f_i,g_i,h_i,i_i\mid m_i\notin M'\}$. Now, to prove each $D^*\in \mathcal{D}_i, i\in [3]$ is a dominating set of $G$, we only need to show the remaining vertex set $V\setminus \hat{V}=\{d_i,e_i,f_i,g_i,h_i,i_i\mid m_i\notin M'\}$ is dominated by $D^*$. 

For any set $D^*\in \mathcal{D}_1$, as $\{d_i,e_i\mid m_i\notin M'\}\subseteq D^*$, it dominates $\{V(M_i)\mid m_i\notin M'\}$. Hence, $D^*$ is a dominating set of $G$. 
  
For any set $D^*\in \mathcal{D}_2$, as $\beta\in N(u), u\in \mathcal{D}_1^\beta$, and $\mathcal{D}_1^\beta$ is a dominating set of $G$, implies $\mathcal{D}_1^\beta \setminus \beta$ is also a dominating set of $G$. As $\mathcal{D}_1^\beta \setminus \beta \subseteq D^*$, for all $D^*\in \mathcal{D}_2$, hence, $D^*$ is a dominating set of $G$. 

For any set $D^*\in \mathcal{D}_3$, as one of $\{f_i,c_i,e_i\},\{g_i,b_i,e_i\},\{b_i,h_i,d_i\},\{i_i,a_i,d_i\}\subseteq D^*$ in each set $D^*\in \mathcal{D}_3$, and it dominates $\{V(M_i)\mid m_i\notin M_i\}$. Hence, $D^*$ is a dominating set of $G$.

We show that $D = D_1^{\beta}$ is an $m$-eternal dominating set of $G$. We have already shown that each set family in $\mathcal{F_V}$ consists of dominating sets of $G$. To show that $D$ is an $m$-eternal dominating set, for each set in $D_i \in \mathcal{D}_i$, $i\in [3]$, we show that any attack can be defended, and that the resultant dominating set also belongs to one of the set families in $\mathcal{F_V}$. In Table \ref{tab:my_label-o}, we illustrate the types of guard configurations before and after the attacks.    

\begin{table}[h!]
    \centering
    \renewcommand{\arraystretch}{1}
    \begin{tabular}{|c|>{\centering\arraybackslash}p{2.5cm}|>{\centering\arraybackslash}p{2.5cm}|>{\centering\arraybackslash}p{2.5cm}|}
    \hline
    \multirow{2}{4.5 cm}{Vertex on which the attack happened} & 
    \multicolumn{3}{c|}{Type of guard configuration at the time of attack} \\
    \cline{2-4}
    & $\mathcal{D}_1$ & $\mathcal{D}_2$ & $\mathcal{D}_3$ \\
    \hline
    $\{\beta\mid \beta\in \{v,w\}\}$ & $\mathcal{D}_1$ & $\mathcal{D}_1$ & $\mathcal{D}_1$\\
    \hline
    $\{a_i,b_i,c_i\mid m_i\notin M_i\}$ & $\mathcal{D}_2$  & $\mathcal{D}_2$ & $\mathcal{D}_2$\\
    \hline
    $\{d_i,e_i\mid m_i\notin M_i\}$ & $-$ & $-$ & $\mathcal{D}_1$\\
    \hline
    $\{f_i,g_i,h_i,i_i\mid m_i\notin M_i\}$ & $\mathcal{D}_3$ & $\mathcal{D}_3$ & $\mathcal{D}_3$\\
    \hline
    $\{d_i,e_i,f_i,g_i,h_i,i_i\mid m_i\in M_i\}$ & $\mathcal{D}_2$& $\mathcal{D}_2$ & $\mathcal{D}_2$ \\
    \hline
    \end{tabular}
    \caption{The cell in row 1 under column 2 indicates the type of configuration the guards are initially in. Each cell in row $i, 2\le i\le 5$ under column 2 represents the configuration the guards move to when the attack happened at the vertex in cell row $i$ under column 1. ``–'' indicates no change in the configuration, as the attacked vertex is already occupied in the current configuration.}
    \label{tab:my_label-o}
\end{table}

The detailed defending strategy for an attack can be found in Table \ref{tab:my_label2}.

\begin{table}[h!]
    \centering
    \renewcommand{\arraystretch}{1}
    \begin{tabular}{|c|>{\centering\arraybackslash}p{2cm}|>{\centering\arraybackslash}p{3cm}|>{\centering\arraybackslash}p{4cm}|}
    \hline
    \multirow{2}{3 cm}{Vertex on which the attack happened} & 
    \multicolumn{3}{c|}{Type of guard configuration at the time of attack} \\
    \cline{2-4}
    & $\mathcal{D}_1$ & $\mathcal{D}_2$ & $\mathcal{D}_3$ \\
    \hline
    $v(w)$ & $\mathcal{D}_1$ $\{u\rightarrow v(w), w(v)\rightarrow u\}$ & $\mathcal{D}_1$ $\{u\rightarrow v(w), x_{s_t}\rightarrow u, s_t\rightarrow x_{s_t}: u\rightarrow v(w), z_\theta\rightarrow u, \theta\rightarrow z_\theta: u\rightarrow v(w), \epsilon\rightarrow u\}$ & $\mathcal{D}_1$ $\{u\rightarrow v(w), c_i\rightarrow u, f_i\rightarrow d_i: u\rightarrow v(w), b_i\rightarrow u, g_i\rightarrow d_i: u\rightarrow v(w), b_i\rightarrow u, h_i\rightarrow e_i: u\rightarrow v(w), a_i\rightarrow u, i_i\rightarrow e_i\}$\\
    \hline
    $\epsilon^*$ & $\mathcal{D}_2$ $\{u\rightarrow \epsilon^*, \beta \rightarrow u\}$ & $\mathcal{D}_2$ $\{x_{s_t}\rightarrow \epsilon^*, s_t\rightarrow x_{s_t}: z_\theta\rightarrow \epsilon^*, \theta\rightarrow z_\theta: \epsilon\rightarrow \epsilon^* \}$ & $\mathcal{D}_2$ $\{c_i\rightarrow \epsilon^*, f_i\rightarrow d_i: b_i\rightarrow \epsilon^*, g_i\rightarrow d_i: b_i\rightarrow \epsilon^*, h_i\rightarrow e_i : a_i\rightarrow \epsilon^*, i_i\rightarrow e_i\}$\\
    \hline
    $\theta^*$ & $\mathcal{D}_2$ $\{z_{\theta^*}\rightarrow \theta^*, u\rightarrow z_{\theta^*}, \beta \rightarrow u\}$ & $\mathcal{D}_2$ $\{z_{\theta^*} \rightarrow \theta^*, x_{s_t}\rightarrow z_{\theta^*}, s_t\rightarrow x_{s_t}: z_{\theta^*} \rightarrow \theta^*, z_\theta\rightarrow z_{\theta^*},\theta\rightarrow z_\theta: z_{\theta^*}\rightarrow z_\theta, \epsilon\rightarrow z_{\theta^*} \}$ & $\mathcal{D}_2$ $\{z_{\theta^*}\rightarrow \theta^*, c_i\rightarrow z_{\theta^*}, f_i\rightarrow d_i: z_{\theta^*}\rightarrow \theta^*, b_i\rightarrow z_{\theta^*}, g_i\rightarrow d_i: z_{\theta^*}\rightarrow \theta^*, b_i\rightarrow z_{\theta^*}, h_i\rightarrow e_i: z_{\theta^*}\rightarrow \theta^*, a_i\rightarrow z_{\theta^*}, i_i\rightarrow e_i \}$\\
    \hline
    $d_i$ & $-$ & $-$ & $\mathcal{D}_1$ $\{f_i\rightarrow d_i, u\rightarrow \beta, c_i\rightarrow u: g_i\rightarrow d_i, u\rightarrow \beta, b_i\rightarrow u:-:- \}$\\
    \hline
    $e_i$ & $-$ & $-$ & $\mathcal{D}_1$ $\{-:-:h_i\rightarrow e_i, u\rightarrow \beta, b_i\rightarrow u: i_i\rightarrow e_i, u\rightarrow \beta, a_i\rightarrow u\}$\\
    \hline
    $f_i$ & $\mathcal{D}_3$ $\{d_i\rightarrow f_i, u\rightarrow c_i, \beta \rightarrow u\}$ & $\mathcal{D}_3$ $\{d_i\rightarrow f_i, x_{s_t}\rightarrow c_i, s_t\rightarrow x_{s_t}: d_i\rightarrow f_i, z_\theta\rightarrow c_i, \theta\rightarrow z_\theta: d_i\rightarrow f_i, \epsilon\rightarrow c_i\}$ & $\mathcal{D}_3$ $\{-:b_i\rightarrow f_i, g_i\rightarrow c_i:b_i\rightarrow f_i, d_i\rightarrow c_i, h_i\rightarrow e_i:d_i\rightarrow f_i, a_i\rightarrow c_i, i_i\rightarrow e_i\}$\\
    \hline
    $g_i$ & $\mathcal{D}_3$ $\{d_i\rightarrow g_i, u\rightarrow b_i, \beta \rightarrow u\}$ & $\mathcal{D}_3$ $\{d_i\rightarrow g_i, x_{s_t}\rightarrow b_i, s_t\rightarrow x_{s_t}: d_i\rightarrow g_i, z_\theta\rightarrow b_i, \theta\rightarrow z_\theta: d_i\rightarrow g_i, \epsilon\rightarrow b_i\}$ & $\mathcal{D}_3$ $\{c_i\rightarrow g_i, f_i\rightarrow b_i:-:d_i\rightarrow g_i, h_i\rightarrow e_i:d_i\rightarrow g_i, i_i\rightarrow b_i, a_i\rightarrow e_i\}$\\
    \hline
    $h_i$ & $\mathcal{D}_3$ $\{e_i\rightarrow h_i, u\rightarrow b_i, \beta \rightarrow u\}$ & $\mathcal{D}_3$ $\{e_i\rightarrow h_i, x_{s_t}\rightarrow b_i, s_t\rightarrow x_{s_t}: e_i\rightarrow h_i, z_\theta\rightarrow b_i, \theta\rightarrow z_\theta: e_i\rightarrow h_i, \epsilon\rightarrow b_i\}$ & $\mathcal{D}_3$ $\{e_i\rightarrow h_i, f_i\rightarrow b_i, c_i\rightarrow g_i:e_i\rightarrow h_i, g_i\rightarrow d_i:-:a_i\rightarrow h_i, i_i\rightarrow b_i\}$\\
    \hline
    $i_i$ & $\mathcal{D}_3$ $\{e_i\rightarrow i_i, u\rightarrow a_i, \beta \rightarrow u\}$ & $\mathcal{D}_3$ $\{e_i\rightarrow i_i, x_{s_t}\rightarrow a_i, s_t\rightarrow x_{s_t}: e_i\rightarrow i_i, z_\theta\rightarrow a_i, \theta\rightarrow z_\theta: e_i\rightarrow i_i, \epsilon\rightarrow a_i\}$ & $\mathcal{D}_3$ $\{e_i\rightarrow i_i, f_i\rightarrow a_i, c_i\rightarrow d_i:e_i\rightarrow i_i, b_i\rightarrow a_i, g_i\rightarrow d_i:b_i\rightarrow i_i, h_i\rightarrow a_i:-\}$\\
    \hline
    \end{tabular}
    \caption{The detailed defending strategy for each cell corresponding to Table \ref{tab:my_label-o}. Note that $x_{s_t}\in [N(s_t)\cap \{a_i,b_i,c_i\mid m_i\in M'\}]$, $z_\theta\in [N(\theta)\cap \{a_i,b_i,c_i\mid m_i\in M'\}]$, $\epsilon^*\in \{a_i,b_i,c_i\mid m_i\notin M'\}$, and $\theta^*\in \{d_i,e_i,f_i,g_i,h_i,i_i\mid m_i\in M'\}$.  Here ``–'' indicates no change in the configuration, as the attacked vertex is already occupied in the current configuration.}
    \label{tab:my_label2}
\end{table}

\end{proof}

    \begin{claim}\label{converse-UPG}
     If the graph $G$ has an $m$-eternal dominating set $D$ of cardinality at most $2p+q+2$, then the $3D$-matching instance $M$ has a solution $M'$ of size $q$.   
    \end{claim}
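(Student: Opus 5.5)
The plan is a counting argument applied to one well-chosen configuration, in the spirit of \cite{booth1982dominating}.

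\textbf{Step 1 (choosing the configuration).} Let $D$ be an $m$-eternal dominating set of $G$ with $|D|\le 2p+q+2$. Attack $v$, and let $D^*$ be the resulting configuration. Then $v\in D^*$. Since $D^*$ must still dominate $w$, whose only neighbour is $u$, we get $D^*\cap\{u,w\}\neq\emptyset$. Hence $|D^*\cap\{u,v,w\}|\ge 2$, and the remaining set $D^{**}=D^*\setminus\{u,v,w\}$ has $|D^{**}|\le 2p+q$.

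The vertices $u,v,w$ are not adjacent to any vertex of $A=W\cup X\cup Y$, nor to any vertex $d_i,\dots,i_i$. So $D^{**}$ alone must dominate $A$ and every set $\{d_i,e_i,f_i,g_i,h_i,i_i\}$. From here on we work only with $D^*$ as a dominating set; the eternal property has already been used in Step 1.

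\textbf{Step 2 (local analysis of a gadget).} I will use the adjacencies visible in the defence tables. These are: $f_i\sim c_i,d_i$, $g_i\sim b_i,d_i$, $h_i\sim b_i,e_i$, $i_i\sim a_i,e_i$. Also $d_i$ and $e_i$ are adjacent to suitable vertices among $a_i,b_i,c_i$, to be checked against Figure~\ref{fig:UPGgraph}.

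Let $n_i=|D^{**}\cap V(M_i)|$. The four vertices $f_i,g_i,h_i,i_i$ have closed neighbourhoods lying inside $V(M_i)$, and no single vertex dominates three of them. Hence $n_i\ge 2$.

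Next I check by a short case analysis that a set of exactly two gadget vertices dominating $\{f_i,g_i,h_i,i_i\}$ contains none of $a_i,b_i,c_i$. If $c_i$ is chosen, it covers only $f_i$, and no single vertex covers $g_i,h_i,i_i$. If $b_i$ is chosen, it covers $g_i,h_i$, and no single vertex covers both $f_i$ and $i_i$. If $a_i$ is chosen, it covers only $i_i$, and no single vertex covers $f_i,g_i,h_i$.

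So a gadget with $n_i=2$ dominates no vertex of $A$. A gadget with $n_i\ge 3$ dominates at most $3$ vertices of $A$, namely those of its own triple, and $3\le 3(n_i-2)$.

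\textbf{Step 3 (global count).} Let $t=|D^{**}\cap A|$; each such guard dominates at most one vertex of $A$. Counting the vertices of $A$ gives
\begin{equation*}
3q\le t+\sum_i 3(n_i-2)\cdot[n_i\ge 3]\le 3\Big(t+\sum_i (n_i-2)\Big)\le 3\big(|D^{**}|-2p\big)\le 3q .
\end{equation*}
Equality must therefore hold throughout. This forces $t=0$ and $n_i\in\{2,3\}$ for every $i$. It also forces each gadget with $n_i=3$ to dominate three distinct elements of $A$, disjoint from those dominated by other gadgets. Such a gadget must then contain all of $a_i,b_i,c_i$.

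The set $M'=\{m_i : n_i=3\}$ has size exactly $q$. Its triples cover all $3q$ elements of $A$ and are pairwise disjoint, so $M'$ is a $3D$-matching.

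\textbf{Main obstacle.} I expect the main obstacle to be the gadget case analysis in Step 2. It must use the exact edges of $M_i$ from Figure~\ref{fig:UPGgraph}, in particular the neighbourhoods of $d_i$ and $e_i$, to confirm two facts. The first is that two guards cannot both cover the inner vertices and reach $A$. The second is that a gadget never covers more than $3(n_i-2)$ vertices of $A$; for $n_i\ge 4$ this holds trivially, since the gadget covers at most $3$ vertices of $A$ in total.
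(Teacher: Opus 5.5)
Your proof is correct in outline, and it takes a genuinely different and more robust route than the paper.

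\textbf{Comparison with the paper.} The paper argues directly about the initial set $D$. It proves $|D\cap A|\le 1$ together with a companion statement about $|D\cap\{u,v,w\}|$, then $|D\cap V(M_i)|\le 3$ for every gadget, and only then counts the three-guard gadgets. You instead use the eternal property exactly once: attacking $v$ forces the new configuration $D^*$ to contain $v$ and one of $u,w$. This leaves at most $2p+q$ guards for $A$ and the gadgets. From there you run a static Booth--Johnson count whose chain of inequalities must be tight, giving $t=0$, $n_i\in\{2,3\}$, exactly $q$ three-guard gadgets, and pairwise disjoint triples.

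This normalisation is not cosmetic. An arbitrary $m$-eternal dominating set of size $2p+q+2$ can carry one unit of slack, and the paper's intermediate sub-claims do not survive it. The paper's own strategy moves from $D_1^\beta$ to $D_1^\beta\setminus\{\beta\}\cup\{\theta\}$ and to $D_1^\beta\setminus\{\beta\}\cup\{\epsilon\}$. Any configuration a winning strategy passes through is itself an $m$-eternal dominating set, so, granting the forward direction, both are $m$-eternal dominating sets of size $2p+q+2$. Both have $D\cap A=\emptyset$ and only $u$ in $\{u,v,w\}$, which contradicts ``$|D\cap A|=0\Rightarrow |D\cap\{u,v,w\}|=2$''. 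The first also puts four guards on a chosen gadget, contradicting $|D\cap V(M_i)|\le 3$. The second has $q+1$ gadgets carrying three guards, so the paper's $M'$ would not have size $q$. Your attack on $v$ removes exactly this slack, so you never need those sub-claims.

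\textbf{What Step 2 still owes.} Step 2 rests on the same gadget fact the paper uses without proof. Write $I_i=\{d_i,e_i,f_i,g_i,h_i,i_i\}$. You need two things: no single vertex dominates $I_i$, and no two vertices of $V(M_i)$, one of which is $a_i$, $b_i$ or $c_i$, dominate $I_i$.

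Your verification uses neighbourhoods of $f_i,\dots,i_i$ that do not match the defence tables. The $\mathcal{D}_3$ column moves guards along $b_if_i$, $c_ig_i$, $a_if_i$, $b_ii_i$ and $a_ih_i$. If you add $b_i\sim f_i$ and $b_i\sim i_i$ to your list, then $b_i$ alone dominates all of $f_i,g_i,h_i,i_i$. Your step ``no single vertex dominates three of them'' then fails.

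So check the two facts against the actual gadget of Figure~\ref{fig:UPGgraph}. Take all of $I_i$, including $d_i$ and $e_i$, as the set to be dominated, not just the four outer vertices. Once those facts hold, your Steps 1 and 3 go through unchanged.
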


    \begin{proof}
    Let $D\subseteq V$ be an $m$-eternal dominating set of $G$ of cardinality at most $2p+q+2$. 
        
        First, we show that $D$ can intersect at most one vertex with the set of vertices $w_j$, $x_k$, $y_l$.
        \begin{claim}
            Let $A=\{w_j,x_k,y_l\mid j,k,l\in [q]\}$. Then $|D\cap A|\le 1$. Moreover, if $|D\cap A|=1$ then $|D\cap \{u,v,w\}|=1,$ otherwise $|D\cap \{u,v,w\}|=2$.
        \end{claim}
        \begin{proof}
             In contrast, let us assume $|D\cap A|=r\ge 2$. As $|D\cap M_i|\ge 2$ for each $i\in [p]$ and $|D\cap \{u,v,w\}|\ge 1$, to defend the remaining $3q-r$ vertices of $S$ we have at most $q-r+1$ gadgets $M_i$ such that $|D\cap V(M_i))|\ge 3$, which can defend at most $3(q-r+1)$ vertices of $A$. Hence, $3q-r\le 3(q-r+1)$, a contradiction to $r\ge 2$. Thus, $|D\cap A|\le 1$. 
             
             Next, we show if $|D\cap A|=1$ then $|D\cap \{u,v,w\}|=1.$ Suppose not, and $|D\cap \{u,v,w\}|=z\ge 2.$ To defend the remaining $3q-1$ vertices of $A$ we have at most $q-z+1$ gadgets $M_i$ such that $|D\cap V(M_i))|\ge 3$ that can defend at most $3(q-z+1)$ vertices of $A$. Hence, $3q-1\le 3(q-z+1)$, a contradiction to $z\ge 2$. 

           Now, it remains to show that if $|D\cap A|=0$ then $|D\cap \{u,v,w\}|=2.$ Suppose not, then $|D\cap \{u,v,w\}|\neq 2.$ Observe that $|D\cap \{u,v,w\}|\ge 1$ and if $|D\cap \{u,v,w\}|= 1$ then $u\in D$. Now, an attack at $v(w)$ can only be defended by moving the guard at $u$ to $v(w)$, and to dominate $w(v)$, some guard from the neighbors of $u$ must move to $u$. Without loss of generality, let the guard at $a_j\in M_j$ move to $u$. Note that, meanwhile, $|D\cap A|\neq 1$ as $|D\cap \{u,v,w\}|\neq 1$. Also $|D\cap \{u,v,w\}|\neq 2$ implies $|D\cap \{u,v,w\}|= 3.$ So, to defend the $3q$ vertices of $A$ we have at most $q-1$ gadgets $M_i$ such that $|D\cap V(M_i))|\ge 3$ that can defend at most $3(q-1)$ vertices of $A$. Hence, $3q\le 3(q-1)$, which is absurd.  
        \end{proof}
        Thus, we have $|D\cap \{\displaystyle\cup_{i\in p}^{} V(M_i)\}|\le 2p+q$. Moreover, $|D\cap \displaystyle\cup_{i\in p}^{} V(M_i)|\ge 2p$.
        \begin{claim}
            If $|D\cap A|\le 1$ then $|D\cap V(M_i)|\le 3$ for each $i\in [p]$.
        \end{claim}
        \begin{proof}
            Let's say there exists a gadget $M_i$, for some $i\in [p]$, such that $|D\cap M_i|=x>3$. Consider the case $|D\cap A|=1$. We know $|D\cap M_i|\ge 4$, and $|D\cap \cup_{j\in [p], j\neq i}^{} V(M_j)|\ge 2(p-1)$. As one gadget can defend at most 3 vertices of $A$ and hence, assuming $M_i$ can defend $y$ vertices of $A$ (note that $y\le 3$), $ 3q-y-1 \le 3(q-x+2) \implies 3x-6\le y+1,$ which is absurd. Now, consider the case $|D\cap A|=0$. Like the above case, $3q-y \le 3(q-x+2)\implies 3x-6\le y,$ which is again absurd. Hence, $|D\cap V(M_i)|\le 3$ for each $i\in [p]$, if $|D\cap A|\le 1$. 
        \end{proof}
    Now, we have $2\le |D\cap V(M_i)|\le 3$. Define $M'=\{m_i\in M\mid |D\cap M_i|=3\}$. Assume there are $t$ of $p$ gadgets $M_i$ such that $|D\cap M_i|=3$. Then $2p+q\ge 3t+2(p-t)\implies q\ge t$. Also, to defend $3q$ vertices of $A$ with the fact that each gadget can defend at most three unique vertices of $A$ implies there must exist at least $q$ gadgets of $M_i$ such that $|D\cap V(M_i)|=3$. Hence, $t=q$. 
    
    Finally, we show that $M'$ is a solution of $M$. For that, we show that each element of $W$, $X$, and $Y$ is in exactly one $m_i\in M'$. On contrary, assume there exists some element, say $w_i\in W$ (wlog) such that $w_i$ is in $\alpha$ number of $m_i$'s, such that $m_i\in M'$ and $\alpha\ge 2$. Since $D$ is an $m$-eternal dominating set of $G$, each $M_i$ can defend at most 3 vertices of $A$, and $|A|=3q$, implies that $3q \le 3(q-\alpha)+2\alpha+1 \implies \alpha\le 1$, a contradiction. 
    \end{proof}

    Therefore, the $3D$-matching problem $M$ has a solution $M'$ of size $q$ if and only if the graph $G$ has an $m$-eternal dominating set $D$ of cardinality at most $2p+q+2$.
\end{proof}

\section{Conclusion}
\vspace{-5pt}
In this paper, we established a dichotomy for the \textsc{$m$-Eternal Dominating Set} on $K_{1,t}$-free split graphs: it is solvable in polynomial time for 
$t\le 4$, and \textsf{NP}-complete for $t\ge 5$. We further proved that the problem remains \textsf{NP}-hard on undirected path graphs. Further, we established a complexity difference between \textsc{Dominating Set} and \textsc{Eternal Dominating Set} and also between \textsc{Eternal Dominating Set} and \textsc{$m$-Eternal Dominating Set}. Interestingly, we identified a graph class where \textsc{Dominating Set} is \textsf{NP}-hard, while \textsc{$m$-Eternal Dominating Set} is polynomial-time solvable. These results reveal contrasting complexity behaviors between domination and its eternal variant. Further, it would be interesting to design efficient graph algorithms for \textsc{$m$-Eternal Dominating Set} in other well-structured graph classes.

%
%
%

\bibliographystyle{splncs04}

\end{document}